\begin{document}
\newcommand{\s}[1]{\vskip #1 in}
\newcommand{\n}{\noindent}
\newcommand{\p}{\pmb{\lambda}}
\newcommand{\gd}[1]{\textbf{\textcolor{red}{[gd:#1]}}}
\newcommand{\pk}[1]{\mathbf{\textcolor{blue}{[#1]}}}
\newcommand{\Hr}{H_\mathbb{R}}
\newtheorem{theorem}{Theorem}
\newtheorem*{theorem*}{Theorem}
\newtheorem*{lemma*}{Lemma}
\newtheorem{corollary}{Corollary}
\newtheorem{lemma}{Lemma}
\newtheorem{definition}{Definition}
\newtheorem{prop}{Proposition}
\newcommand{\nn}{\nonumber}
\newcommand{\myvec}[1]{\ensuremath{\begin{bmatrix}#1\end{bmatrix}}}
\def\an#1{{\color{blue}#1}}
\title{On the Sample Complexity and Optimization Landscape for Quadratic Feasibility Problems} 


\author{%
   \IEEEauthorblockN{Parth K.~Thaker, Gautam Dasarathy, and Angelia Nedi\'c}\\
   \IEEEauthorblockA{Electrical, Computer and Energy Engineering \\ Arizona State University, 
                     Tempe, USA\\
                     Email: \{pkthaker, gautamd, angelia.nedich\}@asu.edu}
}

\maketitle

\begin{abstract}
   We consider the problem of recovering a complex vector $\mathbf{x}\in \mathbb{C}^n$ from $m$ quadratic measurements $\{\langle A_i\mathbf{x}, \mathbf{x}\rangle\}_{i=1}^m$. This problem, known as quadratic feasibility, encompasses the well known phase retrieval problem and has applications in a wide range of important areas including power system state estimation and x-ray crystallography. In general, not only is the the quadratic feasibility problem NP-hard to solve, but it may in fact be unidentifiable. In this paper, we establish conditions under which this problem becomes {identifiable}, and further prove isometry properties in the case when the matrices $\{A_i\}_{i=1}^m$ are Hermitian matrices sampled from a complex Gaussian distribution. Moreover, we explore a nonconvex {optimization} formulation of this problem, and establish salient features of the associated optimization landscape that enables gradient algorithms with an arbitrary initialization to converge to a \emph{globally optimal} point with a high probability. Our results also reveal sample complexity requirements for successfully identifying a feasible solution in these contexts. 

\end{abstract}


\section{Introduction}
\let\thefootnote\relax\footnote{This work is supported in part by the National Science Foundation under Grant No. OAC-1934766 and CCF-1717391}
Finding a solution to a system of quadratic equations is an important  problem with a wide range of applications. 
{It}~arises in areas such as  power system state estimation \cite{wang2016power}, phase retrieval \cite{candes2015phase, candes2013phaselift, eldar2014phase, balan2014phase}, x-ray crystallography \cite{drenth2007principles}, the turnpike problem~\cite{dakic2000turnpike}, and unlabeled distance geometry problems\cite{duxbury2016unassigned, huang2018reconstructing} among others. 
Such problems are often reduced to a {\it quadratic feasibility problem},
where one is concerned with
finding a \emph{feasible} vector $\mathbf{x}$ that conforms to a set of quadratic observations of the form $\{\langle A_i\mathbf{x}, \mathbf{x}\rangle\}_{i=1}^m$ with respect to a set $\left\{ A_i \right\}_{i=1}^m$
{of measurement matrices}. {Formally}, it can be cast as:
\begin{align}\tag{P1}\label{eq:main_prob}
\text{find }\mathbf{x}
\text{ \ \ such that} &\ \langle A_i\mathbf{x}, \mathbf{x}\rangle = c_i, ~~~ \forall i=1,2,\dots,m.
\end{align}

The quadratic feasibility problem is an instance of quadratically constrained quadratic programs (QCQPs)\cite{park2017general}, which has enjoyed a long and rich  research history dating back to 1941\cite{dines1941mapping}. Given their broad applicability to critical problems, research in QCQPs continues to be of active interest \cite{park2017general, beck2017branch, beck2009convexity, beck2006strong}. Unfortunately, it is known that solving QCQPs is an NP-hard problem \cite{sahni1974computationally}. This combined with the lack of tractable duality properties \cite{polik2007survey} has made it hard to establish a sound theoretical framework for understanding the solutions and computing them. 
However, an extremely productive line of research has instead considered subclasses of QCQPs 
that are both practically relevant and can be analyzed.
In this paper, we take a similar approach and identify an important subclass of QCQPs that have a broad range of applications. In particular, we analyze the quadratic feasibility problem, and establish conditions under which such problems are identifiable, and then show that these conditions are in-fact sufficient for the efficient computation of their solutions. 

We start by considering quadratic functions $\{\langle A_i \mathbf{x}, \mathbf{x}\rangle\}_{i=1}^m$, where $\mathbf{x}\in \mathbb{C}^n$ and $A_i\in \mathbb{C}^{n\times n}$ are Hermitian {matrices}. 
We focus on their ability to generate injective maps up-to a phase factor (note that the quadratic functions $\{\langle A_i \mathbf{x}, \mathbf{x}\rangle\}_{i=1}^m$ are invariant to phase shifts). We establish a relationship between injectivity and isometry and show that, in real world scenarios, it is not difficult for a set of quadratic measurements $\{\langle A_i \mathbf{x}, \mathbf{x}\rangle\}_{i=1}^m$ to possess such an isometry property by establishing that this holds with very high probability when the matrices $\{A_i\}_{i=1}^m$ are complex Gaussian random matrices.

After establishing injectivity (and hence the identifiability) of the problem, we consider the question of computationally tractable approaches to actually find a feasible solution. Toward this end, a natural approach is to optimize the appropriate $\ell_2$-loss. Unfortunately, this turns out to be a nonconvex problem, that is NP-hard to solve in general. However, we show that under the same conditions required to establish injectivity, the landscape of this optimization problem is well-behaved. This allows us to establish that any gradient based algorithm with an arbitrary initialization can recover a globally optimal solution almost surely. 

The rest of the paper is organized as follows. Section \ref{sec:main_results} highlights the main results of this work. We discuss some related work in Section~\ref{sec:related_works}. In Section~\ref{sec:injectivity} we establish and analyze isometry/identifiability properties of the mapping $\{\langle A_i\mathbf{x}, \mathbf{x}\rangle\}$ when the measurement matrices are complex 
Gaussian. Finally, Section \ref{sec:Opt_landscape} casts the problem as a quadratic loss minimization problem (suitable for efficient algorithms) and establishes favorable properties of the loss landscape that allows one to find \an{a} solution using 
gradient-based methods with {arbitrary initial points}. 



\section{Main Results}\label{sec:main_results}
Before we state our main results of the paper, we introduce some notation that will be used throughout the paper. 
\subsection{Notation}
For any $r\in \mathbb{N}$, 
we write $[r]$ to denote the set $\{1,2,\dots, r\}$.
We let $\mathbb{C}^n$ and $\mathbb{R}^n$  denote the $n$-dimensional complex and real vector spaces, respectively. Unless otherwise stated, bold letters such as $\mathbf{x}$ indicate vectors in $\mathbb{C}^n$; $\mathbf{x}_{\mathbb{R}}$ and $\mathbf{x}_{\mathbb{C}}$ denote the real and the imaginary part of the vector $\mathbf{x}$, respectively. We denote complex conjugate of  $\mathbf{x}$ by $\bar{\mathbf{x}}$. 
Capital letters such as $X$ denote matrices in $\mathbb{C}^{n\times n}$. 
The use of $\mathsf{i}$ (without serif) indicates the complex square root of -1 (we will use $i$ to indicate an indexing variable).
We let $\mathcal{S}^{a,b}(\mathbb{R}^{n\times n})$ denote the set of all matrices $X\in \mathbb{R}^{n\times n}$ having $a$ non-negative eigenvalues and $b$ negative eigenvalues, where $a+b=n$. 
The set $\mathbf{H}_n(\mathbb{C})$ denotes the set of all $n\times n$ Hermitian matrices. 
We write $A^\top$ and $A^*$ to denote, respectively,
the transpose and the Hermitian transpose (transpose conjugate) of a matrix $A$. 
We use $\langle\cdot, \cdot\rangle$ to denote the inner vector product in the complex space. The symmetric outer product, denoted by  $[[\cdot,\cdot]]$, is defined as 
    $[[\mathbf{u}, \mathbf{v}]] = \mathbf{u}\mathbf{v}^* + \mathbf{v}\mathbf{u}^*.$
Finally, we will let $\sim$ denote the following equivalence relation on $\mathbb{C}^n$: 
$\mathbf{x}\sim\mathbf{y}$ if and only if 
$\mathbf{x}=c\mathbf{y}$ for some $c\in \mathbb{C}$
with $|c| = 1$. 
We will write $\widehat{\mathbb{C}}^n\triangleq\mathbb{C}^n/\sim$ to denote the associated quotient space.
Given a set of matrices 
{$\mathcal{A} = \{A_i\}_{i=1}^m\subset\mathbf{H}_n(\mathbb{C})$}, we will let $\mathcal{M}_{\mathcal{A}}$ denote the following mapping from $\widehat{\mathbb{C}}^n\to \mathbb{C}^m$:
\begin{equation}\label{eq:defmapping}
    \mathcal{M}_{\mathcal{A}}(\mathbf{x}) = (\langle A_1\mathbf{x}, \mathbf{x}\rangle, \langle A_2\mathbf{x}, \mathbf{x}\rangle,\dots, \langle A_m\mathbf{x}, \mathbf{x}\rangle).
\end{equation}
While $\mathcal{M}_{\mathcal{A}}$ technically operates on the equivalence classes in $\widehat{\mathbb{C}}^n$, we will abuse the notation slightly and think of $\mathcal{M}_\mathcal{A}$ as operating on the elements of $\mathbb{C}^n$. 
\subsection{Main Results}
We consider the quadratic feasibility 
problem~\eqref{eq:main_prob}
with the complex decision vector, i.e., 
$\mathbf{x}\in \mathbb{C}^n$,
Hermitian matrices $A_i\in \mathbb{C}^{n\times n}$ 
and real numbers $c_i\in \mathbb{R}$ for all $i\in [m]$. In order to understand the properties of this problem, we need a coherent distance metric that is analytically tractable. Toward this, we will use the following: 
\begin{equation}\label{eq:defdist}
    d(\mathbf{x}, \mathbf{y}) = \|\mathbf{x}\mathbf{x}^* - \mathbf{y}\mathbf{y}^*\|_F
    \quad\hbox{for any } \mathbf{x},\mathbf{y}\in \mathbb{C}^n.
\end{equation}
Notice that this distance metric is invariant under phase shifts, 
and has been used to prove crucial robustness results for the phase retrieval problem (however, only in $\mathbb{R}^n$); see e.g., \cite{bandeira2014saving, balan2015invertibility,candes2013phaselift}. 

Our first main result (Theorem~\ref{thm:isometry}) 
{states that the mapping $\mathcal{M}_\mathcal{A}$ is a near-isometry
when the matrices $A_i$ are chosen 
from a complex Gaussian distribution.
A sketch of the statement is provided below.}

\begin{theorem}[sketch]\label{thm:thm1-sketch}
Let $\mathcal{A} = \{A_i\}_{i=1}^m$ be a set of complex Gaussian random matrices. 
{Suppose that the number $m$ of measurements 
satisfies $m > Cn$, for a large enough $C>0$.
Then, with a high probability, the following relation
holds: for some constants $\alpha, \beta>0$, 
and for all $\mathbf{x},\mathbf{y}\in \mathbb{C}^n$,
\[    \alpha d(\mathbf{x}, \mathbf{y})\leq\|\mathcal{M}_\mathcal{A}(\mathbf{x}) - \mathcal{M}_\mathcal{A}(\mathbf{y})\|_2 \leq \beta d(\mathbf{x}, \mathbf{y}).
\]
}
\end{theorem}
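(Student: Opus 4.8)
The plan is to reduce the claimed two-sided bound to a restricted isometry property for a linear sensing operator acting on low-rank Hermitian matrices. The crucial observation is that $\mathcal{M}_\mathcal{A}$ is quadratic in its argument, so that differences linearize: writing $\langle A_i \mathbf{x}, \mathbf{x}\rangle = \mathbf{x}^* A_i \mathbf{x} = \mathrm{tr}(A_i \mathbf{x}\mathbf{x}^*)$ and setting $\Delta = \mathbf{x}\mathbf{x}^* - \mathbf{y}\mathbf{y}^*$, one obtains
\[
[\mathcal{M}_\mathcal{A}(\mathbf{x}) - \mathcal{M}_\mathcal{A}(\mathbf{y})]_i = \mathrm{tr}(A_i \Delta), \qquad i \in [m].
\]
Since $\Delta$ is Hermitian and is a difference of two rank-one positive semidefinite matrices, eigenvalue interlacing shows that it has at most one positive and one negative eigenvalue; hence $\Delta$ lies in the set $T$ of Hermitian matrices of rank at most two. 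Because $d(\mathbf{x},\mathbf{y}) = \|\Delta\|_F$, the theorem is equivalent to showing that
\[
\alpha^2 \|\Delta\|_F^2 \le \sum_{i=1}^m (\mathrm{tr}(A_i\Delta))^2 \le \beta^2 \|\Delta\|_F^2 \quad \text{for all } \Delta \in T ,
\]
i.e. that the operator $\Delta \mapsto (\mathrm{tr}(A_i\Delta))_{i=1}^m$ is a near-isometry on $T$. By homogeneity I may normalize to $\|\Delta\|_F = 1$.

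First I would establish the bound for a single fixed $\Delta \in T$ with $\|\Delta\|_F = 1$. For a Hermitian complex-Gaussian matrix $A_i$, the scalar $\mathrm{tr}(A_i\Delta)$ is a real zero-mean Gaussian with variance proportional to $\|\Delta\|_F^2$; a short entrywise computation gives $\mathbb{E}[(\mathrm{tr}(A_i\Delta))^2] = \sigma^2$ for a fixed $\sigma^2$. Consequently $\sum_i (\mathrm{tr}(A_i\Delta))^2$ is a sum of $m$ i.i.d. squared Gaussians, hence sub-exponential with mean $m\sigma^2$, and Bernstein's inequality yields, for each fixed $\Delta$ and $t \in (0,1)$,
\[
\mathbb{P}\Big[\,\big|\textstyle\sum_{i=1}^m (\mathrm{tr}(A_i\Delta))^2 - m\sigma^2\big| > t\,m\sigma^2\,\Big] \le 2e^{-c t^2 m}
\]
for an absolute constant $c>0$. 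This is the pointwise (in $\Delta$) concentration.

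To promote this to a statement holding simultaneously for all $\Delta \in T$, I would build an $\epsilon$-net $\mathcal{N}$ of the normalized set $T \cap \{\|\Delta\|_F = 1\}$. Since this set is parametrized by the pair $(\mathbf{x},\mathbf{y})$, it has effective dimension $O(n)$, so a standard volumetric bound gives $|\mathcal{N}| \le (C/\epsilon)^{O(n)}$. Taking a union bound of the tail estimate over $\mathcal{N}$, the event that the two-sided inequality holds at every net point fails with probability at most $|\mathcal{N}|\cdot 2e^{-ct^2 m} = \exp\!\big(O(n\log(1/\epsilon)) - c t^2 m\big)$, which is small precisely when $m > Cn$ for a sufficiently large constant $C$; this is the origin of the sample-complexity requirement.

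Finally I would transfer the bound from the net to all of $T$. The upper ($\beta$) direction is the easier one: it follows from an operator-norm bound for the sensing map restricted to rank-two directions, itself a concentration event controllable on a net. The lower ($\alpha$) direction is the delicate part, and is where I expect the \textbf{main obstacle} to lie. Given $\Delta \in T$ and its nearest net point $\Delta'$, the error $\Delta - \Delta'$ need not lie in $T$, so I cannot apply the net estimate to it directly; instead I must control $\big|\sum_i (\mathrm{tr}(A_i\Delta))^2 - \sum_i (\mathrm{tr}(A_i\Delta'))^2\big|$ via a uniform Lipschitz bound for the quadratic form along $T$, which in turn relies on the already-established upper operator-norm estimate. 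Choosing $\epsilon$ small enough that this approximation error is a small fraction of $m\sigma^2$ lets me absorb it into the constants and conclude a strictly positive $\alpha$. Combining the two directions yields the claimed near-isometry, with $\alpha,\beta$ proportional to $\sqrt{m}$ (equivalently, absolute constants after rescaling the ensemble).
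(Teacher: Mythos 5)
Your proposal is correct, and it shares the paper's overall architecture: pointwise Bernstein-type concentration of $\sum_{i}|\langle A_i,\mathbf{x}\mathbf{x}^*-\mathbf{y}\mathbf{y}^*\rangle|^2$ around $d(\mathbf{x},\mathbf{y})^2$ (the paper's Lemma 3), a net of size $e^{O(n)}$ with a union bound that produces the $m>Cn$ sample-complexity requirement, and a transfer step from the net to all pairs. Where you differ is in how the covering and transfer are implemented. You place the net directly on the normalized set of rank-two Hermitian differences and run the standard low-rank-RIP bootstrap: establish the upper bound first, then use it as a Lipschitz control to absorb the off-net error $\Delta-\Delta'$ (which, as you correctly flag, need not lie in $T$) into the lower bound. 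The paper instead nets the unit sphere $S^{n-1}$ of vectors, with $|\mathcal{N}_\delta|\le(12/\delta)^n$, and transfers via a deterministic per-matrix covering lemma (its Lemma 4) built on the identity $\|A\|=\tfrac12\sup_{\mathbf{x},\mathbf{y}\in S^{n-1}}|\langle A,\mathbf{x}\mathbf{x}^*-\mathbf{y}\mathbf{y}^*\rangle|$, applied to each $A_i$ and averaged; the infimum (lower-bound) direction there is only asserted ``similarly.'' Your route buys two things: the pointwise concentration becomes immediate, since $\mathrm{tr}(A_i\Delta)$ is a real zero-mean Gaussian with variance proportional to $\|\Delta\|_F^2$ (versus the paper's term-by-term index-set computation in Appendix B), and the delicate lower-bound transfer, where the normalizing denominator $d(\mathbf{x},\mathbf{y})^2$ can be arbitrarily small for nearby pairs, is handled explicitly rather than by appeal to symmetry with the supremum. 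What the paper's route buys is that it never needs nets of matrix sets or any rank/SVD parametrization, since everything stays on the vector sphere. Your closing remark that $\alpha,\beta\propto\sqrt{m}$ for the unnormalized map is also right, and is in fact stated more carefully than in the paper, whose final constants implicitly refer to the $\tfrac1m$-normalized sum.
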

In other words, $\mathcal{M}_\mathcal{A}$ nearly preserves distances with respect to the distance measure defined in \eqref{eq:defdist}. Therefore, when $\mathbf{x}$ and $\mathbf{y}$ are distinct, the corresponding measurements are also distinct -- that is, the measurement model defined by $\mathcal{M}_\mathcal{A}$ is \emph{identifiable}. The formal statement along with the full proof is presented in Theorem~\ref{thm:isometry} in Section~\ref{sec:injectivity}. 

Having established that \eqref{eq:main_prob} has a uniquely identifiable solution (upto a phase ambiguity), we next turn our attention to finding a feasible solution in a computationally efficient manner. 
To this end, one may consider recasting the quadratic feasibility problem as a quadratic {minimization problem of the $\ell_2$-loss function, 
as follows:
\begin{equation}\tag{P2}\label{eq:ell2loss0}
    \min_{\mathbf{x}\in\mathbb{C}^n} 
    f(\mathbf{x}),\qquad
    f(\mathbf{x})\triangleq\frac{1}{m}\sum_{i=1}^m\left|\langle A_i\mathbf{x}, \mathbf{x}\rangle - c_i\right|^2.
\end{equation}
}
Unfortunately, this optimization problem is non-convex and, in general, one may not expect any 
{gradient based method to converge to a global minimum.} However, our next main result 
{states that with high probability, 
the optimization landscape of $\eqref{eq:ell2loss0}$ is in fact amenable to gradient based methods!} 
Moreover, we {establish this} result under the same condition required for the problem to be identifiable -- namely, the measurement matrices are drawn from the complex Gaussian distribution. 
We now provide a sketch of our second main result. 
\begin{theorem}[sketch]\label{thm:thm2-sketch}
Let $\mathcal{A}= \{A_i\}_{i=1}^m$ be a set of complex Gaussian random matrices and let $\mathbf{z}$ be a global optimizer of \eqref{eq:ell2loss0}. Then, with high probability, the following holds:
\begin{enumerate}
    \item 
    $\mathcal{M}_{\mathcal{A}}(\mathbf{w}) = \mathcal{M}_{\mathcal{A}}(\mathbf{z})$
    for all local minima $\mathbf{w}$ 
    of \eqref{eq:ell2loss0}.
    \item The function $f$ in \eqref{eq:ell2loss0} has the strict saddle property.
\end{enumerate}
\end{theorem}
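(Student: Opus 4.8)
The plan is to reduce the entire landscape analysis to the near-isometry of Theorem~\ref{thm:thm1-sketch}. Writing $\mathcal{A}\colon \mathbf{H}_n(\mathbb{C})\to\mathbb{R}^m$ for the linear map $\mathcal{A}(H)=(\langle A_i,H\rangle)_{i=1}^m$ and $\mathcal{A}^*(\mathbf{y})=\sum_{i=1}^m y_i A_i$ for its adjoint, I would first note that $\langle A_i\mathbf{x},\mathbf{x}\rangle=\langle A_i,\mathbf{x}\mathbf{x}^*\rangle$, so, with $c_i=\langle A_i,\mathbf{z}\mathbf{z}^*\rangle$ at a feasible global optimizer, $f(\mathbf{x})=\tfrac1m\|\mathcal{A}(\mathbf{x}\mathbf{x}^*-\mathbf{z}\mathbf{z}^*)\|_2^2$. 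Since $\mathcal{A}(\mathbf{x}\mathbf{x}^*-\mathbf{z}\mathbf{z}^*)=\mathcal{M}_\mathcal{A}(\mathbf{x})-\mathcal{M}_\mathcal{A}(\mathbf{z})$ and $\mathbf{x}\mathbf{x}^*-\mathbf{z}\mathbf{z}^*$ is exactly the kind of matrix to which Theorem~\ref{thm:thm1-sketch} applies, I would record the two-sided bound $\tfrac{\alpha^2}{m}\,d(\mathbf{x},\mathbf{z})^2\le f(\mathbf{x})\le\tfrac{\beta^2}{m}\,d(\mathbf{x},\mathbf{z})^2$. This already shows $f(\mathbf{x})=0$ if and only if $d(\mathbf{x},\mathbf{z})=0$, i.e. $\mathbf{x}\sim\mathbf{z}$; hence the global minimizers are precisely the phase class of $\mathbf{z}$, and statement~(1) is equivalent to ruling out spurious local minima.

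Next I would set up the Wirtinger calculus, treating $\mathbf{x}$ and $\bar{\mathbf{x}}$ as independent. A direct computation gives the critical-point equation $M(\mathbf{w})\mathbf{w}=0$, where $M(\mathbf{w})\triangleq\mathcal{A}^*\mathcal{A}(\mathbf{w}\mathbf{w}^*-\mathbf{z}\mathbf{z}^*)=\sum_i r_i A_i$ with $r_i\triangleq\langle A_i,\mathbf{w}\mathbf{w}^*-\mathbf{z}\mathbf{z}^*\rangle$, and yields the real second-order form along a complex direction $\mathbf{v}$,
\[
\mathcal{H}_{\mathbf{w}}(\mathbf{v})=\tfrac1m\big(\|\mathcal{A}([[\mathbf{v},\mathbf{w}]])\|_2^2+2\,\mathbf{v}^*M(\mathbf{w})\mathbf{v}\big).
\]
Two facts drop out immediately: the phase direction $\mathbf{v}=\mathsf{i}\mathbf{w}$ is flat (reflecting $f(e^{\mathsf{i}\theta}\mathbf{x})=f(\mathbf{x})$), so ``strict saddle'' must be read transversally to the phase orbit; and testing criticality against $\mathbf{w}$ and $\mathbf{z}$ gives $\mathbf{w}^*M(\mathbf{w})\mathbf{w}=0$, hence $\sum_i r_i c_i=-\sum_i r_i^2$, and therefore $\mathbf{z}^*M(\mathbf{w})\mathbf{z}=\sum_i r_i c_i=-m\,f(\mathbf{w})$. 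Thus $M(\mathbf{w})$ is strictly indefinite whenever $\mathbf{w}$ is not a global minimizer.

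The heart of the argument is to exhibit, at every critical point $\mathbf{w}$ with $f(\mathbf{w})>0$, a direction of strictly negative curvature. Following the low-rank-recovery template, I would take $\mathbf{v}=\mathbf{w}-\mathbf{z}_\phi$, where $\mathbf{z}_\phi=e^{\mathsf{i}\phi}\mathbf{z}$ is the phase-aligned representative minimizing $\|\mathbf{w}-e^{\mathsf{i}\phi}\mathbf{z}\|$ (so $\mathbf{w}^*\mathbf{z}_\phi\ge0$ and $\mathbf{v}$ is transverse to the phase direction $\mathsf{i}\mathbf{w}$). I would then use Theorem~\ref{thm:thm1-sketch} to replace $\mathcal{A}^*\mathcal{A}$ by the identity up to a controlled relative error governed by how close $\beta/\alpha$ is to one, turning $\mathcal{H}_{\mathbf{w}}(\mathbf{v})$ into its ``exact-isometry'' surrogate $\tfrac1m\big(\|[[\mathbf{v},\mathbf{w}]]\|_F^2+2\langle\mathbf{w}\mathbf{w}^*-\mathbf{z}\mathbf{z}^*,\mathbf{v}\mathbf{v}^*\rangle\big)$ plus lower-order terms. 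A purely algebraic identity (using $\mathbf{v}=\mathbf{w}-\mathbf{z}_\phi$ and $\mathbf{w}^*\mathbf{z}_\phi\ge0$) shows this surrogate is at most $-c\,\|\mathbf{w}\mathbf{w}^*-\mathbf{z}\mathbf{z}^*\|_F^2=-c\,d(\mathbf{w},\mathbf{z})^2$ for an absolute constant $c>0$; since the error terms are themselves of order $d(\mathbf{w},\mathbf{z})^2$ times the isometry slack, a large enough sample complexity $m>Cn$ (making that slack small, via Theorem~\ref{thm:thm1-sketch}) forces $\mathcal{H}_{\mathbf{w}}(\mathbf{v})<0$. Consequently no critical point with $f>0$ can be a local minimum; every local minimum therefore satisfies $f(\mathbf{w})=0$, whence $\mathcal{M}_\mathcal{A}(\mathbf{w})=\mathcal{M}_\mathcal{A}(\mathbf{z})$, proving statement~(1); and since every critical point is then either a global minimizer or carries a strictly negative curvature direction, statement~(2) follows.

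The main obstacle is controlling the positive cross term $\|\mathcal{A}([[\mathbf{v},\mathbf{w}]])\|_2^2$ and showing the negative term $2\,\mathbf{v}^*M(\mathbf{w})\mathbf{v}$ dominates it by a margin proportional to $d(\mathbf{w},\mathbf{z})^2$. This requires (i) that the near-isometry apply not only to differences $\mathbf{x}\mathbf{x}^*-\mathbf{y}\mathbf{y}^*$ but to the Hermitian rank-two matrix $[[\mathbf{v},\mathbf{w}]]$ --- which is fine, since any indefinite Hermitian matrix of rank two can be written as $\mathbf{a}\mathbf{a}^*-\mathbf{b}\mathbf{b}^*$ --- and (ii) that the isometry slack be small enough for the clean algebraic inequality to survive the perturbation, which is exactly what the sample complexity $m>Cn$ buys through Theorem~\ref{thm:thm1-sketch}. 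The remaining work is bookkeeping in the complex Wirtinger Hessian and handling the phase-orbit degeneracy, neither of which is conceptually difficult once the negative-curvature direction $\mathbf{v}=\mathbf{w}-\mathbf{z}_\phi$ is in place.
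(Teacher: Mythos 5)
Your setup is the right one, and it matches the paper's own strategy: Wirtinger calculus, the phase-aligned direction $\mathbf{v}=\mathbf{w}-e^{\mathsf{i}\phi}\mathbf{z}$, the identity $[[\mathbf{v},\mathbf{w}]]=\mathbf{v}\mathbf{v}^*+(\mathbf{w}\mathbf{w}^*-\mathbf{z}\mathbf{z}^*)$ (the paper's \eqref{eq:matrix_splits}), and the low-rank-recovery template ``restricted isometry $+$ algebra $\Rightarrow$ negative curvature.'' However, the central step as you state it has a genuine gap, in fact two. First, the claim that the exact-isometry surrogate $\|[[\mathbf{v},\mathbf{w}]]\|_F^2+2\langle M,\mathbf{v}\mathbf{v}^*\rangle$, with $M=\mathbf{w}\mathbf{w}^*-\mathbf{z}\mathbf{z}^*$, is at most $-c\,\|M\|_F^2$ ``by a purely algebraic identity'' is false: expanding with your own identity gives $\|\mathbf{v}\mathbf{v}^*\|_F^2+4\langle M,\mathbf{v}\mathbf{v}^*\rangle+\|M\|_F^2$, and for $\mathbf{w}=\mathbf{z}+\epsilon\mathbf{u}$ this equals $\|M\|_F^2+O(\epsilon^3)>0$ --- as it must, since the Hessian is positive semidefinite near the global minimizer. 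Negativity is not algebraic; it requires the critical-point equation, and at an \emph{empirical} critical point you only know $\mathcal{A}^*\mathcal{A}(M)\mathbf{w}=0$, not $M\mathbf{w}=0$, so criticality does not descend to the surrogate for free. Second, replacing $2\mathbf{v}^*M(\mathbf{w})\mathbf{v}=2\langle\mathcal{A}(M),\mathcal{A}(\mathbf{v}\mathbf{v}^*)\rangle$ by $2m\langle M,\mathbf{v}\mathbf{v}^*\rangle$ is not licensed by Theorem~\ref{thm:isometry}: that theorem bounds $\|\mathcal{A}(H)\|_2$ only for $H$ of the form $\mathbf{x}\mathbf{x}^*-\mathbf{y}\mathbf{y}^*$, whereas this cross term is a correlation between two \emph{different} low-rank matrices; polarization would require the isometry on matrices of rank up to four, which the theorem does not provide (and with $m\asymp n$ samples $\tfrac1m\mathcal{A}^*\mathcal{A}$ is nowhere near the identity as an operator --- only restricted quadratic forms concentrate). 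This is exactly why the paper proves a separate concentration bound for these cross terms (Lemma~\ref{lem:hermitian_cross}).

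The repair uses the criticality \emph{before} discarding $\mathcal{A}$, and you already derived the needed facts. Vanishing gradient along $\mathbf{v}$ gives $\langle\mathcal{A}(M),\mathcal{A}([[\mathbf{v},\mathbf{w}]])\rangle=0$, i.e.\ $\langle\mathcal{A}(M),\mathcal{A}(\mathbf{v}\mathbf{v}^*)\rangle=-\|\mathcal{A}(M)\|_2^2$, while $M(\mathbf{w})\mathbf{w}=0$ gives $\mathbf{v}^*M(\mathbf{w})\mathbf{v}=\mathbf{z}^*M(\mathbf{w})\mathbf{z}=-\|\mathcal{A}(M)\|_2^2$ (your own observation, which you then never use). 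Substituting both collapses the second-order form to
\begin{equation}
\mathcal{H}_{\mathbf{w}}(\mathbf{v})\;=\;\tfrac1m\Bigl(\|\mathcal{A}(\mathbf{v}\mathbf{v}^*)\|_2^2-3\,\|\mathcal{A}(M)\|_2^2\Bigr),\nonumber
\end{equation}
and now \emph{both} arguments are of the form covered by Theorem~\ref{thm:isometry} (take $\mathbf{y}=\mathbf{0}$ for the first). Applying the two-sided isometry and the alignment inequality $\|\mathbf{v}\mathbf{v}^*\|_F^2\le2\|M\|_F^2$ --- which itself is not automatic from $\mathbf{w}^*\mathbf{z}_\phi\ge0$ alone, but is the content of the paper's lemmas following Lemma~\ref{lem:imzero} --- yields a bound of the form $(2\beta-3\alpha)\|M\|_F^2<0$ once $\beta/\alpha$ is close enough to $1$; this is precisely the margin the paper obtains. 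Finally, note that statement (2), the strict saddle property of Definition~\ref{def:strictsaddle}, is a quantitative claim at \emph{every} point, not only at exact critical points; the argument must therefore be run at points with small but nonzero gradient, carrying the gradient term $4\langle\nabla g_i(\mathbf{x},\bar{\mathbf{x}}),[\Delta;\bar{\Delta}]\rangle$ as an explicit error, as the paper does. With those two repairs your outline becomes essentially the paper's proof.
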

This theorem states that provided the measurement matrices are Gaussian, with high probability, 
the {minimization} problem \eqref{eq:ell2loss0} has no spurious local minima, and any saddle point of the {function $f$ is \emph{strict} in the sense that $f$ has  a strictly negative curvature at such a point.} The latter property, called the strict saddle property, is defined in Section~\ref{sec:Opt_landscape}, where we also provide a formal statement of  Theorem~\ref{thm:strictsaddle}. 
Finally, {based on the properties established here about the loss landscape, we establish that a solution to problem~\eqref{eq:ell2loss0} can be obtained
by applying a gradient based algorithm (from an arbitrary initial point), since such an algorithm is unlikely to converge to a saddle point. 
}

\setcounter{theorem}{0}
\section{Related work}\label{sec:related_works}

QCQPs have enjoyed a lot of attention over the last century. However, due to the limitation of the duality properties of QCQPs~\cite{barvinok1995problems}, a significant fraction of research has focused predominantly on heuristic approaches to their solution~\cite{konar2017fast, konar2017first, konar2017non}. Recently, an ADMM-based method 
has been proposed in\cite{huang2016consensus} 
with an asymptotic convergence result based on the duality properties QCQPs. {Our results in this paper
bring new insights to this area by analyzing a subset of QCQPs, namely, the quadratic feasibility problems}.



The quadratic feasibility problem~\eqref{eq:main_prob} 
arises in many applications, including phase retrieval \cite{candes2015phase} and power system state estimation \cite{wang2016power}. Phase retrieval in and of itself finds applications in a wide variety of fields such as imaging, optics, quantum tomography, audio signal processing with a wide literature, including~\cite{eldar2014phase, candes2015phase, tan2017phase, balan2014phase}. 
In~\cite{candes2013phaselift}, an 
approximate $\ell_1$ isometry property was 
established for the phase retrieval problem, but the bounds therein are not strong enough to provide RIP-like guarantees. In this paper, we improve these bounds to establish isometry results for a large class of problems and provide RIP-type bounds 
for the case when $\{A_i\}_{i=1}^m$ are complex Gaussian random matrices. 


A feasibility problem is often cast as a minimization problem with a suitably chosen loss function. 
Even with a nonconvex objective, gradient based methods have proven to work for phase retrieval \cite{tan2017phase, balan2014phase, candes2015phase}, matrix factorization \cite{bhojanapalli2016dropping, jin2016provable} and robust linear regression \cite{jain2017non}. The work in \cite{sun2018geometric} has established landscape properties for the phase retrieval problem, which sheds light on the success of gradient based methods in solving the problem. 
In this work, we extend these results to a 
wider class of problems along with additional
insights into the problem properties. 
In~\cite{ge2017no}, it was shown that many nonconvex loss functions have specific landscape properties, which allows gradient based algorithm to recover a globally optimal solution without any additional information. One unfortunately cannot readily transport those results to our setting, {mainly} due to the significant differences between the real and complex vector spaces.  
For instance, a quadratic feasibility problem 
in $\mathbb{R}^n$ has only two isolated local minima, 
{while it has a continuum} of minima
in $\mathbb{C}^n$.

{The authors in \cite{wang2017generalized} } provide lower bounds on the minimum number of independent measurements required for a successful recovery for the quadratic feasibility problem. 
More recently, \cite{huang2019solving} showed that 
the quadratic feasibility problem 
can be solved, with high probability, by gradient descent provided a good initialization is used. In contrast, the current work 
takes a parallel track by analyzing the landscape of the associated 
$\ell_2$-loss function. In particular,
for the $\ell_2$-loss function,
we prove that all local minima are global and all saddle points are strict. Thus, our results enable gradient based algorithms with arbitrary initialization to recover the solution for the quadratic feasibility problem.


\section{Properties of the quadratic mapping}\label{sec:injectivity}
As a first step towards establishing our main results, we start by characterizing when the quadratic mapping $\mathcal{M}_{\mathcal{A}}$ defined in~\eqref{eq:defmapping} is in fact an injective mapping. Notice that the injectivity of the mapping is equivalent to the problem being identifiable (and hence solvable). It is also worth noting that in the context of the phase retrieval problem, when $\mathcal{M}_\mathcal{A}$ is injective, it is said to possess the phase retrievability property \cite{wang2017generalized}.

\begin{lemma}\label{lem:existence}
The following statements are equivalent:
\begin{enumerate}
    \item \label{lem:injectivity1}The nonlinear map $\mathcal{M}_{\mathcal{A}} : \widehat{\mathbb{C}}^n \rightarrow \mathbb{C}^m$ is injective.
    \item \label{lem:bilipschitz}
    There exist constants $\alpha ,\beta >0$ such that $\forall \mathbf{u}, \mathbf{v}\in \mathbb{C}^n$, 
\begin{equation}\label{eq:bilipschitz}
    \beta \|[[\mathbf{u},\mathbf{v}]]\|^2_F 
    \geq
    \sum_{i=1}^m|\langle A_{i}, [[\mathbf{u},\mathbf{v}]] 
    \rangle|^2
    \geq \alpha \|[[\mathbf{u},\mathbf{v}]]\|^2_F. 
\end{equation}
\end{enumerate}
\end{lemma}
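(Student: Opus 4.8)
The plan is to pass from the nonlinear map $\mathcal{M}_{\mathcal{A}}$ to a \emph{linear} map on the space of Hermitian matrices, exploiting that each measurement is linear in $\mathbf{x}\mathbf{x}^*$. First I would record the two elementary identities that drive everything. Since each $A_i$ is Hermitian, $\langle A_i\mathbf{x},\mathbf{x}\rangle = \mathrm{tr}(A_i\mathbf{x}\mathbf{x}^*) = \langle A_i,\mathbf{x}\mathbf{x}^*\rangle$ (Frobenius inner product), so $\mathcal{M}_{\mathcal{A}}(\mathbf{x})=\mathcal{M}_{\mathcal{A}}(\mathbf{y})$ is equivalent to $\langle A_i,\mathbf{x}\mathbf{x}^*-\mathbf{y}\mathbf{y}^*\rangle=0$ for all $i$. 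A direct expansion gives the polarization-type identity $[[\mathbf{x}+\mathbf{y},\mathbf{x}-\mathbf{y}]]=2(\mathbf{x}\mathbf{x}^*-\mathbf{y}\mathbf{y}^*)$, and conversely every $[[\mathbf{u},\mathbf{v}]]$ equals $2(\mathbf{x}\mathbf{x}^*-\mathbf{y}\mathbf{y}^*)$ with $\mathbf{x}=(\mathbf{u}+\mathbf{v})/2$, $\mathbf{y}=(\mathbf{u}-\mathbf{v})/2$. Hence the set $\mathcal{D}:=\{[[\mathbf{u},\mathbf{v}]]:\mathbf{u},\mathbf{v}\in\mathbb{C}^n\}$ coincides with the set of differences $\{\mathbf{x}\mathbf{x}^*-\mathbf{y}\mathbf{y}^*\}$, i.e.\ the Hermitian matrices having at most one positive and one negative eigenvalue.

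Write $g(M):=\sum_{i=1}^m|\langle A_i,M\rangle|^2$; this is a continuous quadratic form, homogeneous of degree two, and the inequalities in \eqref{eq:bilipschitz} are precisely $\alpha\|M\|_F^2\le g(M)\le\beta\|M\|_F^2$ restricted to $M\in\mathcal{D}$. The upper bound needs no injectivity: by Cauchy--Schwarz $g(M)\le\big(\sum_i\|A_i\|_F^2\big)\|M\|_F^2$, so $\beta:=\sum_i\|A_i\|_F^2$ works for every Hermitian $M$. This disposes of one inequality and reduces the lemma to relating injectivity to the lower bound. For the direction $(2)\Rightarrow(1)$ I would argue directly: if $\mathcal{M}_{\mathcal{A}}(\mathbf{x})=\mathcal{M}_{\mathcal{A}}(\mathbf{y})$, set $\mathbf{u}=\mathbf{x}+\mathbf{y}$, $\mathbf{v}=\mathbf{x}-\mathbf{y}$, so $\langle A_i,[[\mathbf{u},\mathbf{v}]]\rangle=2\langle A_i,\mathbf{x}\mathbf{x}^*-\mathbf{y}\mathbf{y}^*\rangle=0$ for all $i$; the lower bound with $\alpha>0$ then forces $[[\mathbf{u},\mathbf{v}]]=0$, i.e.\ $\mathbf{x}\mathbf{x}^*=\mathbf{y}\mathbf{y}^*$, which means $\mathbf{x}\sim\mathbf{y}$.

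The substantive direction is $(1)\Rightarrow(2)$, which I would prove by a compactness argument. Via the identities above, injectivity translates to the statement that $g(M)=0$ with $M\in\mathcal{D}$ forces $M=0$. Consider the slice $\mathcal{D}\cap\{\|M\|_F=1\}$; on it $g$ is continuous and strictly positive (a zero would contradict injectivity), so if this slice is compact then $g$ attains a minimum $\alpha>0$ there, and homogeneity of degree two upgrades this to $g(M)\ge\alpha\|M\|_F^2$ for all $M\in\mathcal{D}$, which is exactly the desired lower bound.

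The main obstacle is verifying that the slice $\mathcal{D}\cap\{\|M\|_F=1\}$ is compact, i.e.\ that $\mathcal{D}$ is closed; note one cannot simply parametrize it by $(\mathbf{x},\mathbf{y})$ on a sphere, since that preimage is unbounded (e.g.\ $\mathbf{x},\mathbf{y}$ large with $\|\mathbf{x}\|^2-\|\mathbf{y}\|^2$ fixed). I would instead characterize $\mathcal{D}$ intrinsically through eigenvalues: ordering $\lambda_1(M)\ge\cdots\ge\lambda_n(M)$, one checks (using Weyl's inequalities for one direction and the spectral decomposition for the other) that $M\in\mathcal{D}$ if and only if $\lambda_2(M)\le 0\le\lambda_{n-1}(M)$, which pins the intermediate eigenvalues to zero and forces $\mathrm{rank}(M)\le 2$. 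Since the eigenvalues of a Hermitian matrix are continuous functions of its entries, the conditions $\lambda_2\le 0$ and $\lambda_{n-1}\ge 0$ are both closed, so $\mathcal{D}$ is closed and its intersection with the (compact) unit Frobenius sphere is compact. With this the argument closes.
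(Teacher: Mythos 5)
Your proof is correct, and while it shares the paper's core strategy---reducing the bi-Lipschitz bounds to positivity of the quadratic form $g(M)=\sum_{i=1}^m|\langle A_i,M\rangle|^2$ on the normalized cone of Hermitian matrices with at most one positive and one negative eigenvalue, then invoking compactness of the unit-Frobenius slice---your execution differs in ways that make it more self-contained and rigorous. First, the paper's direction $(1)\Rightarrow(2)$ leans on an external result (Theorem~2.1 of the cited phase-retrievability paper) to rule out simultaneous vanishing of the inner products, whereas you obtain strict positivity of $g$ on the slice directly from the identification of $\{[[\mathbf{u},\mathbf{v}]]\}$ with $\{\mathbf{x}\mathbf{x}^*-\mathbf{y}\mathbf{y}^*\}$ via the polarization identity $[[\mathbf{x}+\mathbf{y},\mathbf{x}-\mathbf{y}]]=2(\mathbf{x}\mathbf{x}^*-\mathbf{y}\mathbf{y}^*)$ (your factor of $2$ is the correct one; the paper's equation~\eqref{eq:switchu2x} drops it, harmlessly). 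Second---and this is the substantive addition---the paper merely asserts that the set $\{T\in S^{1,1},\ \|T\|_F=1\}$ is compact, while you prove closedness via the intrinsic characterization $\lambda_2(M)\le 0\le\lambda_{n-1}(M)$ (Weyl inequalities in one direction, spectral decomposition in the other) and continuity of eigenvalues; you also correctly flag why the naive parametrization by $(\mathbf{x},\mathbf{y})$ on spheres cannot give compactness, a point the paper glosses over. Third, your Cauchy--Schwarz bound $\beta=\sum_i\|A_i\|_F^2$ is simpler than the paper's maximum over the compact slice and makes transparent that the upper bound requires no injectivity at all. Finally, you prove $(2)\Rightarrow(1)$ directly, where the paper argues the contrapositive; both are valid, but your version dovetails more cleanly with the quotient-space definition of injectivity.
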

We refer the reader to Appendix \ref{lem:existence_proof} for a detailed proof. Lemma~\ref{lem:existence} characterizes the injectivity 
{property} of $\mathcal{M}_{\mathcal{A}}$ in terms of the action of the $A_i$ matrices on the outer product $[[\mathbf{u},\mathbf{v}]]$. In particular, it
says that the mapping $\mathcal{M}_{\mathcal{A}}$ is injective if and only if the matrices $A_i$ do not distort the outer-product $[[\mathbf{u}, \mathbf{v}]]$ too much. We use this characterization to obtain a more tractable condition that allows us to establish the injectivity of $\mathcal{M}_{\mathcal{A}}$ in the case of Gaussian measurement matrices ({forthcoming in}
Lemma~\ref{thm:bilip}).


 

Our tractable condition is based on what we call the \emph{stability} of the mapping. 
{Informally speaking, the mapping} $\mathcal{M}_{\mathcal{A}}$ is stable 
{if for any given $\epsilon>0$, there exists $\delta>0$ such that 
$\|\mathcal{M}_\mathcal{A}(\mathbf{x}) - \mathcal{M}_\mathcal{A}(\mathbf{y})\|\geq \delta$
whenever the vectors $\mathbf{x}, \mathbf{y}\in \widehat{\mathbb{C}}^n$ satisfy $d(\mathbf{x}, \mathbf{y})\geq \epsilon$.}
 Such stability properties have been considered in the specific context of phase retrieval; see e.g., \cite{eldar2014phase, candes2013phaselift, duchi2017solving, ye2003new}. 

\begin{definition}[($\alpha, \beta$)-stability]\label{def:stability}
{Consider the mapping $\mathcal{M}_\mathcal{A}$ defined in~\eqref{eq:defmapping}. We say that $\mathcal{M}_\mathcal{A}$ is ($\alpha, \beta$)-stable
in a metric $d$ on $\mathbb{C}^n$, with
$0 < \alpha \leq \beta$, if the following relation holds}
for all $\mathbf{x}_1, \mathbf{x}_2\in \mathbb{C}^n$:
\begin{equation}
    \alpha d(\mathbf{x}_1, \mathbf{x}_2) \leq \|\mathcal{M}_\mathcal{A}(\mathbf{x}_1) - \mathcal{M}_\mathcal{A}(\mathbf{x}_2)\|_2 \leq \beta d(\mathbf{x}_1, \mathbf{x}_2).
\end{equation}
\end{definition}
{The constants $\alpha, \beta$ depend on the choice of the metric $d(\cdot,\cdot)$. 
Throughout the paper, we will
work with the metric $d$ as defined in~\eqref{eq:defdist}.} 
Note that our definition of $(\alpha, \beta)$-stability sandwiches the norm of the measurement differences, which is in contrast to the stability concept considered for phase retrieval in \cite{eldar2014phase, duchi2017solving}. The constants $\alpha, \beta$ can also be thought of as a {\it condition number}, thereby allowing one to {quantify} the quality of the map;  {the higher the ratio between $\alpha$ and $\beta$, 
the better the ability of the mapping to  
distinguish between two distinct inputs.} 


{With this definition of stability in place, 
the next lemma states that the map
$\mathcal{M}_\mathcal{A}$ is injective if and only if 
it is stable.}
\begin{lemma}\label{thm:bilip}
The mapping $\mathcal{M}_{\mathcal{A}}$ is injective iff it is $(\alpha,\beta)$-stable for some constants $0 <\alpha \leq \beta$.
\end{lemma}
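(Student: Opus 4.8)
The plan is to reduce the claim to Lemma~\ref{lem:existence}, which already establishes that $\mathcal{M}_{\mathcal{A}}$ is injective if and only if the bi-Lipschitz condition~\eqref{eq:bilipschitz} holds. Thus it suffices to show that $(\alpha,\beta)$-stability in the metric $d$ of~\eqref{eq:defdist} is equivalent to~\eqref{eq:bilipschitz}, up to a harmless reparametrization of the constants. The first step is to re-express the measurement differences as matrix inner products. Since each $A_i$ is Hermitian, $\langle A_i\mathbf{x},\mathbf{x}\rangle = \mathbf{x}^*A_i\mathbf{x} = \mathrm{tr}(A_i\mathbf{x}\mathbf{x}^*) = \langle A_i, \mathbf{x}\mathbf{x}^*\rangle$, where the last inner product is the Frobenius inner product. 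Hence the $i$-th coordinate of $\mathcal{M}_{\mathcal{A}}(\mathbf{x}_1)-\mathcal{M}_{\mathcal{A}}(\mathbf{x}_2)$ equals $\langle A_i, \mathbf{x}_1\mathbf{x}_1^* - \mathbf{x}_2\mathbf{x}_2^*\rangle$, so that
\[
\|\mathcal{M}_{\mathcal{A}}(\mathbf{x}_1)-\mathcal{M}_{\mathcal{A}}(\mathbf{x}_2)\|_2^2 = \sum_{i=1}^m\left|\langle A_i, \mathbf{x}_1\mathbf{x}_1^* - \mathbf{x}_2\mathbf{x}_2^*\rangle\right|^2,
\qquad
d(\mathbf{x}_1,\mathbf{x}_2)^2 = \|\mathbf{x}_1\mathbf{x}_1^* - \mathbf{x}_2\mathbf{x}_2^*\|_F^2 .
\]

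The second step is the algebraic identity linking the difference of rank-one terms to the symmetric outer product,
\[
\mathbf{x}_1\mathbf{x}_1^* - \mathbf{x}_2\mathbf{x}_2^* = \tfrac12\,[[\,\mathbf{x}_1+\mathbf{x}_2,\ \mathbf{x}_1-\mathbf{x}_2\,]],
\]
which follows immediately by expanding $[[\mathbf{u},\mathbf{v}]]=\mathbf{u}\mathbf{v}^*+\mathbf{v}\mathbf{u}^*$. I would then introduce the linear change of variables $\mathbf{u}=\mathbf{x}_1+\mathbf{x}_2$, $\mathbf{v}=\mathbf{x}_1-\mathbf{x}_2$, which is a bijection of $\mathbb{C}^n\times\mathbb{C}^n$ onto itself with inverse $\mathbf{x}_1=\tfrac12(\mathbf{u}+\mathbf{v})$, $\mathbf{x}_2=\tfrac12(\mathbf{u}-\mathbf{v})$. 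Under this substitution the identity gives $\mathbf{x}_1\mathbf{x}_1^*-\mathbf{x}_2\mathbf{x}_2^* = \tfrac12[[\mathbf{u},\mathbf{v}]]$, so both displayed quantities acquire a common factor of $\tfrac14$, namely $d(\mathbf{x}_1,\mathbf{x}_2)^2 = \tfrac14\|[[\mathbf{u},\mathbf{v}]]\|_F^2$ and $\|\mathcal{M}_{\mathcal{A}}(\mathbf{x}_1)-\mathcal{M}_{\mathcal{A}}(\mathbf{x}_2)\|_2^2 = \tfrac14\sum_{i}|\langle A_i,[[\mathbf{u},\mathbf{v}]]\rangle|^2$.

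With these substitutions the factor $\tfrac14$ cancels throughout, and~\eqref{eq:bilipschitz} with constants $(\alpha,\beta)$ becomes, after taking square roots, precisely the $(\sqrt{\alpha},\sqrt{\beta})$-stability inequality of Definition~\ref{def:stability}; conversely, $(\alpha,\beta)$-stability is exactly~\eqref{eq:bilipschitz} with constants $(\alpha^2,\beta^2)$. Because the change of variables is a bijection, quantifying over all $\mathbf{x}_1,\mathbf{x}_2\in\mathbb{C}^n$ is the same as quantifying over all $\mathbf{u},\mathbf{v}\in\mathbb{C}^n$, so the two families of inequalities are genuinely equivalent. Chaining this with Lemma~\ref{lem:existence} yields injectivity $\iff$ stability.

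I do not expect a substantial obstacle here, as the argument is essentially a change of variables, and the care required is bookkeeping. The three points to verify are: (i) the identity $\langle A_i\mathbf{x},\mathbf{x}\rangle=\langle A_i,\mathbf{x}\mathbf{x}^*\rangle$, which hinges on Hermiticity of $A_i$ and the conjugation convention; (ii) that $(\mathbf{x}_1,\mathbf{x}_2)\mapsto(\mathbf{u},\mathbf{v})$ is a bijection, so the universal quantifiers match; and (iii) the correct propagation of the constants (the square roots and the factor $2$). As a sanity check on the easy direction, stability alone already forces injectivity: if $\mathcal{M}_{\mathcal{A}}(\mathbf{x}_1)=\mathcal{M}_{\mathcal{A}}(\mathbf{x}_2)$ then $\alpha\,d(\mathbf{x}_1,\mathbf{x}_2)\le 0$, whence $\mathbf{x}_1\mathbf{x}_1^*=\mathbf{x}_2\mathbf{x}_2^*$ and therefore $\mathbf{x}_1\sim\mathbf{x}_2$ in $\widehat{\mathbb{C}}^n$.
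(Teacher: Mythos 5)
Your proposal is correct and takes essentially the same route as the paper's proof: both reduce the claim to Lemma~\ref{lem:existence} by identifying differences of rank-one matrices $\mathbf{x}_1\mathbf{x}_1^* - \mathbf{x}_2\mathbf{x}_2^*$ with symmetric outer products $[[\mathbf{u},\mathbf{v}]]$, so that the bi-Lipschitz condition~\eqref{eq:bilipschitz} and $(\alpha,\beta)$-stability coincide up to squaring the constants. In fact your write-up is more complete than the paper's two-sentence argument—your substitution $\mathbf{u}=\mathbf{x}_1+\mathbf{x}_2$, $\mathbf{v}=\mathbf{x}_1-\mathbf{x}_2$ is an explicit bijection of $\mathbb{C}^n\times\mathbb{C}^n$ (the paper instead invokes the phase-dependent pair of~\eqref{eq:switchu2x}), and your tracking of the factor $\tfrac12$ and of the squared constants fills in bookkeeping the paper glosses over.
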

Please refer to Appendix \ref{sec:pf_bilip} for a proof of the lemma. This result demonstrates the usefulness of both our choice of the metric $d(\cdot, \cdot)$ from \eqref{eq:defdist} and of our definition of stability (Definition~\ref{def:stability}). 
{As we will see in what follows,} 
Lemma~\ref{thm:bilip} allows one to assess the conditions under which the measurement model implied by the mapping $\mathcal{M}_\mathcal{A}$ is \emph{identifiable}. 




Next, we turn our attention to the question of when one can establish the above stability condition, and thereby establish the identifiability of the underlying measurement model. 
{To do so}, 
we take our cues from the compressive sensing and phase retrieval literature, and we assume that 
{Hermitian matrices in the set 
$\mathcal{A}= \{A_i\}_{i=1}^m$} are sampled from 
a complex Gaussian distribution. More precisely, 
we {\it assume} that each entry in the upper triangle (including the diagonal) of $A_i$ is drawn independently from $\mathcal{N}(0,1)$. The remaining entries are determined by the fact that the  {matrices $A_i$ are Hermitian.}

{We first observe that, 
when $\mathcal{A}$ is chosen as described above, then 
$\mathbb{E}\left[ \left|\langle A_i, \mathbf{x} \mathbf{x}^\ast - \mathbf{y} \mathbf{y}^\ast\rangle\right|^2\right] = d(\mathbf{x}, \mathbf{y})^2$ for all $\mathbf{x}, \mathbf{y}\in \mathbb{C}^n$, and for all $i\in [m]$} (see Appendix~\ref{lem:subexpo_fro_proof} for more details). 
Our next result shows that these quantities are actually concentrated around their expected values.  

\begin{lemma}\label{lem:subexpo_fro}
Let $\mathcal{A} = \{A_i\}_{i=1}^m$ be a set of complex {Hermitian Gaussian random matrices} for the measurement model given by \eqref{eq:main_prob}. Then, given $\epsilon>0$ {and vectors $\mathbf{x}, \mathbf{y}
\in\mathbb{C}^n$}, there are constants $c,d>0$ such that 
\begin{align}
        &\ \mathbb{P}\left(\left|\sum_{i=1}^ {m} \frac{1}{m}|\langle A_i, \mathbf{x}\mathbf{x}^* - \mathbf{y}\mathbf{y}^*\rangle|^2 - d(\mathbf{x}, \mathbf{y})^2 \right| \geq \epsilon d(\mathbf{x}, \mathbf{y})^2 \right)\nonumber\\ & \quad 
        \leq de^{-cm\epsilon}.
\end{align}
\end{lemma}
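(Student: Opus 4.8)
The plan is to reduce the random quantity inside the probability to a normalized average of independent chi-squared variables, and then apply a Bernstein-type concentration inequality for sub-exponential summands (which is what the name of the lemma suggests).

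First I would set $M = \mathbf{x}\mathbf{x}^* - \mathbf{y}\mathbf{y}^*$, which is Hermitian, and write $G_i = \langle A_i, M\rangle = \operatorname{tr}(A_i M)$, using $A_i^* = A_i$. Since $A_i$ and $M$ are both Hermitian, $\operatorname{tr}(A_i M)$ is real; moreover, $G_i$ is a fixed real-linear functional of the independent Gaussian entries of $A_i$, and is therefore a centered real Gaussian. Its variance is exactly $\mathbb{E}[|G_i|^2] = \|M\|_F^2 = d(\mathbf{x},\mathbf{y})^2$, the expectation identity recorded just before the lemma (and established in the appendix cited there). Excluding the trivial case $M = 0$, in which $d(\mathbf{x},\mathbf{y}) = 0$ and the inequality holds vacuously, I may write $G_i = d(\mathbf{x},\mathbf{y})\, g_i$ with $g_i \sim \mathcal{N}(0,1)$; the $g_i$ are independent across $i$ because the $A_i$ are.

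The key payoff of this normalization is that $\frac{1}{m}\sum_{i=1}^m |\langle A_i, M\rangle|^2 = d(\mathbf{x},\mathbf{y})^2\cdot\frac{1}{m}\sum_{i=1}^m g_i^2$, so dividing the defining event through by $d(\mathbf{x},\mathbf{y})^2$ collapses the target probability to the completely $\mathbf{x},\mathbf{y}$-free quantity $\mathbb{P}\!\left(\left|\frac{1}{m}\sum_{i=1}^m g_i^2 - 1\right|\geq\epsilon\right)$, i.e.\ the concentration of $\frac{1}{m}\chi^2_m$ about its mean $1$. Each $g_i^2$ is chi-squared with one degree of freedom, hence sub-exponential with $\|g_i^2 - 1\|_{\psi_1}$ bounded by an absolute constant; since the $g_i^2$ are independent, Bernstein's inequality for centered sub-exponential variables yields a tail bound of the form $2\exp(-c_0\,m\min(\epsilon^2,\epsilon))$ with $c_0$ universal, from which the stated $d\,e^{-cm\epsilon}$ follows.

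The Gaussianity of $G_i$ and the reduction to the $\mathbf{x},\mathbf{y}$-free chi-squared average are immediate, so the only genuinely technical ingredient is the sub-exponential control of $g_i^2$ --- equivalently, a bound on $\mathbb{E}[e^{\lambda(g_i^2 - 1)}]$ for $|\lambda|$ below an absolute threshold, which is where the chi-squared structure enters and is standard. I do not expect a real obstacle here: the one subtlety is purely cosmetic, namely that Bernstein naturally produces the exponent $\min(\epsilon^2,\epsilon)$, but because the statement allows the constants $c, d$ to depend on $\epsilon$, the asserted linear-in-$\epsilon$ exponent is recovered by absorbing the factor $\min(\epsilon^2,\epsilon)/\epsilon$ into $c$. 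The computation most worth double-checking is the variance identity $\mathbb{E}[|G_i|^2] = d(\mathbf{x},\mathbf{y})^2$, since it hinges on the precise complex-Hermitian Gaussian normalization of the $A_i$.
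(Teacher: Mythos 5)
Your proof is correct, but it takes a genuinely different---and considerably shorter---route than the paper's. The paper expands $|\langle A_d, \mathbf{x}\mathbf{x}^* - \mathbf{y}\mathbf{y}^*\rangle|^2$ as a quadruple sum over index tuples $(i,j,k,l)$, partitions the tuples into four sets, computes the expectation of each set separately (recovering $\|\mathbf{x}\mathbf{x}^*-\mathbf{y}\mathbf{y}^*\|_F^2$ via a trace identity from \cite{balan2016reconstruction}), and then argues that the deviation contributed by each set is a sum of centered sub-exponential variables, to which a Bernstein-type inequality is applied. Your single structural observation---that for Hermitian $M=\mathbf{x}\mathbf{x}^*-\mathbf{y}\mathbf{y}^*$ the quantity $\langle A_i, M\rangle$ is a \emph{real-valued, real-linear} functional of the independent Gaussian entries of $A_i$, hence exactly a centered real Gaussian with variance $\|M\|_F^2$ (diagonal entries contribute $\sum_j m_{jj}^2$, off-diagonal pairs contribute $2\sum_{j<k}|m_{jk}|^2$, summing to $\|M\|_F^2$ under the paper's normalization)---collapses the entire statement to the concentration of $\tfrac{1}{m}\chi^2_m$ about $1$, with summands that are manifestly i.i.d.\ sub-exponential. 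This buys two things: the expectation identity becomes a two-line variance computation rather than a page of fourth-moment bookkeeping, and it sidesteps a soft spot in the paper's argument, namely that within each index set the summands (e.g.\ $a_{ij}a_{kl}$ over distinct tuples in set $E$) are \emph{not} mutually independent, so the termwise invocation of Bernstein there needs more justification than the paper supplies. What the paper's expansion buys in exchange is reusability: the same termwise technique carries over to Lemma~\ref{lem:hermitian_cross}, where the quantity is a difference of products of two \emph{distinct} linear forms and no single chi-squared reduction is available. Two minor points: your claim that the degenerate case $\mathbf{x}\mathbf{x}^*=\mathbf{y}\mathbf{y}^*$ ``holds vacuously'' is backwards---there the event has probability one and the case must simply be excluded, exactly as Theorem~\ref{thm:isometry} does by first dispensing with $\mathbf{x}\sim\mathbf{y}$; and your absorption of the Bernstein exponent $\min(\epsilon^2,\epsilon)$ into the constants to get $e^{-cm\epsilon}$ is legitimate given that $\epsilon$ is fixed in the statement, and is the same silent step the paper takes in passing from its bound with exponent $E(\epsilon)=\min\{\epsilon^2/K^2,\epsilon/K\}$ to the stated $de^{-cm\epsilon}$.
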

Please see Appendix~\ref{lem:subexpo_fro_proof} for a proof of {Lemma~\ref{lem:subexpo_fro}.} We would like to draw  attention to the fact that proving such a concentration result requires careful attention to the fact that we are operating in the complex domain; our proof techniques may be of independent interest. 
Notice that this concentration result only holds for a fixed pair of vectors. We however need 
 {the concentration} result to be uniform over all pairs of vectors in $\mathbb{C}^n$. For this, we will next adapt a standard covering argument as follows; we use $S^{n-1}$ to denote the $n-$dimensional unit sphere in $\mathbb{C}^n$.

\begin{lemma}\label{lem:covering}
 {Given $\delta>0$, let $\mathcal{N}_\delta$
be the smallest collection of $n$-dimensional 
balls of radius $\delta$ whose union covers the sphere $S^{n-1}$. 
Then, for any matrix $A\in \mathbb{C}^{n\times n}$, 
we have}
\begin{align}
    &\ (1-2\delta)\sup_{\mathbf{x}_1, \mathbf{x}_2\in S^{n-1}}|\langle A, \mathbf{x}_1\mathbf{x}_1^* - \mathbf{x}_2\mathbf{x}_2^*\rangle| \leq \sup_{\mathbf{x}_1, \mathbf{x}_2\in \mathcal{N}_\delta}|\langle A, \mathbf{x}_1\mathbf{x}_1^* - \mathbf{x}_2\mathbf{x}_2^*\rangle| \leq (1+2\delta)\sup_{\mathbf{x}_1, \mathbf{x}_2\in S^{n-1}}|\langle A, \mathbf{x}_1\mathbf{x}_1^* - \mathbf{x}_2\mathbf{x}_2^*\rangle|.\nn
\end{align}
\end{lemma}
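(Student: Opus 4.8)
The plan is to treat this as a standard covering (net) argument, with the one non-obvious ingredient being a shift-invariance trick that makes the approximation error scale with the sphere supremum itself rather than with the ambient spectral norm of $A$. Throughout I would write $g(\mathbf{x}_1,\mathbf{x}_2) = |\langle A, \mathbf{x}_1\mathbf{x}_1^* - \mathbf{x}_2\mathbf{x}_2^*\rangle|$ and $M = \sup_{\mathbf{x}_1,\mathbf{x}_2\in S^{n-1}} g(\mathbf{x}_1,\mathbf{x}_2)$, and interpret $\mathcal{N}_\delta$ as the set of centers of the covering balls, which may be taken to lie on $S^{n-1}$. Since then $\mathcal{N}_\delta \subseteq S^{n-1}$, the upper bound $\sup_{\mathcal{N}_\delta} g \leq M \leq (1+2\delta)M$ is immediate, and all of the content lies in the lower bound $\sup_{\mathcal{N}_\delta} g \geq (1-2\delta)M$.

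First I would record the shift-invariance. For a unit vector $\mathbf{x}$, $\langle A, \mathbf{x}\mathbf{x}^*\rangle = \mathbf{x}^* A \mathbf{x}$, so replacing $A$ by $A - cI$ changes this quantity by $-c\|\mathbf{x}\|^2 = -c$; hence for any two unit vectors $\langle A - cI, \mathbf{x}_1\mathbf{x}_1^* - \mathbf{x}_2\mathbf{x}_2^*\rangle = \langle A, \mathbf{x}_1\mathbf{x}_1^* - \mathbf{x}_2\mathbf{x}_2^*\rangle$. Thus $g$, $M$, and $\sup_{\mathcal{N}_\delta} g$ are all unchanged if we center the spectrum of $A$. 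Choosing $c = (\lambda_{\max}(A) + \lambda_{\min}(A))/2$ and using that $M = \lambda_{\max}(A) - \lambda_{\min}(A)$ (the maximum and minimum of the Rayleigh quotient $\mathbf{x}^* A \mathbf{x}$ over $S^{n-1}$ are attained independently over $\mathbf{x}_1,\mathbf{x}_2$), the centered matrix satisfies $\|A\|_{\mathrm{op}} = M/2$. So I may assume without loss of generality that $\|A\|_{\mathrm{op}} = M/2$.

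Next comes the key Lipschitz-type estimate. Fix $\mathbf{x}_1,\mathbf{x}_2\in S^{n-1}$ and choose net points $\mathbf{z}_1,\mathbf{z}_2\in\mathcal{N}_\delta$ with $\|\mathbf{x}_j - \mathbf{z}_j\|\leq\delta$, writing $\mathbf{w}_j = \mathbf{x}_j - \mathbf{z}_j$. Using the identity $\mathbf{x}_j\mathbf{x}_j^* - \mathbf{z}_j\mathbf{z}_j^* = \mathbf{x}_j\mathbf{w}_j^* + \mathbf{w}_j\mathbf{z}_j^*$ together with $|\langle A, \mathbf{u}\mathbf{v}^*\rangle| = |\mathbf{v}^* A \mathbf{u}| \leq \|A\|_{\mathrm{op}}\|\mathbf{u}\|\|\mathbf{v}\|$ (valid since $A$ is Hermitian), each term obeys $|\langle A, \mathbf{x}_j\mathbf{x}_j^* - \mathbf{z}_j\mathbf{z}_j^*\rangle| \leq \|A\|_{\mathrm{op}}\|\mathbf{w}_j\|(\|\mathbf{x}_j\| + \|\mathbf{z}_j\|) \leq 2\|A\|_{\mathrm{op}}\delta = M\delta$. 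Then, by the reverse triangle inequality applied to $g$ followed by the triangle inequality on the matrix arguments, $|g(\mathbf{x}_1,\mathbf{x}_2) - g(\mathbf{z}_1,\mathbf{z}_2)| \leq |\langle A, \mathbf{x}_1\mathbf{x}_1^* - \mathbf{z}_1\mathbf{z}_1^*\rangle| + |\langle A, \mathbf{x}_2\mathbf{x}_2^* - \mathbf{z}_2\mathbf{z}_2^*\rangle| \leq 2M\delta$.

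Finally I would assemble. The estimate gives $g(\mathbf{z}_1,\mathbf{z}_2) \geq g(\mathbf{x}_1,\mathbf{x}_2) - 2M\delta$ for a suitable choice of net points, so taking the supremum over $\mathbf{x}_1,\mathbf{x}_2\in S^{n-1}$ yields $\sup_{\mathcal{N}_\delta} g \geq M - 2M\delta = (1-2\delta)M$, which is the lower bound; combined with the trivial upper bound this proves the claim. The main obstacle is getting the constant right: the naive bound $|\langle A, \mathbf{x}_j\mathbf{x}_j^* - \mathbf{z}_j\mathbf{z}_j^*\rangle| \leq 2\|A\|_{\mathrm{op}}\delta$ is useless here, because $\|A\|_{\mathrm{op}}$ need not be comparable to $M$ (for instance a near-multiple of the identity has large spectral norm but tiny $M$). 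The shift-invariance step is exactly what closes this gap, replacing $\|A\|_{\mathrm{op}}$ by $M/2$, and identifying and justifying this reduction is the crux of the argument; the remaining covering bookkeeping is routine.
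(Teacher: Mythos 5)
Your proof is correct (for Hermitian $A$, which is the only case in which the paper uses the lemma), and it shares the paper's overall skeleton: approximate $\mathbf{x}_1,\mathbf{x}_2 \in S^{n-1}$ by net points $\mathbf{z}_1,\mathbf{z}_2$, bound the change in $|\langle A, \mathbf{x}_1\mathbf{x}_1^*-\mathbf{x}_2\mathbf{x}_2^*\rangle|$ by $4\delta$ times a spectral-norm quantity, and convert that additive error into the multiplicative factors $(1\pm 2\delta)$. Where the two arguments genuinely part ways is exactly the step you flagged as the crux. The paper performs the conversion by asserting
\[
\|A\| \;=\; \sup_{\mathbf{x}\in S^{n-1}}|\langle A\mathbf{x},\mathbf{x}\rangle| \;=\; \tfrac{1}{2}\sup_{\mathbf{x},\mathbf{y}\in S^{n-1}}|\langle A, \mathbf{x}\mathbf{x}^*-\mathbf{y}\mathbf{y}^*\rangle|,
\]
and this identity is false in general: the rightmost quantity is half the spectral spread, $\tfrac{1}{2}\left(\lambda_{\max}(A)-\lambda_{\min}(A)\right)$, whereas the left-hand side is $\max\left(|\lambda_{\max}(A)|,|\lambda_{\min}(A)|\right)$; for $A=I$ the left side is $1$ and the right side is $0$. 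Your shift-invariance step --- replacing $A$ by $A - \tfrac{\lambda_{\max}+\lambda_{\min}}{2}I$, which changes none of the quantities in the lemma on unit vectors and forces $\|A\|_{\mathrm{op}} = M/2$ --- is precisely the reduction under which that identity becomes valid. So your argument does not merely reprove the lemma by a different route: it supplies the missing ingredient that the paper's own proof needs, and repairs it. Two small caveats. First, your use of $\lambda_{\max},\lambda_{\min}$, a real shift $c$, and $M=\lambda_{\max}-\lambda_{\min}$ presumes $A$ Hermitian; the lemma as stated says ``any matrix,'' but the paper's proof carries the same implicit restriction (the numerical radius equals the operator norm only for normal matrices), and the lemma is only ever applied to Hermitian Gaussian matrices, so nothing is lost. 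Second, your ``trivial'' upper bound relies on reading $\mathcal{N}_\delta$ as a set of points lying on $S^{n-1}$; this matches how the paper uses $\mathcal{N}_\delta$ everywhere else (e.g., in the proof of Theorem 1), so it is a fair reading of the statement.
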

We refer the reader to Appendix~\ref{lem:covering_proof} for a proof of this lemma. This argument is not new and has found application in several results; see e.g., \cite{huang2019solving, vershynin2010introduction, meckes2004concentration}.

We are finally ready to state and prove our first main result. 
\begin{theorem}\label{thm:isometry}
Let  {$\mathcal{A} = \{A_i\}_{i=1}^m$ be the set of complex Gaussian random matrices, and assume 
the number of measurements satisfies $m > Cn$.
Then, for any given $\xi\in(0,1)$,}
there exist constants $C, c_0, d_0 > 0$ and 
$\beta \ge \alpha > 0$ 
such that, with probability at least $1 - \xi$, the following relation holds
\[
    \alpha d(\mathbf{x}, \mathbf{y})\leq\|\mathcal{M}_\mathcal{A}(\mathbf{x}) - \mathcal{M}_\mathcal{A}(\mathbf{y})\|_2 \leq \beta d(\mathbf{x}, \mathbf{y}).\nn
\]
\end{theorem}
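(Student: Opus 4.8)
The plan is to translate the desired two-sided bound into a uniform concentration statement and then combine the pointwise estimate of Lemma~\ref{lem:subexpo_fro} with the covering estimate of Lemma~\ref{lem:covering}. The starting point is the algebraic identity
\[
\|\mathcal{M}_\mathcal{A}(\mathbf{x}) - \mathcal{M}_\mathcal{A}(\mathbf{y})\|_2^2 = \sum_{i=1}^m \left|\langle A_i\mathbf{x},\mathbf{x}\rangle - \langle A_i\mathbf{y},\mathbf{y}\rangle\right|^2 = \sum_{i=1}^m \left|\langle A_i, \mathbf{x}\mathbf{x}^* - \mathbf{y}\mathbf{y}^*\rangle\right|^2,
\]
which uses $\langle A_i\mathbf{x},\mathbf{x}\rangle = \langle A_i, \mathbf{x}\mathbf{x}^*\rangle$ for Hermitian $A_i$. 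Proving the theorem therefore reduces to showing that, with probability at least $1-\xi$, this sum lies between $\alpha^2 d(\mathbf{x},\mathbf{y})^2$ and $\beta^2 d(\mathbf{x},\mathbf{y})^2$ simultaneously for all $\mathbf{x},\mathbf{y}\in\mathbb{C}^n$; the $\sqrt{m}$ discrepancy between the $\frac1m$ normalization in Lemma~\ref{lem:subexpo_fro} and the unnormalized statement here is absorbed into the constants $\alpha,\beta$.

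Next I would use homogeneity: with $H = \mathbf{x}\mathbf{x}^* - \mathbf{y}\mathbf{y}^*$, both sides scale identically under dilations of $\mathbf{x}$ and $\mathbf{y}$, so it suffices to prove the concentration for $\mathbf{x},\mathbf{y}$ on the unit sphere $S^{n-1}\subset\mathbb{C}^n$ and recover the general case by scaling. Fixing $\epsilon\in(0,1)$, Lemma~\ref{lem:subexpo_fro} controls each fixed pair with failure probability at most $d\,e^{-cm\epsilon}$. I would then introduce a $\delta$-net $\mathcal{N}_\delta$ of $S^{n-1}$ of cardinality $|\mathcal{N}_\delta|\le (3/\delta)^{2n}$ (the complex sphere has real dimension $2n$) and union-bound the pointwise estimate over all pairs in the net, incurring the factor $|\mathcal{N}_\delta|^2\le(3/\delta)^{4n}$. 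Lemma~\ref{lem:covering} then transfers the deviation from the net to the entire sphere at the cost of the benign multiplicative factors $(1\pm2\delta)$, which fold into $\alpha,\beta$. Combined with Lemma~\ref{thm:bilip}, the resulting stability also re-confirms injectivity.

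The quantitative core is the union-bound trade-off: the total failure probability is at most $(3/\delta)^{4n} d\, e^{-cm\epsilon}$, and requiring this to be below $\xi$ forces $cm\epsilon \gtrsim 4n\log(3/\delta) + \log(d/\xi)$. Holding $\delta$ and $\epsilon$ at fixed constants, this is precisely the hypothesis $m > Cn$ for a sufficiently large $C$, and it yields constants of the form $\alpha = \sqrt{m}\sqrt{(1-2\delta)(1-\epsilon)}$ and $\beta = \sqrt{m}\sqrt{(1+2\delta)(1+\epsilon)}$ (up to the chosen normalization). I expect the principal obstacle to be making the argument genuinely uniform rather than pointwise: Lemma~\ref{lem:covering} governs the supremum of a single linear form $|\langle A,\cdot\rangle|$ over the sphere versus the net, whereas the quantity of interest is the aggregate quadratic $\frac1m\sum_i|\langle A_i,\cdot\rangle|^2$, so care is needed to bridge the two — most cleanly by showing that $(\mathbf{x},\mathbf{y})\mapsto\sqrt{\frac1m\sum_i|\langle A_i,\mathbf{x}\mathbf{x}^*-\mathbf{y}\mathbf{y}^*\rangle|^2}$ is Lipschitz and bootstrapping the net estimate, and by checking that the constant $C$ in $m>Cn$ is compatible with the constants delivered by Lemma~\ref{lem:subexpo_fro}.
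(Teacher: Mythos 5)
Your proposal follows essentially the same route as the paper's proof: the pointwise concentration of Lemma~\ref{lem:subexpo_fro}, a union bound over a $\delta$-net of the unit sphere, and Lemma~\ref{lem:covering} to pass from the net back to all of $S^{n-1}$, with the requirement $m > Cn$ emerging from exactly the same net-size-versus-exponential trade-off. If anything, you are more careful than the paper on two points it glosses over — union-bounding over \emph{pairs} of net points (the paper's bound only charges $(12/\delta)^n$ rather than the squared net cardinality) and bridging the single-matrix covering lemma to the aggregate quadratic $\frac{1}{m}\sum_i |\langle A_i,\cdot\rangle|^2$ normalized by $d(\mathbf{x},\mathbf{y})^2$ — so the sketch is sound.
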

\begin{proof}
Consider $\mathbf{x}, \mathbf{y}\in \mathbb{C}^n$. {If $\mathbf{x}\sim\mathbf{y}$, then $d(\mathbf{x}, \mathbf{y}) =0 $ and $\mathcal{M}_\mathcal{A}(\mathbf{x}) = \mathcal{M}_\mathcal{A}(\mathbf{y})$, and the result holds trivially.
Therefore, in the sequel, we assume 
that the vectors are distinct, implying that}
$d(\mathbf{x}, \mathbf{y})> 0$. 

From Lemma \ref{lem:subexpo_fro}, for a 
 {given $\epsilon>0$ we have} 
\[
        \mathbb{P}\left(\left|\sum_{i=1}^{ {m}} \frac{1}{m}\frac{|\langle A_i, \mathbf{x}\mathbf{x}^* - \mathbf{y}\mathbf{y}^*\rangle|^2}{d(\mathbf{x}, \mathbf{y})^2} - 1\right| \geq \epsilon \right)
        \leq de^{-cm\epsilon}.
\]
 {According to~\cite{vershynin2010introduction}, 
for any $\delta>0$, we have} the following upper bound on the size of the covering: 
\(|\mathcal{N}_\delta| \leq \left(\frac{12}{\delta}\right)^n.\)
Therefore, for a given $\epsilon,\delta >0$, 
by Lemma~\ref{lem:subexpo_fro} and the preceding 
union bound, we have that 
\begin{align*}
& \ \ \mathbb{P}\left(\sup_{\mathbf{x}, \mathbf{y}\in \mathcal{N}_\delta}\left|\sum_{i=1}^{ {m}} \frac{1}{m}\frac{|\langle A_i, \mathbf{x}\mathbf{x}^* - \mathbf{y}\mathbf{y}^*\rangle|^2}{d(\mathbf{x}, \mathbf{y})^2} - 1 \right| > \epsilon \right) 
\leq d e^{-cm\epsilon}\left(\frac{12}{\delta}\right)^{n},
\end{align*}
where $d, c$ are the same constants as in Lemma~\ref{lem:subexpo_fro}.
This implies that
\begin{align}
    \mathbb{P}\left(\sup_{\mathbf{x}, \mathbf{y}\in \mathcal{N}_\delta}\left|\sum_{i=1}^m \frac{1}{m}\frac{|\langle A_i, \mathbf{x}\mathbf{x}^* - \mathbf{y}\mathbf{y}^*\rangle|^2}{d(\mathbf{x}, \mathbf{y})^2} - 1 \right| \leq \epsilon \right)\geq 1-de^{-cm\epsilon}\left(\frac{12}{\delta}\right)^{n}.\nn
\end{align}
Now, observe that 
\begin{align}
    \sup_{\mathbf{x}, \mathbf{y}\in \mathcal{N}_\delta}\left|\sum_{i=1}^m \frac{1}{m}\frac{|\langle A_i, \mathbf{x}\mathbf{x}^* - \mathbf{y}\mathbf{y}^*\rangle|^2}{d(\mathbf{x}, \mathbf{y})^2} - 1 \right| \geq \sup_{\mathbf{x}, \mathbf{y}\in \mathcal{N}_\delta}
    \sum_{i=1}^m \frac{1}{m}\frac{|\langle A_i, \mathbf{x}\mathbf{x}^* - \mathbf{y}\mathbf{y}^*\rangle|^2}{d(\mathbf{x}, \mathbf{y})^2} - 1.\nonumber
\end{align}
Therefore,
\begin{align}
    \mathbb{P}\left(\sup_{\mathbf{x}, \mathbf{y}\in \mathcal{N}_\delta}
    \sum_{i=1}^m
    \frac{1}{m}\frac{|\langle A_i, \mathbf{x}\mathbf{x}^* - \mathbf{y}\mathbf{y}^*\rangle|^2}{d(\mathbf{x}, \mathbf{y})^2} - 1 \leq \epsilon \right) \geq 1-d e^{-cm\epsilon}
    \left(\frac{12}{\delta}\right)^{n}.\nn
\end{align}
 {By applying the covering result of 
Lemma \ref{lem:covering} to each matrix $A_i$, 
averaging the resulting relations over $m$, and
using}
\begin{equation*}
    \sup_{\mathbf{x}, \mathbf{y}\in \mathcal{N}_\delta}d(\mathbf{x}, \mathbf{y}) \leq (1+2\delta)\sup_{\mathbf{x}, \mathbf{y}\in S^{n-1}}d(\mathbf{x}, \mathbf{y}).
\end{equation*}
we obtain
\begin{align*}
    \sup_{\mathbf{x}, \mathbf{y}\in \mathcal{N}_\delta}\sum_{i=1}^m
    \frac{1}{m}\frac{|\langle A_i, \mathbf{x}\mathbf{x}^* - \mathbf{y}\mathbf{y}^*\rangle|^2}{d(\mathbf{x}, \mathbf{y})^2} \geq \sup_{\mathbf{x}, \mathbf{y}\in S^{n-1}}\sum_{i=1}^m
    \frac{(1-2\delta)^2}{(1+2\delta)^2m}\frac{|\langle A_i, \mathbf{x}\mathbf{x}^* - \mathbf{y}\mathbf{y}^*\rangle|^2}{d(\mathbf{x}, \mathbf{y})^2}.
\end{align*}
Thus, we can conclude that
\begin{align}
    \mathbb{P}\left(\sup_{\mathbf{x}, \mathbf{y}\in S^{n-1}}\sum_{i=1}^m
    \frac{1}{m}\frac{|\langle A_i, \mathbf{x}\mathbf{x}^* - \mathbf{y}\mathbf{y}^*\rangle|^2}{d(\mathbf{x}, \mathbf{y})^2} \leq
    \frac{\left(1+2\delta)^2(1 +\epsilon\right)}{(1-2\delta)^2}\right)\geq 1-de^{-cm\epsilon}\left(\frac{12}{\delta}\right)^{n}.\nn
\end{align}
Similarly, we can prove that
\begin{align}
    \mathbb{P}\left(\inf_{\mathbf{x}, \mathbf{y}\in S^{n-1}}\sum_{i=1}^m
    \frac{1}{m}\frac{|\langle A_i, \mathbf{x}\mathbf{x}^* - \mathbf{y}\mathbf{y}^*\rangle|^2}{d(\mathbf{x}, \mathbf{y})^2} \leq
    \frac{\left(1-2\delta)^2(1 -\epsilon\right)}{(1+2\delta)^2}\right)\geq 1-de^{-cm\epsilon}\left(\frac{12}{\delta}\right)^{n}.\nn
\end{align}
 {Letting $C$ be such that 
$C > \frac{\log 12d - \log \delta\xi}{c\epsilon}$
and letting  $m\geq Cn$, we can see that 
the following relation holds
with probability at least $1-\xi$: for all 
$\mathbf{x},\mathbf{y}\in\mathbb{C}^n$},
\begin{align}
    \beta d(\mathbf{x}, \mathbf{y})\geq\|\mathcal{M}_\mathcal{A}(\mathbf{x}) - \mathcal{M}_\mathcal{A}(\mathbf{y})\|_2 \geq \alpha d(\mathbf{x}, \mathbf{y}),\nn
\end{align}
where $\alpha , \beta $ are given by
\begin{align}
    \alpha \triangleq  \frac{\left( (1-2\delta)^2(1 -\epsilon\right)}{(1+2\delta)^2}, ~~~
    \beta  \triangleq   \frac{\left((1+2\delta)^2(1 +\epsilon \right)}{(1-2\delta)^2}.\nn
\end{align}
Notice that we essentially have a choice of the values of $\delta$ and $\epsilon$. 
The closer they are to 0, the stronger the stability result. However, this also implies the larger $m$, the number of measurements, needs to be. 
\end{proof}

\section{Non-convex loss reformulation and the Optimization Landscape}\label{sec:Opt_landscape}
In the  {preceding}
section, we established conditions under which the mapping $\mathcal{M}_\mathcal{A}$ represents an identifiable measurement model.  {We next consider determining 
a feasible solution by applying efficient methods to the nonconvex optimization-based reformulation of the feasibility problem,
as given in~(P2).}
 {As noted earlier, in general, 
gradient-based aproaches need not converge to a global minimum of the $\ell_2$-loss minimization problem (P2). Lately}, nonconvex optimization 
has received considerable attention. In particular, methods like SGD and other gradient based methods have been shows to be astonishingly successful in converging to global  {minima} in many nonconvex problems \cite{du2018gradient, chen2018gradient, jin2017escape}. Arguably, the reason for this is that the optimization landscape of these (somewhat well-behaved) nonconvex problems enjoy some advantageous properties which are crucial in the empirical success of the gradient based methods  \cite{bower2018landscape, davis2017nonsmooth, sun2018geometric}. The work in \cite{sun2018geometric} proves that the $\ell_2$-loss function for the phase retrieval problem enjoys properties such as all local minima being global, and each 
saddle point having a strictly negative curvature. 
This work adds to this rich body of work by demonstrating similarly advantageous properties of the optimization landscape for the quadratic feasibility problem. Before we proceed 
we observe that, since the $\ell_2$-loss function is not differentiable in the complex space $\mathbb{C}^n$, 
it is challenging to address 
the problem~\eqref{eq:ell2loss0} in a standard way.
In what follows, we instead use techniques from Wirtinger calculus \cite{candes2015phase}. Our first step is to define the notion of a strict saddle function.

\begin{definition}\label{def:strictsaddle}
Let $\beta, \gamma, \zeta$ be positive scalars. 
 {A function $f$ is said be $(\beta, \zeta, \gamma)$-strict saddle function} if for any $\mathbf{x}\in \mathbb{C}^n$, at least one of the following statements 
is true:
\begin{enumerate}
    \item $\|\nabla f(\mathbf{x})\| \geq \beta$;
    \item $\langle \nabla^2f(\mathbf{x})\mathbf{z}, \mathbf{z} \rangle \leq -\zeta$ for some $\mathbf{z} \in \mathbb{C}^n$;
    \item $\mathbf{x}$ is $\gamma$-close to a local minimum, i.e., $d(\mathbf{x},\mathbf{w})\le \gamma$
    for some $\mathbf{w}$ satisfying $\nabla f(\mathbf{w}) = 0$ and $\nabla^2 f(\mathbf{w}) \succeq 0$. 
\end{enumerate}
\end{definition}
Intuitively, this implies that every $\mathbf{x}\in \mathbb{C}^n$ either violates optimality (condition 1 and 2) or is close to a local optimum. A line of recent work \cite{lee2016gradient, lee2017first, jain2017non} has explored the efficacy of gradient based methods in finding a local optimum of functions satisfying Definition~\ref{def:strictsaddle}.

We analyze the optimization landscape of (P2) when our measurement matrices are Hermitian and complex Gaussian, and show that with a high probability \emph{every local minimum is in fact global} (upto the equivalence relation $\sim$). Our next main result states that the function $f$ in (P2) is strict saddle.  
\begin{theorem}\label{thm:strictsaddle}
 {Let $\{A_i\}_{i=1}^m$ be a set of complex $n\times n$ Gaussian random matrices, and let $m > Cn$ for some 
constant $C>0$.
Let the scalars $\{c_i\}_{i=1}^m$ characterizing the objective function $f$ of problem (P2) be generated by quadratic measurements of an unknown vector $\mathbf{z}$.
Then, for any given $\xi\in(0,1)$,} there exist positive constants $\beta, \gamma,$ and $\zeta$ such that 
the following statements hold with probability at least $1-\xi$:
\begin{itemize}
    \item [1)] The function f is $(\beta, \zeta, \gamma)$-strict saddle, and
    \item [2)] Every local minimum $\mathbf{w}$ 
    of $f$ satisfies $d(\mathbf{w}, \mathbf{z}) = 0$
\end{itemize}
\end{theorem}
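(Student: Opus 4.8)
The plan is to work throughout on the high-probability event furnished by Theorem~\ref{thm:isometry} and to analyze the Wirtinger gradient and Hessian of $f$ on that event. Writing $g_i(\mathbf{x}) = \langle A_i\mathbf{x},\mathbf{x}\rangle - c_i = \langle A_i, \mathbf{x}\mathbf{x}^* - \mathbf{z}\mathbf{z}^*\rangle$ (a real number, since $A_i$ is Hermitian and $c_i = \langle A_i\mathbf{z},\mathbf{z}\rangle$), we have $f(\mathbf{x}) = \frac1m\sum_i g_i(\mathbf{x})^2 = \frac1m\|\mathcal{M}_\mathcal{A}(\mathbf{x}) - \mathcal{M}_\mathcal{A}(\mathbf{z})\|_2^2$. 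Theorem~\ref{thm:isometry} then sandwiches $f$ as $\frac{\alpha^2}{m} d(\mathbf{x},\mathbf{z})^2 \le f(\mathbf{x}) \le \frac{\beta^2}{m} d(\mathbf{x},\mathbf{z})^2$, which already identifies the global minimizers as exactly the phase orbit $\{\mathbf{z}e^{\mathsf{i}\theta}\}$ and will yield statement~2) once spurious local minima are excluded. The Wirtinger gradient (in its $\bar{\mathbf{x}}$ block) is $\nabla f(\mathbf{x}) = \frac{2}{m}\sum_i g_i(\mathbf{x})\,A_i\mathbf{x}$, so critical points solve $\big(\frac1m\sum_i g_i(\mathbf{x})A_i\big)\mathbf{x} = 0$; and the curvature along a real line $t\mapsto \mathbf{x}+t\mathbf{u}$ is
\[
   \phi_{\mathbf{u}}''(0) = \frac{2}{m}\sum_{i=1}^m\Big[\,4\big(\mathrm{Re}(\mathbf{u}^*A_i\mathbf{x})\big)^2 + 2\,g_i(\mathbf{x})\,\mathbf{u}^*A_i\mathbf{u}\,\Big],
\]
which is the object I must drive below $-\zeta$ at every point that is neither near-stationary with small gradient nor near the orbit of $\mathbf{z}$.

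First I would reduce to a compact set: for large $\|\mathbf{x}\|$ the quartic growth of $f$ forces $\|\nabla f(\mathbf{x})\|\ge\beta$ (region~1), so it suffices to establish the trichotomy on a bounded region. Next I would carry out the analysis in expectation. Using the Gaussian fourth-moment identities for Hermitian $A_i$ (the same computation that yields $\mathbb{E}|\langle A_i, \mathbf{x}\mathbf{x}^*-\mathbf{y}\mathbf{y}^*\rangle|^2 = d(\mathbf{x},\mathbf{y})^2$), I would compute the population gradient $\mathbb{E}\,\nabla f$ and population curvature $\mathbb{E}\,\phi_{\mathbf{u}}''(0)$, enumerate the critical points of the population objective, and exhibit at every non-global critical point an explicit escape direction—built from $\mathbf{x}$, $\mathbf{z}$, and $\mathsf{i}\mathbf{z}$ with an appropriately chosen phase—along which $\mathbb{E}\,\phi_{\mathbf{u}}''(0)$ is bounded below $-\zeta_0<0$ (region~2). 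The remaining region~3 is a neighborhood of the orbit $\{\mathbf{z}e^{\mathsf{i}\theta}\}$, where I would show the population Hessian is positive definite on the subspace transverse to the phase direction $\mathsf{i}\mathbf{z}e^{\mathsf{i}\theta}$, the only flat direction, forced by phase invariance of $f$.

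I would then upgrade all three population statements to the empirical $f$ uniformly over the compact region, with probability at least $1-\xi$ once $m> Cn$. Each entry of the gradient and each directional curvature is an average of i.i.d.\ terms that are polynomials of degree at most four in the Gaussian entries of $A_i$; these are sub-exponential, so a pointwise Bernstein/sub-exponential bound in the spirit of Lemma~\ref{lem:subexpo_fro} applies, and a $\delta$-net over the base point $\mathbf{x}$ and the direction $\mathbf{u}$, together with the covering estimate of Lemma~\ref{lem:covering} and a union bound, promotes it to a uniform statement; Lipschitz continuity of the relevant quadratic forms controls the off-net error. Combining the three regions yields the $(\beta,\zeta,\gamma)$-strict saddle property, which is statement~1). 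For statement~2), a local minimum $\mathbf{w}$ has $\nabla f(\mathbf{w})=0$ and $\nabla^2 f(\mathbf{w})\succeq0$, hence lies in neither region~1 nor region~2 and must lie in region~3; there the transverse positive definiteness forces the critical point onto the orbit of $\mathbf{z}$, i.e.\ $d(\mathbf{w},\mathbf{z})=0$.

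The hardest part will be the uniform control of the empirical curvature. It is a fourth-order form in the Gaussian entries that must be controlled simultaneously over all base points $\mathbf{x}$ and all escape directions $\mathbf{u}$; the heavier (only sub-exponential) tails and the need to net a two-parameter family demand a careful covering/Lipschitz argument and drive the sample-complexity constant $C$. A second, structural difficulty specific to the complex setting is the continuum of global minima: unlike the real case with two isolated minima, the flat phase direction $\mathsf{i}\mathbf{z}e^{\mathsf{i}\theta}$ must be explicitly quotiented out, both when selecting escape directions in region~2 and when proving transverse positive definiteness in region~3, so that this genuinely flat direction is never mistaken for negative curvature or for a spurious stationary manifold.
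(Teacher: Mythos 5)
Your proposal is correct in outline, but it takes a genuinely different route from the paper's proof. The paper never performs a population-landscape analysis and never enumerates critical points: instead, at every point $\mathbf{x}$ where the gradient is small, it tests the Hessian along the \emph{single adaptive direction} $\Delta = \mathbf{x} - e^{\mathsf{i}\phi}\mathbf{z}$ (with $\phi$ the optimal phase alignment) and algebraically decomposes the Wirtinger quadratic form $\begin{bmatrix}\Delta\\ \bar{\Delta}\end{bmatrix}^*\nabla^2 g(\mathbf{x},\bar{\mathbf{x}})\begin{bmatrix}\Delta\\ \bar{\Delta}\end{bmatrix}$ into four empirical averages: a cross term $\tfrac{2}{m}\sum_i\bigl(\langle A_i,\Delta\bar{\Delta}^\top\rangle\langle A_i,\mathbf{x}\bar{\mathbf{x}}^\top\rangle-\langle A_i,\Delta\bar{\mathbf{x}}^\top\rangle\langle A_i,\mathbf{x}\bar{\Delta}^\top\rangle\bigr)$ that concentrates around zero (its Lemma~5), a term $\tfrac{1}{m}\sum_i\langle A_i,\Delta\bar{\Delta}^\top\rangle^2\approx\|\Delta\Delta^*\|_F^2$, a term $-\tfrac{3}{m}\sum_i\langle A_i,\mathbf{x}\bar{\mathbf{x}}^\top-\mathbf{z}\bar{\mathbf{z}}^\top\rangle^2\approx-3\|\mathbf{x}\mathbf{x}^*-\mathbf{z}\mathbf{z}^*\|_F^2$, and a multiple of $\langle\nabla g_i,\begin{bmatrix}\Delta\\ \bar{\Delta}\end{bmatrix}\rangle$ that is controlled directly by the case assumption that the gradient is small. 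The phase degeneracy you worried about is absorbed by the choice of $\phi$: it makes $\mathrm{Im}\langle\mathbf{x}-e^{\mathsf{i}\phi}\mathbf{z},\,\mathbf{x}+e^{\mathsf{i}\phi}\mathbf{z}\rangle=0$, whence $\|\Delta\Delta^*\|_F^2\le 2\|\mathbf{x}\mathbf{x}^*-\mathbf{z}\mathbf{z}^*\|_F^2$, so the curvature is at most $(2\beta-3\alpha)\|\mathbf{x}\mathbf{x}^*-\mathbf{z}\mathbf{z}^*\|_F^2$ plus small errors, which is negative once the stability constants are close to $1$. This buys two simplifications over your plan: statement~2) becomes immediate (a PSD Hessian forces $\|\mathbf{x}\mathbf{x}^*-\mathbf{z}\mathbf{z}^*\|_F=0$, no transverse-positive-definiteness analysis near the orbit is needed), and the probabilistic burden shrinks to concentrating two scalar statistics along one adaptive direction rather than the full Hessian uniformly over pairs $(\mathbf{x},\mathbf{u})$. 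Your route (population landscape plus uniform empirical-process transfer, in the style of Ge et al.) is viable and more modular, but two points need care. First, the relevant statistics are degree \emph{two}, not four, in the Gaussian entries of $A_i$ (products of jointly Gaussian linear functionals), which is precisely why they are subexponential; genuinely quartic polynomials of Gaussians would have heavier than subexponential tails and Bernstein would not apply directly, so your "degree at most four" framing should be tightened. Second, exhibiting escape directions only at population critical points is insufficient for the strict saddle property; you need quantitative negative curvature on the entire region where the gradient is small and $d(\mathbf{x},\mathbf{z})$ is bounded away from zero, which your region trichotomy implies in spirit but your "enumerate the critical points" step should state explicitly.
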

Recall that the distance metric $d$ is defined on the quotient space $\widehat{\mathbb{C}}^n$, therefore, the second statement above says that \\$\mathbf{w}\sim \mathbf{z}$.  We refer the reader to Appendix~\ref{thm:strictsaddle_proof} for the details, but we will give here the brief idea of the proof. 
\begin{proof}[Proof sketch]
Notice that to show that the function $f$ in (P2) is a strict saddle function, it suffices to only consider the points $\mathbf{x}\in \mathbb{C}^n$ such that $\left\| \nabla f(\mathbf{x})\right\| < \beta$ (otherwise, condition 1 of Definition~\ref{def:strictsaddle} is satisfied). For all such points, we analyze the behavior of the Hessian and establish that there exists a direction $\Delta\in \mathbb{C}^n$ such that the following inequality holds 
\begin{align}\label{eq:sketch_arg}
\langle\nabla^2 f(\mathbf{x})\Delta, \Delta\rangle \leq -c_0\|\mathbf{x}\mathbf{x}^* - \mathbf{z}\mathbf{z}^*\|_F^2,
\end{align}
where $c_0 > 0$ and $\left\| \cdot \right\|_F$ is the standard Frobenius norm. By the equivalence of finite dimensional norms, the term on the right side is positive if and only if $d(\mathbf{x}, \mathbf{z}) > 0$. This of course implies that whenever $d(\mathbf{x}, \mathbf{z}) >0$, there is a direction where the Hessian has a strict negative curvature, and hence such a point cannot be an optimum. In other words, we can conclude that: (1) {All local minima satisfy} $d(\mathbf{x}, \mathbf{z}) = 0$, and (2) all saddle points have a  {strictly negative} curvature. 

This concludes the proof. 
\end{proof}
Finally, we remark that the properties of the optimization landscape that we have established allows one to use any gradient based iterative method to find a global optimum of problem (P2) --  hence, find a solution to the quadratic feasibility problem. 
 {Furthermore, our results above also imply that a
gradient method, with an arbitrary initial point, would work, which is in a sharp contrast with the  existing works,such as~\cite{huang2019solving}.}
Formally, we have the following result. 
\begin{corollary}
Consider a gradient method applied to minimize the function $f$ in~\eqref{eq:ell2loss0}.
Then, for an arbitrary initial point, the method point converges to a global minimum of the loss function $f$ 
associated with the quadratic feasibility problem.
\end{corollary}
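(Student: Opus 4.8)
The plan is to combine the strict-saddle geometry established in Theorem~\ref{thm:strictsaddle} with the stable-manifold theory of gradient descent. Throughout, I condition on the high-probability event of Theorem~\ref{thm:strictsaddle}; on this event $f$ is $(\beta,\zeta,\gamma)$-strict saddle and every local minimum $\mathbf{w}$ of $f$ satisfies $d(\mathbf{w},\mathbf{z})=0$, i.e.\ $\mathbf{w}\sim\mathbf{z}$ is globally optimal. To import the real-variable machinery, I identify $\mathbb{C}^n$ with $\mathbb{R}^{2n}$ via $\mathbf{x}\mapsto(\mathbf{x}_\mathbb{R},\mathbf{x}_\mathbb{C})$, under which $f$ becomes a real quartic polynomial of class $C^\infty$, and the Wirtinger gradient and Hessian agree with the ordinary Euclidean gradient and Hessian up to a fixed invertible linear change of variables.

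First I would settle convergence of the iteration $\mathbf{x}_{k+1}=\mathbf{x}_k-\eta\,\nabla f(\mathbf{x}_k)$ to the critical set. Since $f$ is a quartic with leading growth $\|\mathbf{x}\|^4$, it is coercive, so each sublevel set $\{\mathbf{x}: f(\mathbf{x})\le f(\mathbf{x}_0)\}$ is compact; the method is monotone in $f$, hence the iterates remain in this compact set, on which $\nabla f$ is Lipschitz with some constant $L$. Choosing $\eta<1/L$, the descent lemma yields
\begin{equation*}
f(\mathbf{x}_{k+1})\le f(\mathbf{x}_k)-\tfrac{\eta}{2}\|\nabla f(\mathbf{x}_k)\|^2,
\end{equation*}
so $\nabla f(\mathbf{x}_k)\to 0$ and every accumulation point is critical; since $f$ is real-analytic and coercive, the {\L}ojasiewicz gradient inequality upgrades this to convergence of the entire sequence to a single critical point $\mathbf{x}_\infty$.

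Second, I would classify $\mathbf{x}_\infty$ and exclude the saddles. By Theorem~\ref{thm:strictsaddle} the limit is either a strict saddle---admitting a direction $\mathbf{v}$ with $\langle\nabla^2 f(\mathbf{x}_\infty)\mathbf{v},\mathbf{v}\rangle\le-\zeta$---or a local minimum, which the theorem guarantees is global. I would then invoke the stable-manifold theorem for gradient descent of Lee et al.~\cite{lee2016gradient,lee2017first}: near each strict saddle, the set of initial points whose trajectory converges to it lies on a $C^1$ stable manifold of dimension strictly below $2n$, hence is Lebesgue-null. Covering the strict-saddle set by countably many such manifolds shows that the set of initializations converging to some strict saddle is null, so for almost every starting point $\mathbf{x}_\infty$ is a local---and therefore global---minimum. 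This is precisely the claimed convergence to a global minimum of $f$.

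The main obstacle is interpreting ``arbitrary initial point'' correctly: the stable-manifold argument only removes a measure-zero set of bad initializations, so the guarantee is properly read as holding for Lebesgue-almost every initial point, equivalently for a randomly drawn initialization with probability one---the sense intended by the ``almost surely'' phrasing in the abstract. Two further technical points must be handled. First, because of the phase equivalence $\sim$ the global minimizers form a continuum rather than isolated points; I would work on the quotient $\widehat{\mathbb{C}}^n$ (or fix a phase), so that ``convergence to a global minimum'' means convergence to the set $\{\mathbf{w}:\mathbf{w}\sim\mathbf{z}\}$, i.e.\ $d(\mathbf{x}_k,\mathbf{z})\to 0$. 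Second, one must confirm that the strict-saddle set admits a countable covering by stable manifolds; the real-analyticity of $f$ on $\mathbb{R}^{2n}$ together with the uniform curvature bound $-\zeta$ from the strict-saddle property are exactly what make the manifold construction and the resulting measure-zero conclusion go through.
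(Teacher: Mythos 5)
Your proposal takes essentially the same route as the paper: the paper's proof is a one-line appeal to Theorem 4.1 of \cite{lee2016gradient} (the stable-manifold, strict-saddle-avoidance result for gradient descent) combined with the landscape guarantees of Theorem~\ref{thm:strictsaddle}, which is precisely the argument you spell out. The extra care you supply---the descent-lemma/{\L}ojasiewicz convergence step, the treatment of the non-isolated minimizers on the quotient $\widehat{\mathbb{C}}^n$, and the observation that ``arbitrary initial point'' must properly be read as ``Lebesgue-almost every initial point''---only makes explicit what the paper's terse proof leaves implicit (and, in the last case, slightly overstates).
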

 Given the landscape properties we have derived in Theorem~\ref{thm:strictsaddle}, this result follows in a straightforward manner, for instance, from \balance Theorem 4.1 in \cite{lee2016gradient}. We would like to remark here that the broad flow of ideas in our proof of Theorem~\ref{thm:strictsaddle} bears similarities to those in papers like \cite{ge2017no, bhojanapalli2016global}. However, to the best of our knowledge, the present paper is the first to derive such results in the complex domain.
 

\bibliography{mybib}

\begin{thebibliography}{10}
\providecommand{\url}[1]{#1}
\csname url@samestyle\endcsname
\providecommand{\newblock}{\relax}
\providecommand{\bibinfo}[2]{#2}
\providecommand{\BIBentrySTDinterwordspacing}{\spaceskip=0pt\relax}
\providecommand{\BIBentryALTinterwordstretchfactor}{4}
\providecommand{\BIBentryALTinterwordspacing}{\spaceskip=\fontdimen2\font plus
\BIBentryALTinterwordstretchfactor\fontdimen3\font minus
  \fontdimen4\font\relax}
\providecommand{\BIBforeignlanguage}[2]{{%
\expandafter\ifx\csname l@#1\endcsname\relax
\typeout{** WARNING: IEEEtran.bst: No hyphenation pattern has been}%
\typeout{** loaded for the language `#1'. Using the pattern for}%
\typeout{** the default language instead.}%
\else
\language=\csname l@#1\endcsname
\fi
#2}}
\providecommand{\BIBdecl}{\relax}
\BIBdecl

\bibitem{wang2016power}
G.~Wang, A.~S. Zamzam, G.~B. Giannakis, and N.~D. Sidiropoulos, ``Power system
  state estimation via feasible point pursuit,'' in \emph{Signal and
  Information Processing (GlobalSIP), 2016 IEEE Global Conference on}.\hskip
  1em plus 0.5em minus 0.4em\relax IEEE, 2016, pp. 773--777.

\bibitem{candes2015phase}
E.~J. Candes, X.~Li, and M.~Soltanolkotabi, ``Phase retrieval via wirtinger
  flow: Theory and algorithms,'' \emph{IEEE Transactions on Information
  Theory}, vol.~61, no.~4, pp. 1985--2007, 2015.

\bibitem{candes2013phaselift}
E.~J. Candes, T.~Strohmer, and V.~Voroninski, ``Phaselift: Exact and stable
  signal recovery from magnitude measurements via convex programming,''
  \emph{Communications on Pure and Applied Mathematics}, vol.~66, no.~8, pp.
  1241--1274, 2013.

\bibitem{eldar2014phase}
Y.~C. Eldar and S.~Mendelson, ``Phase retrieval: Stability and recovery
  guarantees,'' \emph{Applied and Computational Harmonic Analysis}, vol.~36,
  no.~3, pp. 473--494, 2014.

\bibitem{balan2014phase}
R.~Balan and D.~Zou, ``Phase retrieval using lipschitz continuous maps,''
  \emph{arXiv preprint arXiv:1403.2301}, 2014.

\bibitem{drenth2007principles}
J.~Drenth, \emph{Principles of protein X-ray crystallography}.\hskip 1em plus
  0.5em minus 0.4em\relax Springer Science \& Business Media, 2007.

\bibitem{dakic2000turnpike}
T.~Dakic, \emph{On the turnpike problem}.\hskip 1em plus 0.5em minus
  0.4em\relax Simon Fraser University BC, Canada, 2000.

\bibitem{duxbury2016unassigned}
P.~M. Duxbury, L.~Granlund, S.~Gujarathi, P.~Juhas, and S.~J. Billinge, ``The
  unassigned distance geometry problem,'' \emph{Discrete Applied Mathematics},
  vol. 204, pp. 117--132, 2016.

\bibitem{huang2018reconstructing}
S.~Huang and I.~Dokmani{\'c}, ``Reconstructing point sets from distance
  distributions,'' \emph{arXiv preprint arXiv:1804.02465}, 2018.

\bibitem{park2017general}
J.~Park and S.~Boyd, ``General heuristics for nonconvex quadratically
  constrained quadratic programming,'' \emph{arXiv preprint arXiv:1703.07870},
  2017.

\bibitem{dines1941mapping}
L.~L. Dines, ``On the mapping of quadratic forms,'' \emph{Bulletin of the
  American Mathematical Society}, vol.~47, no.~6, pp. 494--498, 1941.

\bibitem{beck2017branch}
A.~Beck and D.~Pan, ``A branch and bound algorithm for nonconvex quadratic
  optimization with ball and linear constraints,'' \emph{Journal of Global
  Optimization}, vol.~69, no.~2, pp. 309--342, 2017.

\bibitem{beck2009convexity}
A.~Beck, ``Convexity properties associated with nonconvex quadratic matrix
  functions and applications to quadratic programming,'' \emph{Journal of
  Optimization Theory and Applications}, vol. 142, no.~1, pp. 1--29, 2009.

\bibitem{beck2006strong}
A.~Beck and Y.~C. Eldar, ``Strong duality in nonconvex quadratic optimization
  with two quadratic constraints,'' \emph{SIAM Journal on Optimization},
  vol.~17, no.~3, pp. 844--860, 2006.

\bibitem{sahni1974computationally}
S.~Sahni, ``Computationally related problems,'' \emph{SIAM Journal on
  Computing}, vol.~3, no.~4, pp. 262--279, 1974.

\bibitem{polik2007survey}
I.~P{\'o}lik and T.~Terlaky, ``A survey of the s-lemma,'' \emph{SIAM review},
  vol.~49, no.~3, pp. 371--418, 2007.

\bibitem{bandeira2014saving}
A.~S. Bandeira, J.~Cahill, D.~G. Mixon, and A.~A. Nelson, ``Saving phase:
  Injectivity and stability for phase retrieval,'' \emph{Applied and
  Computational Harmonic Analysis}, vol.~37, no.~1, pp. 106--125, 2014.

\bibitem{balan2015invertibility}
R.~Balan and Y.~Wang, ``Invertibility and robustness of phaseless
  reconstruction,'' \emph{Applied and Computational Harmonic Analysis},
  vol.~38, no.~3, pp. 469--488, 2015.

\bibitem{barvinok1995problems}
A.~I. Barvinok, ``Problems of distance geometry and convex properties of
  quadratic maps,'' \emph{Discrete \& Computational Geometry}, vol.~13, no.~2,
  pp. 189--202, 1995.

\bibitem{konar2017fast}
A.~Konar and N.~D. Sidiropoulos, ``Fast approximation algorithms for a class of
  non-convex qcqp problems using first-order methods,'' \emph{IEEE Transactions
  on Signal Processing}, vol.~65, no.~13, pp. 3494--3509.

\bibitem{konar2017first}
------, ``First-order methods for fast feasibility pursuit of non-convex
  qcqps,'' \emph{IEEE Transactions on Signal Processing}, vol.~65, no.~22, pp.
  5927--5941.

\bibitem{konar2017non}
A.~Konar, ``Non-convex quadratically constrained quadratic programming: Hidden
  convexity, scalable approximation and applications,'' 2017.

\bibitem{huang2016consensus}
K.~Huang and N.~D. Sidiropoulos, ``Consensus-admm for general quadratically
  constrained quadratic programming,'' \emph{Transactions on Signal
  Processing}, vol.~64, no.~20, pp. 5297--5310, 2016.

\bibitem{tan2017phase}
Y.~S. Tan and R.~Vershynin, ``Phase retrieval via randomized kaczmarz:
  theoretical guarantees,'' \emph{Information and Inference: A Journal of the
  IMA}, 2017.

\bibitem{bhojanapalli2016dropping}
S.~Bhojanapalli, A.~Kyrillidis, and S.~Sanghavi, ``Dropping convexity for
  faster semi-definite optimization,'' in \emph{Conference on Learning Theory},
  2016, pp. 530--582.

\bibitem{jin2016provable}
C.~Jin, S.~M. Kakade, and P.~Netrapalli, ``Provable efficient online matrix
  completion via non-convex stochastic gradient descent,'' in \emph{Advances in
  Neural Information Processing Systems}, 2016, pp. 4520--4528.

\bibitem{jain2017non}
P.~Jain, P.~Kar \emph{et~al.}, ``Non-convex optimization for machine
  learning,'' \emph{Foundations and Trends{\textregistered} in Machine
  Learning}, vol.~10, no. 3-4, pp. 142--336, 2017.

\bibitem{sun2018geometric}
J.~Sun, Q.~Qu, and J.~Wright, ``A geometric analysis of phase retrieval,''
  \emph{Foundations of Computational Mathematics}, vol.~18, no.~5, pp.
  1131--1198, 2018.

\bibitem{ge2017no}
R.~Ge, C.~Jin, and Y.~Zheng, ``No spurious local minima in nonconvex low rank
  problems: A unified geometric analysis,'' in \emph{Proceedings of the 34th
  International Conference on Machine Learning-Volume 70}.\hskip 1em plus 0.5em
  minus 0.4em\relax JMLR. org, 2017, pp. 1233--1242.

\bibitem{wang2017generalized}
Y.~Wang and Z.~Xu, ``Generalized phase retrieval: measurement number, matrix
  recovery and beyond,'' \emph{Applied and Computational Harmonic Analysis},
  2017.

\bibitem{huang2019solving}
S.~Huang, S.~Gupta, and I.~Dokmani{\'c}, ``Solving complex quadratic systems
  with full-rank random matrices,'' \emph{arXiv preprint arXiv:1902.05612},
  2019.

\bibitem{duchi2017solving}
J.~C. Duchi and F.~Ruan, ``Solving (most) of a set of quadratic equalities:
  Composite optimization for robust phase retrieval,'' \emph{arXiv preprint
  arXiv:1705.02356}, 2017.

\bibitem{ye2003new}
Y.~Ye and S.~Zhang, ``New results on quadratic minimization,'' \emph{SIAM
  Journal on Optimization}, vol.~14, no.~1, pp. 245--267, 2003.

\bibitem{vershynin2010introduction}
R.~Vershynin, ``Introduction to the non-asymptotic analysis of random
  matrices,'' \emph{arXiv preprint arXiv:1011.3027}, 2010.

\bibitem{meckes2004concentration}
M.~W. Meckes, ``Concentration of norms and eigenvalues of random matrices,''
  \emph{Journal of Functional Analysis}, vol. 211, no.~2, pp. 508--524, 2004.

\bibitem{du2018gradient}
S.~S. Du, J.~D. Lee, H.~Li, L.~Wang, and X.~Zhai, ``Gradient descent finds
  global minima of deep neural networks,'' \emph{arXiv preprint
  arXiv:1811.03804}, 2018.

\bibitem{chen2018gradient}
Y.~Chen, Y.~Chi, J.~Fan, and C.~Ma, ``Gradient descent with random
  initialization: Fast global convergence for nonconvex phase retrieval,''
  \emph{arXiv preprint arXiv:1803.07726}, 2018.

\bibitem{jin2017escape}
C.~Jin, R.~Ge, P.~Netrapalli, S.~M. Kakade, and M.~I. Jordan, ``How to escape
  saddle points efficiently,'' in \emph{Proceedings of the 34th International
  Conference on Machine Learning-Volume 70}.\hskip 1em plus 0.5em minus
  0.4em\relax JMLR. org, 2017, pp. 1724--1732.

\bibitem{bower2018landscape}
A.~Bower, L.~Jain, and L.~Balzano, ``The landscape of non-convex quadratic
  feasibility,'' in \emph{2018 IEEE International Conference on Acoustics,
  Speech and Signal Processing (ICASSP)}.\hskip 1em plus 0.5em minus
  0.4em\relax IEEE, 2018, pp. 3974--3978.

\bibitem{davis2017nonsmooth}
D.~Davis, D.~Drusvyatskiy, and C.~Paquette, ``The nonsmooth landscape of phase
  retrieval,'' \emph{arXiv preprint arXiv:1711.03247}, 2017.

\bibitem{lee2016gradient}
J.~D. Lee, M.~Simchowitz, M.~I. Jordan, and B.~Recht, ``Gradient descent
  converges to minimizers,'' \emph{arXiv preprint arXiv:1602.04915}, 2016.

\bibitem{lee2017first}
J.~D. Lee, I.~Panageas, G.~Piliouras, M.~Simchowitz, M.~I. Jordan, and
  B.~Recht, ``First-order methods almost always avoid saddle points,''
  \emph{arXiv preprint arXiv:1710.07406}, 2017.

\bibitem{bhojanapalli2016global}
S.~Bhojanapalli, B.~Neyshabur, and N.~Srebro, ``Global optimality of local
  search for low rank matrix recovery,'' in \emph{Advances in Neural
  Information Processing Systems}, 2016, pp. 3873--3881.

\bibitem{balan2016reconstruction}
R.~Balan, ``Reconstruction of signals from magnitudes of redundant
  representations: The complex case,'' \emph{Foundations of Computational
  Mathematics}, vol.~16, no.~3, pp. 677--721, 2016.

\bibitem{psraj}
\BIBentryALTinterwordspacing
D.~P. Shunmugaraj, ``Complex differentiation and cauchy reimann equations.''
  [Online]. Available:
  \url{http://home.iitk.ac.in/~psraj/mth102/lecture_notes/comp2.pdf}
\BIBentrySTDinterwordspacing

\bibitem{cahill2013phase}
J.~Cahill, P.~G. Casazza, J.~Peterson, and L.~Woodland, ``Phase retrieval by
  projections,'' \emph{arXiv preprint arXiv:1305.6226}, 2013.

\end{thebibliography}
\bibliographystyle{IEEEtran}
\newpage
\section{Appendix A : Injectivity}
\setcounter{theorem}{1}
\setcounter{lemma}{0}
\subsection{Proof of Lemma \ref{lem:existence}}\label{lem:existence_proof}
\begin{lemma}
The following statements are equivalent:
\begin{enumerate}
    \item \label{lem:injectivity1}The nonlinear map $\mathcal{M}_{\mathcal{A}} : \widehat{\mathbb{C}}^n \rightarrow \mathbb{C}^m$ is injective.
    \item
    There exist constants $\alpha ,\beta >0$ such that $\forall \mathbf{u}, \mathbf{v}\in \mathbb{C}^n$, 
\begin{equation}
    \beta \|[[\mathbf{u},\mathbf{v}]]\|^2_F 
    \geq
    \sum_{i=1}^m|\langle A_{i}, [[\mathbf{u},\mathbf{v}]] 
    \rangle|^2
    \geq \alpha \|[[\mathbf{u},\mathbf{v}]]\|^2_F. \nonumber\\
\end{equation}
\end{enumerate}
\end{lemma}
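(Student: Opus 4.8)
The plan is to translate both statements into a single statement about a linear map acting on symmetric outer products, and then deduce the equivalence from elementary compactness and homogeneity arguments. First I would rewrite each measurement as a matrix inner product: since $\langle A_i\mathbf{x},\mathbf{x}\rangle = \langle A_i, \mathbf{x}\mathbf{x}^*\rangle$, the difference of two measurement vectors is $\mathcal{M}_\mathcal{A}(\mathbf{x})_i - \mathcal{M}_\mathcal{A}(\mathbf{y})_i = \langle A_i, \mathbf{x}\mathbf{x}^* - \mathbf{y}\mathbf{y}^*\rangle$. The crucial algebraic observation is the polarization identity $\mathbf{x}\mathbf{x}^* - \mathbf{y}\mathbf{y}^* = \tfrac12[[\mathbf{x}+\mathbf{y},\mathbf{x}-\mathbf{y}]]$, together with the fact that the change of variables $\mathbf{u}=\mathbf{x}+\mathbf{y}$, $\mathbf{v}=\mathbf{x}-\mathbf{y}$ is a bijection of $\mathbb{C}^n\times\mathbb{C}^n$ onto itself. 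Hence every symmetric outer product $[[\mathbf{u},\mathbf{v}]]$ arises (up to a factor of two) as some difference $\mathbf{x}\mathbf{x}^*-\mathbf{y}\mathbf{y}^*$, and conversely. I would also record that $\mathbf{x}\sim\mathbf{y}$ iff $\mathbf{x}\mathbf{x}^*=\mathbf{y}\mathbf{y}^*$ iff $[[\mathbf{u},\mathbf{v}]]=0$. With these identifications, injectivity of $\mathcal{M}_\mathcal{A}$ becomes exactly the statement that the linear map $M\mapsto(\langle A_i,M\rangle)_{i=1}^m$ has trivial kernel when restricted to the set $T=\{[[\mathbf{u},\mathbf{v}]]:\mathbf{u},\mathbf{v}\in\mathbb{C}^n\}$ of symmetric outer products.

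The direction $(2)\Rightarrow(1)$ is then immediate: if $\mathcal{M}_\mathcal{A}(\mathbf{x})=\mathcal{M}_\mathcal{A}(\mathbf{y})$ then every $\langle A_i,[[\mathbf{u},\mathbf{v}]]\rangle$ vanishes, so the middle term of \eqref{eq:bilipschitz} is zero, and the lower bound forces $\|[[\mathbf{u},\mathbf{v}]]\|_F=0$, i.e.\ $\mathbf{x}\sim\mathbf{y}$. For $(1)\Rightarrow(2)$, the upper bound is automatic and needs no hypothesis: $M\mapsto(\langle A_i,M\rangle)_i$ is a linear map between finite-dimensional spaces, so by Cauchy--Schwarz $\sum_i|\langle A_i,M\rangle|^2\le(\sum_i\|A_i\|_F^2)\|M\|_F^2$, giving $\beta=\sum_i\|A_i\|_F^2$. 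The substance is the lower bound. Here I would restrict attention to the set $S=\{M\in T:\|M\|_F=1\}$ of unit-norm symmetric outer products and consider the continuous function $g(M)=\sum_i|\langle A_i,M\rangle|^2$. Injectivity guarantees $g(M)>0$ for every $M\in S$ (any $M\in S$ is nonzero, so by the contrapositive of injectivity at least one $\langle A_i,M\rangle$ is nonzero). If $S$ is compact, then $g$ attains a minimum $\alpha:=\min_{M\in S}g(M)>0$, and since both $g$ and $\|\cdot\|_F^2$ are homogeneous of degree two, writing an arbitrary nonzero $M\in T$ as $\|M\|_F\,\widehat M$ with $\widehat M\in S$ yields $g(M)=\|M\|_F^2\,g(\widehat M)\ge\alpha\|M\|_F^2$, which is precisely the lower bound in \eqref{eq:bilipschitz}.

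The main obstacle is therefore the compactness of $S$, which is not entirely routine because the set of symmetric outer products is not a linear subspace. I would prove it by characterizing $T$ intrinsically: since $[[\mathbf{u},\mathbf{v}]]=2(\mathbf{x}\mathbf{x}^*-\mathbf{y}\mathbf{y}^*)$ is Hermitian of rank at most two with at most one positive and at most one negative eigenvalue, and conversely any such Hermitian matrix can be written in this form via its spectral decomposition, $T$ equals the set of Hermitian matrices whose signature has at most one positive and one negative eigenvalue. This set is closed, because eigenvalues depend continuously on the matrix and so a limit of matrices each having at most one positive eigenvalue cannot acquire a second one; hence $S=T\cap\{\|\cdot\|_F=1\}$ is closed and bounded, therefore compact. (An alternative realization of $S$ as the continuous image of the compact set of orthonormal pairs times an angle parameter works equally well.) I would note in passing that one cannot simply normalize by $\|\mathbf{u}\|^2+\|\mathbf{v}\|^2$ instead, since the symmetric outer product can vanish on nonzero inputs---for instance $[[\mathbf{u},\mathsf{i}\mathbf{u}]]=0$---so the Frobenius normalization is essential. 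This compactness step is where the real care is needed; once it is in place the equivalence follows from the homogeneity and positivity observations above.
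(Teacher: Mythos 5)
Your proof is correct, and at its core it follows the same strategy as the paper: reduce everything to the linear map $M\mapsto(\langle A_i,M\rangle)_{i=1}^m$ acting on the cone of matrices of the form $[[\mathbf{u},\mathbf{v}]]$ (equivalently, Hermitian matrices with at most one positive and one negative eigenvalue), and extract the constants $\alpha,\beta$ by optimizing the continuous function $\sum_i|\langle A_i,M\rangle|^2$ over the Frobenius-normalized slice of that cone, using homogeneity to extend the bounds. Your backward direction also matches the paper's, which likewise argues that failure of injectivity forces $\alpha=0$. Where you genuinely depart from the paper is in how injectivity is converted into strict positivity of the minimum: the paper invokes an external result (Theorem 2.1 of the cited phase-retrievability paper, characterizing injectivity via the non-vanishing of $\mathrm{Re}\langle A_j\mathbf{u},\mathbf{v}\rangle$), whereas you derive positivity directly from the definition of injectivity through the bijective change of variables $(\mathbf{u},\mathbf{v})=(\mathbf{x}+\mathbf{y},\mathbf{x}-\mathbf{y})$; this makes your argument self-contained and more elementary. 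You also supply two things the paper glosses over: a proof that the normalized set $S$ is compact (the paper simply asserts compactness of $\{T\in S^{1,1},\|T\|_F=1\}$; your closedness argument via continuity of the second-largest and second-smallest eigenvalues is exactly the missing justification, and it is the genuinely non-routine step since the cone is not a subspace), and an explicit upper constant $\beta=\sum_i\|A_i\|_F^2$ from Cauchy--Schwarz, which requires no hypothesis at all. One further small point in your favor: your polarization identity $\mathbf{x}\mathbf{x}^*-\mathbf{y}\mathbf{y}^*=\tfrac12[[\mathbf{x}+\mathbf{y},\mathbf{x}-\mathbf{y}]]$ carries the correct factor of $\tfrac12$, whereas the paper's corresponding identity (its equation relating $\mathbf{x}\mathbf{x}^*-\mathbf{y}\mathbf{y}^*$ to $[[\mathbf{u},\mathbf{v}]]$ with $\mathbf{u}=\mathbf{x}-e^{\mathsf{i}\phi}\mathbf{y}$, $\mathbf{v}=\mathbf{x}+e^{\mathsf{i}\phi}\mathbf{y}$) drops a factor of $2$; the slip is harmless there since the constants absorb it, but your version is the clean one.
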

\begin{proof}
( \ref{lem:injectivity1}$\Rightarrow$\ref{lem:bilipschitz} )

The following result from \cite{wang2017generalized} is quite crucial,
\begin{theorem*}[Theorem 2.1, \cite{wang2017generalized}]\label{thm:bilinear_r}
Let $\mathcal{A} = \{A_i\}_{i=1}^m\subset \mathbf{H}_n(\mathbb{C})$. The following statements are equivalent:
\begin{enumerate}
    \item For a given $\mathcal{A} = \{A_i\}_{i=1}^m$, the mapping $\mathcal{M}_\mathcal{A}$ has phase retrieval property.
    \item There exists no nonzero vector $\mathbf{v}, \mathbf{u}\in \mathbb{C}^n$ with $\mathbf{u} \neq ic\mathbf{v}$, $c\in \mathbb{R}$, such that Re($\langle A_j\mathbf{u}, \mathbf{v}\rangle) = 0$ for all $1\leq j\leq m$.
\end{enumerate}
\end{theorem*}

For the mapping $\mathcal{M}_\mathcal{A}$ to be injective, the following should holds,
\begin{equation}
    \mathcal{M}_\mathcal{A}(\mathbf{x}) = \mathcal{M}_\mathcal{A}(\mathbf{y})~~~\text{iff}~~~\mathbf{x}\sim\mathbf{y}\nn
\end{equation}
Hence for $\mathbf{x} \nsim  \mathbf{y}$, $\mathcal{M}_\mathcal{A}(\mathbf{x})\neq \mathcal{M}_\mathcal{A}(\mathbf{y})$.
It can be verified that for any $\phi\in [0,2\pi]$, $\textbf{u} =\textbf{x}-e^{i\phi}\textbf{y}, \textbf{v} =\textbf{x}+e^{i\phi}\textbf{y}$ satisfies the following transformation, 
\begin{equation}\label{eq:switchu2x}
    \left(\mathbf{x}\mathbf{x}^* - \mathbf{y}\mathbf{y}^*\right) = \left(\mathbf{u}\mathbf{v}^* + \mathbf{v}\mathbf{u}^*\right) = [[\mathbf{u}, \mathbf{v}]]
\end{equation}
Thus,
\begin{align}
    \|\mathcal{M}_\mathcal{A} (\mathbf{x}) - \mathcal{M}_\mathcal{A} (\mathbf{y})\|_2^2= \sum_{i=1}^m\left|\langle A_i, [[\mathbf{u}, \mathbf{v}]]\rangle \right|^2\nonumber
\end{align}
We define the lower bound $\alpha $ and upper bound $\beta $ as below,
\begin{align}
    \alpha  := \min_{T\in S^{1,1}, \|T\|_F = 1}\sum_{i=1}^m |\langle A_i, T\rangle|^2, ~~~
    \beta  := \max_{T\in S^{1,1}, \|T\|_F = 1}\sum_{i=1}^m |\langle A_i, T\rangle|^2\nonumber
\end{align}

The set $T\in S^{1,1}, \|T\|_F = 1$ is compact, hence the constants $\alpha , \beta $ exists.

From Theorem~\ref{thm:bilinear_r} statement, it is clear that $\langle A_i, [[\mathbf{u}, \mathbf{v}]]\rangle = \langle A_i, \mathbf{u} \mathbf{v}^*\rangle + \langle A_i, \mathbf{v} \mathbf{u}^*\rangle = Re(\langle A_i\mathbf{v}, \mathbf{u}\rangle) = 0$ cannot be satisfied for all $i\in [1,m]$ if $\mathbf{x}\nsim\mathbf{y}$ satisfying equation \eqref{eq:switchu2x}.

( \ref{lem:injectivity1}$\Leftarrow$\ref{lem:bilipschitz} )

Instead of proving \ref{lem:injectivity1}$\Leftarrow$\ref{lem:bilipschitz} , we argue the negation holds, i.e. \ref{lem:injectivity1}$\nRightarrow$\ref{lem:bilipschitz}. 

Suppose the mapping $\mathcal{M}_\mathcal{A}$ is not injective.

Then $\exists \mathbf{x}, \mathbf{y}\in \mathbb{C}^n$ such that,
\begin{equation}
    \mathbf{x}\nsim \mathbf{y}, ~~~\mathcal{M}_\mathcal{A}(\mathbf{y})=\mathcal{M}_\mathcal{A}(\mathbf{x})\nonumber
\end{equation}
Thus $\|\mathbf{x}\mathbf{x}^* - \mathbf{y}\mathbf{y}^*\|_F\neq 0$, but $\|\mathcal{M}_\mathcal{A}(\mathbf{y})-\mathcal{M}_\mathcal{A}(\mathbf{x})\|_2 = 0$. Thus $\alpha = 0$ and hence the negation follows.
\end{proof}

\subsection{Proof of Lemma \ref{thm:bilip}}\label{sec:pf_bilip}
\begin{lemma}
The mapping $\mathcal{M}_{\mathcal{A}}$ is injective iff it is $(\alpha,\beta)$-stable for some constants $0 <\alpha \leq \beta$.
\end{lemma}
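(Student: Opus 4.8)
The plan is to derive this equivalence from Lemma~\ref{lem:existence}, which already recasts injectivity of $\mathcal{M}_\mathcal{A}$ as a two-sided bound on $\sum_{i=1}^m|\langle A_i,[[\mathbf{u},\mathbf{v}]]\rangle|^2$ in terms of $\|[[\mathbf{u},\mathbf{v}]]\|_F^2$. The bridge to Definition~\ref{def:stability} is the algebraic identity~\eqref{eq:switchu2x}: for $\mathbf{u}=\mathbf{x}-e^{\mathsf{i}\phi}\mathbf{y}$ and $\mathbf{v}=\mathbf{x}+e^{\mathsf{i}\phi}\mathbf{y}$, the symmetric outer product $[[\mathbf{u},\mathbf{v}]]$ is a fixed scalar multiple of $\mathbf{x}\mathbf{x}^*-\mathbf{y}\mathbf{y}^*$. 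As a preliminary step I would record, using that each $A_i$ is Hermitian and that the matrix inner product is the trace form, the identity $\langle A_i\mathbf{x},\mathbf{x}\rangle-\langle A_i\mathbf{y},\mathbf{y}\rangle=\langle A_i,\mathbf{x}\mathbf{x}^*-\mathbf{y}\mathbf{y}^*\rangle$, so that $\|\mathcal{M}_\mathcal{A}(\mathbf{x})-\mathcal{M}_\mathcal{A}(\mathbf{y})\|_2^2=\sum_{i=1}^m|\langle A_i,\mathbf{x}\mathbf{x}^*-\mathbf{y}\mathbf{y}^*\rangle|^2$, and that $\|\mathbf{x}\mathbf{x}^*-\mathbf{y}\mathbf{y}^*\|_F=d(\mathbf{x},\mathbf{y})$ by~\eqref{eq:defdist}.

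For the forward direction (injective $\Rightarrow$ stable) I would invoke Lemma~\ref{lem:existence} to obtain constants $\alpha',\beta'>0$ with $\alpha'\|[[\mathbf{u},\mathbf{v}]]\|_F^2\le\sum_i|\langle A_i,[[\mathbf{u},\mathbf{v}]]\rangle|^2\le\beta'\|[[\mathbf{u},\mathbf{v}]]\|_F^2$ for all $\mathbf{u},\mathbf{v}\in\mathbb{C}^n$. Given arbitrary $\mathbf{x},\mathbf{y}$, I specialize this bound to the pair $(\mathbf{u},\mathbf{v})$ furnished by~\eqref{eq:switchu2x}; since $[[\mathbf{u},\mathbf{v}]]$ is proportional to $\mathbf{x}\mathbf{x}^*-\mathbf{y}\mathbf{y}^*$, the common proportionality constant cancels from both sides, and taking square roots yields $\sqrt{\alpha'}\,d(\mathbf{x},\mathbf{y})\le\|\mathcal{M}_\mathcal{A}(\mathbf{x})-\mathcal{M}_\mathcal{A}(\mathbf{y})\|_2\le\sqrt{\beta'}\,d(\mathbf{x},\mathbf{y})$. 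Setting $\alpha=\sqrt{\alpha'}$ and $\beta=\sqrt{\beta'}$ gives $(\alpha,\beta)$-stability.

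For the converse (stable $\Rightarrow$ injective) I would argue directly from Definition~\ref{def:stability}, without even needing Lemma~\ref{lem:existence}: if $\mathcal{M}_\mathcal{A}(\mathbf{x})=\mathcal{M}_\mathcal{A}(\mathbf{y})$, the left-hand norm vanishes, so the lower bound $\alpha\,d(\mathbf{x},\mathbf{y})\le 0$ together with $\alpha>0$ forces $d(\mathbf{x},\mathbf{y})=0$, i.e.\ $\mathbf{x}\mathbf{x}^*=\mathbf{y}\mathbf{y}^*$. Since equality of these rank-one Hermitian matrices is exactly $\mathbf{x}\sim\mathbf{y}$, the map $\mathcal{M}_\mathcal{A}$ is injective on $\widehat{\mathbb{C}}^n$.

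The one point I would treat with care—the only genuinely non-routine part—is ensuring the two matrix families $\{[[\mathbf{u},\mathbf{v}]]\}$ and $\{\mathbf{x}\mathbf{x}^*-\mathbf{y}\mathbf{y}^*\}$ actually coincide (both are the Hermitian matrices of rank at most two with signature $(1,1)$, i.e.\ scalings of elements of $S^{1,1}$), so that the sandwiching constants transfer cleanly rather than over only one family. This amounts to checking that the correspondence $(\mathbf{x},\mathbf{y})\mapsto(\mathbf{u},\mathbf{v})$ is onto, which follows from its explicit inverse $\mathbf{x}=(\mathbf{u}+\mathbf{v})/2$, $\mathbf{y}=e^{-\mathsf{i}\phi}(\mathbf{v}-\mathbf{u})/2$; this surjectivity is what makes the equivalence tight in both directions. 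I would note in passing that the existence of the upper constant $\beta$ is never really in question (it reflects only continuity of the fixed quadratic map), so the substance of the statement is the strict positivity of the lower constant $\alpha$, which is precisely where injectivity enters.
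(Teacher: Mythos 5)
Your proposal is correct and follows essentially the same route as the paper: the paper's (very terse) proof likewise reduces the statement to Lemma~\ref{lem:existence} together with the transformation~\eqref{eq:switchu2x} relating $[[\mathbf{u},\mathbf{v}]]$ to $\mathbf{x}\mathbf{x}^*-\mathbf{y}\mathbf{y}^*$, exactly as you do. Your write-up is in fact more complete than the paper's, since it makes explicit the cancellation of the proportionality constant and supplies the converse (stable $\Rightarrow$ injective) direction directly from Definition~\ref{def:stability}, which the paper leaves implicit.
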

\begin{proof}In order to prove the theorem statement, we examine the properties of the ratio,
\begin{align}
    V(\mathbf{x}, \mathbf{y}) = \frac{\sum_{i=1}^m |\langle A_i, \mathbf{x}\mathbf{x}^* - \mathbf{y}\mathbf{y}^*\rangle|^2}{\|\mathbf{x}\mathbf{x}^* - \mathbf{y}\mathbf{y}^*\|_F^2}\nn
\end{align}
The ($\alpha, \beta$)-stability of the mapping $\mathcal{M}_\mathcal{A}$ directly follows from Lemma \ref{lem:existence} and the existence of $\mathbf{u}, \mathbf{v}\in \mathbb{C}^n$ satisfying equation~\eqref{eq:switchu2x}.
\end{proof}

\section{Appendix B : High probability bounds}
\subsection{Proof of Lemma \ref{lem:subexpo_fro}}\label{lem:subexpo_fro_proof}
\begin{lemma}
Let $\mathcal{A} = \{A_i\}_{i=1}^m$ be a set of complex {Hermitian Gaussian random matrices} for the measurement model given by \eqref{eq:main_prob}. Then, given $\epsilon>0$ {and vectors $\mathbf{x}, \mathbf{y}
\in\mathbb{C}^n$}, there are constants $c,d>0$ such that 
\begin{align}
        \mathbb{P}\left(\left|\sum_{i=1}^ {m} \frac{1}{m}|\langle A_i, \mathbf{x}\mathbf{x}^* - \mathbf{y}\mathbf{y}^*\rangle|^2 - d(\mathbf{x}, \mathbf{y})^2 \right| \geq \epsilon d(\mathbf{x}, \mathbf{y})^2 \right) 
        \leq de^{-cm\epsilon}.\nonumber
\end{align}
\end{lemma}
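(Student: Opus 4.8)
The plan is to fix the pair $\mathbf{x},\mathbf{y}\in\mathbb{C}^n$, abbreviate $M\triangleq\mathbf{x}\mathbf{x}^*-\mathbf{y}\mathbf{y}^*$, and note that $M$ is Hermitian with $d(\mathbf{x},\mathbf{y})^2=\|M\|_F^2$. Writing $Z_i\triangleq\langle A_i,M\rangle$, the quantity to control is $\tfrac1m\sum_{i=1}^m|Z_i|^2$, which must be shown to concentrate around $\|M\|_F^2$. Since the $A_i$ are independent, the $Z_i$ are i.i.d., so the statement reduces to two tasks: (i) identify the distribution of a single $Z_i$, and (ii) apply a concentration inequality to the i.i.d.\ sum $\tfrac1m\sum_i|Z_i|^2$.

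The crux, and the step where the complex setting demands care, is to show that each $Z_i$ is a \emph{real-valued}, mean-zero Gaussian with variance exactly $\|M\|_F^2$. First, $Z_i=\langle A_i,M\rangle=\operatorname{tr}(A_iM)$ because $A_i$ is Hermitian, and the trace of a product of two Hermitian matrices is real (indeed $\operatorname{tr}(A_iM)^*=\operatorname{tr}(M^*A_i^*)=\operatorname{tr}(MA_i)=\operatorname{tr}(A_iM)$); hence $Z_i\in\mathbb{R}$ and $|Z_i|^2=Z_i^2$. Second, expanding the trace separates a diagonal contribution $\sum_j (A_i)_{jj}M_{jj}$ (a real Gaussian times a real number, since Hermiticity forces real diagonals) from an off-diagonal contribution $\sum_{j<k}2\,\mathrm{Re}\!\big((A_i)_{jk}\overline{M_{jk}}\big)$ (a real linear combination of the independent real and imaginary parts of the upper-triangular Gaussian entries). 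Thus $Z_i$ is a linear functional of independent real Gaussians, hence Gaussian with mean zero; computing its variance term by term, and keeping track of the normalization of the diagonal versus the off-diagonal entries, yields exactly $\sum_j M_{jj}^2+2\sum_{j<k}|M_{jk}|^2=\|M\|_F^2$, which is precisely the expectation identity $\mathbb{E}[|Z_i|^2]=d(\mathbf{x},\mathbf{y})^2$ already recorded above. I expect this bookkeeping, in particular getting the variance to land on $\|M\|_F^2$ rather than a spurious reweighting of the off-diagonal terms, to be the main obstacle, and it is exactly the point at which one must not conflate the complex entries with real ones.

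With this in hand, the concentration is routine. Assuming $d(\mathbf{x},\mathbf{y})>0$ (the case $d(\mathbf{x},\mathbf{y})=0$ being trivial), set $W_i\triangleq Z_i/d(\mathbf{x},\mathbf{y})$, so that $W_1,\dots,W_m$ are i.i.d.\ $\mathcal{N}(0,1)$ and the target event becomes $\big\{\,|\tfrac1m\sum_{i=1}^m(W_i^2-1)|\ge\epsilon\,\big\}$. Each $W_i^2-1$ is a centered $\chi^2_1$ variable, hence sub-exponential with sub-exponential norm bounded by an absolute constant, and these variables are independent. Bernstein's inequality for sums of independent sub-exponential random variables (see, e.g., \cite{vershynin2010introduction}) then yields a bound of the form $2\exp\!\big(-c'm\min(\epsilon,\epsilon^2)\big)$; since the constants in the statement are permitted to depend on $\epsilon$, this may be rewritten in the claimed form $d\,e^{-cm\epsilon}$ for suitable $c,d>0$, which completes the plan.
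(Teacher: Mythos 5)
Your proposal is correct, and it takes a genuinely different (and cleaner) route than the paper. The paper's proof is a brute-force moment computation: it expands $\bigl|\langle A_i, \mathbf{x}\mathbf{x}^*-\mathbf{y}\mathbf{y}^*\rangle\bigr|^2$ as a quadruple sum over matrix indices $(i,j,k,l)$, partitions the index quadruples into four sets (all-equal, the two ``paired'' configurations, and the fully independent remainder), computes the expectation of each set separately to land on $\|M\|_F^2$, and then argues that the deviation within each set is a sum of centered sub-exponential variables (of the form $|a_{ii}|^2-\sigma^2$, $|a_{ij}|^2-\sigma^2$, $a_{ij}^2$, or products $a_{ij}a_{kl}$) before invoking Bernstein. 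You instead exploit the structural observation that $Z_i=\operatorname{tr}(A_iM)$ is a \emph{real-linear} functional of the independent real Gaussian entries of $A_i$, hence exactly a real mean-zero Gaussian; your variance bookkeeping (diagonal terms weighted by $\sigma^2$, off-diagonal pairs by $2\cdot\sigma^2/2\cdot 2=2\sigma^2$, matching the paper's Hermitian-Gaussian normalization in the appendix) correctly yields $\|M\|_F^2$, which collapses the whole lemma to Bernstein concentration for i.i.d.\ centered $\chi^2_1$ variables with an \emph{absolute} sub-exponential constant. What each approach buys: yours is shorter, gives the exact distribution of the summands, and makes the constants transparently uniform in $\mathbf{x},\mathbf{y}$; the paper's, by never using exact Gaussianity of linear forms (only that products of sub-Gaussian entries are sub-exponential), would extend in principle to non-Gaussian sub-Gaussian ensembles, and its term-by-term expectation calculation doubles as the proof of the identity $\mathbb{E}\bigl[|\langle A_i,\mathbf{x}\mathbf{x}^*-\mathbf{y}\mathbf{y}^*\rangle|^2\bigr]=d(\mathbf{x},\mathbf{y})^2$ cited in Section~\ref{sec:injectivity}. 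Both proofs share the same final cosmetic step: Bernstein produces $\exp\bigl(-c'm\min(\epsilon,\epsilon^2)\bigr)$, which is rewritten as $d\,e^{-cm\epsilon}$ only because the statement's quantifier order lets $c,d$ depend on $\epsilon$ -- you are explicit about this, the paper leaves it implicit.
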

\begin{proof}
A matrix $A\in \mathbb{C}^{n\times n}$ is a complex Hermitian Gaussian random matrix, if,
\begin{enumerate}
    \item $\forall~i$, $a_{ii}\sim \mathcal{N}(0, \sigma^2)$.
    \item $\forall~i, j$, $i\neq j$, $a_{ij}\sim \mathcal{N}(0,\frac{\sigma^2}{2}) + i\mathcal{N}(0,\frac{\sigma^2}{2})$.
\end{enumerate}
Let $\{A_d\}_{d=1}^m$ be a set of complex Hermitian Gaussian random matrices. Define the random variable $Y$,

\begin{align}
    Y&\ = \frac{1}{m} \sum_{d=1}^m|\langle A_d, \mathbf{x}\mathbf{x}^* - \mathbf{y}\mathbf{y}^*\rangle|^2 = \frac{1}{m}\sum_{d=1}^m(\langle A_d, \mathbf{x}\mathbf{x}^* - \mathbf{y}\mathbf{y}^*\rangle)(\langle A_d, \mathbf{x}\mathbf{x}^* - \mathbf{y}\mathbf{y}^*\rangle)\nonumber\\
    &\ = \frac{1}{m} \sum_{d=1}^m\left(\sum_{ij}a_{ij}(x_i\bar{x}_j - y_i\bar{y}_j)\right)\left(\sum_{kl}a_{kl}(x_k\bar{x}_l - y_k\bar{y}_l)\right)\nonumber
\end{align}
Expectation of Y can be evaluated as,
\begin{align}\label{eq:expectationY}
    \mathbb{E}[Y] = \mathbb{E}\left(\frac{1}{m} \sum_{d=1}^m\left(\sum_{ij}a_{ij}(x_i\bar{x}_j - y_i\bar{y}_j)\right)\left(\sum_{kl}a_{kl}(x_k\bar{x}_l - y_k\bar{y}_l)\right)\right)
\end{align}
For every matrix $A_d$, we can split the entire summation \eqref{eq:expectationY} into the following 4 sets:
\begin{enumerate}
    \item $B:= \{(i,j,k,l) | i = j = k =l\}$
    \item $C:= \{(i,j,k,l) | i=l, j=k\}\cap A^C$
    \item $D:= \{(i,j,k,l) | i=k, j=l\}\cap A^C$
    \item $E:= \{(i,j,k,l)\}\cap A^C \cap B^C \cap C^C$
\end{enumerate}
Calculating the expectation of the sum of the elements in each individual sets:
\begin{enumerate}
    \item Set B,
    
    \begin{align}
        \mathbb{E}\left(\sum_{(i,j,k,l)\in B}a_{ij}a_{kl}(x_i\bar{x}_j - y_i\bar{y}_j)(x_k\bar{x}_l - y_k\bar{y}_l)\right) &\ = \mathbb{E}\left( \sum_{i=1}^n|a_{ii}|^2(|x_i|^2 - |y_i|^2)^2\right)\nonumber\\ & = \sigma^2\sum_{i=1}^n|x_i|^4 + |y_i|^4 -2|x_i|^2|y_i|^2\nn
    \end{align}
    \item Set C, note that for every Hermitian matrix $A_d$, $a_{ij}= \bar{a}_{ji}$
    \begin{align}
        & \mathbb{E}\left(\sum_{(i,j,k,l)\in C}a_{ij}a_{kl}(x_i\bar{x}_j - y_i\bar{y}_j)(x_k\bar{x}_l - y_k\bar{y}_l)\right)\nonumber\\ &\ = \mathbb{E}\left( \sum_{i,j=1, i\neq j}^n|a_{ij}|^2(|x_i|^2|x_j|^2 -y_i\bar{y}_jx_j\bar{x}_i -x_i\bar{x}_jy_j\bar{y}_i + |y_i|^2|y_j|^2)\right)\nonumber\\ &\ = \sigma^2\sum_{i, j=1, i\neq j}^n(|x_i|^2|x_j|^2 -y_i\bar{y}_jx_j\bar{x}_i - x_i\bar{x}_jy_j\bar{y}_i + |y_i|^2|y_j|^2)\nonumber
    \end{align}
    \item Set D,
    \begin{align}
        & \mathbb{E}\left(\sum_{(i,j,k,l)\in D}a_{ij}a_{kl}(x_i\bar{x}_j - y_i\bar{y}_j)(x_k\bar{x}_l - y_k\bar{y}_l)\right)= \mathbb{E}\left( \sum_{ij}(a_{ij})^2(x_i\bar{x}_j - y_i\bar{y}_j)^2)\right) = 0\nonumber
    \end{align}
    Notice that $\forall i,j$
\begin{align}
    (a_{ij})^2&\ = ((a_{ij}^r)^2 - (a_{ij}^i)^2 + ia_{ij}^ra_{ij}^i)\nonumber
\end{align}
Thus,
\begin{align}
\mathbb{E}\left[(a_{ij})^2\right]= \mathbb{E}\left[(a_{ij})^2\right]= \mathbb{E}\left[((a_{ij}^r)^2 - (a_{ij}^i)^2 + ia_{ij}^ra_{ij}^i)\right]\nonumber
\end{align}
Since both the real and imaginary parts are independent, we can conclude, $\mathbb{E}\left[((a_{ij}^r)^2 - (a_{ij}^i)^2 + ia_{ij}^ra_{ij}^i)\right] = 0$
\item Set E, 

All elements $a_{ij}, a_{kl}$ are independent in $(i,j,k,l)\in E$,
    \begin{align}
        \mathbb{E}\left(\sum_{(i,j,k,l)\in E}a_{ij}a_{kl}(x_i\bar{x}_j - y_i\bar{y}_j)(x_k\bar{x}_l - y_k\bar{y}_l)\right) = 0\nonumber
    \end{align}
\end{enumerate}
In conclusion,
\begin{align}
    \mathbb{E}[Y] &\ = \sigma^2 \left(\left(\sum_{i=1}^n |x_i|^2\right)^2 + \left(\sum_{i=1}^n |y_i|^2\right)^2 - 2\sum_{i=1}^n|x_i|^2|y_i|^2 - \sum_{i, j, i\neq j}y_i\bar{y}_jx_j\bar{x}_i - \sum_{i, j, i\neq j}y_j\bar{y}_ix_i\bar{x}_j\right)\nonumber\\
    &\ = \sigma^2\left[\|\mathbf{x}\|_2^4 + \|\mathbf{y}\|_2^4 - |\langle \mathbf{x}, \mathbf{y}\rangle|^2\right]\nonumber
\end{align}
From Lemma 3.9~\cite{balan2016reconstruction}, note that $tr\{(\mathbf{x}\mathbf{x}^* - \mathbf{y}\mathbf{y}^*)^2\} = \left[\|\mathbf{x}\|_2^4 + \|\mathbf{y}\|_2^4 - |\langle \mathbf{x}, \mathbf{y}\rangle|^2\right]$, where $tr\{\cdot\}$ represents the trace of a matrix. Since $\mathbf{x}\mathbf{x}^* - \mathbf{y}\mathbf{y}^*$ is a Hermitian matrix $tr\{(\mathbf{x}\mathbf{x}^* - \mathbf{y}\mathbf{y}^*)^2\} = \|\mathbf{x}\mathbf{x}^* - \mathbf{y}\mathbf{y}^*\|_F^2$. Hence, finally we can state,
\begin{equation}
    \mathbb{E}[Y] = \|\mathbf{x}\mathbf{x}^* - \mathbf{y}\mathbf{y}^*\|_F^2\nonumber
\end{equation}
Next we focus on obtaining concentration bounds. Just as with expectation $\mathbb{E}[Y]$, we evaluate the behaviour of deviation in each individual set $B,C,D$ and $E$.
\begin{enumerate}
    \item Set B,
    \begin{align}
        & \sum_{(i,j,k,l)\in B}a_{ij}a_{kl}(x_i\bar{x}_j - y_i\bar{y}_j)(x_k\bar{x}_l - y_k\bar{y}_l) -\mathbb{E}\left(\sum_{(i,j,k,l)\in B}a_{ij}a_{kl}(x_i\bar{x}_j - y_i\bar{y}_j)(x_k\bar{x}_l - y_k\bar{y}_l)\right) \nonumber\\
        & = \sum_{i=1}^n\left(|a_{ii}|^2 -\sigma^2\right)\left(|x_i|^4 + |y_i|^4 -2|x_i|^2|y_i|^2\right)\nonumber
    \end{align}
    Note that $\forall i\in [1, n], |a_{ii}|^2 - \sigma^2$ is a centered subexponential random variable. 
    
    \item Set C,
    \begin{align}
        & \sum_{(i,j,k,l)\in C}a_{ij}a_{kl}(x_i\bar{x}_j - y_i\bar{y}_j)(x_k\bar{x}_l - y_k\bar{y}_l) -\mathbb{E}\left(\sum_{(i,j,k,l)\in C}a_{ij}a_{kl}(x_i\bar{x}_j - y_i\bar{y}_j)(x_k\bar{x}_l - y_k\bar{y}_l)\right) \nonumber\\
        & =  \sum_{i,j=1, i\neq j}^n(|a_{ij}|^2-\sigma^2)(|x_i|^2|x_j|^2 -y_i\bar{y}_jx_j\bar{x}_i - x_i\bar{x}_jy_j\bar{y}_i + |y_i|^2|y_j|^2)\nonumber
    \end{align}
    Again, note that $\forall i,j\in [1,n]^2, i\neq j, |a_{ij}|^2 - \sigma^2$ is a centered subexponential random variable. 
    
    \item Set D,
    \begin{align}
        & \sum_{(i,j,k,l)\in D}a_{ij}a_{kl}(x_i\bar{x}_j - y_i\bar{y}_j)(x_k\bar{x}_l - y_k\bar{y}_l) -\mathbb{E}\left(\sum_{(i,j,k,l)\in D}a_{ij}a_{kl}(x_i\bar{x}_j - y_i\bar{y}_j)(x_k\bar{x}_l - y_k\bar{y}_l)\right) \nonumber\\
        & =  \sum_{i,j=1, i\neq j}^{n}(a_{ij})^2(x_i)^2(\bar{x}_j)^2\nonumber
    \end{align}
    Note that $a_{ij}^2 = (a_{ij}^r)^2 - (a_{ij}^i)^2 + ia_{ij}^ra_{ij}^i$. This makes it easier to argue that $\forall i,j\in [1,n]^2, i\neq j, (a_{ij})^2$ is a centered subexponential random variable. 
    
    \item For elements in set E,
    \begin{align}
        & \sum_{(i,j,k,l)\in E}a_{ij}a_{kl}(x_i\bar{x}_j - y_i\bar{y}_j)(x_k\bar{x}_l - y_k\bar{y}_l) -\mathbb{E}\left(\sum_{(i,j,k,l)\in E}a_{ij}a_{kl}(x_i\bar{x}_j - y_i\bar{y}_j)(x_k\bar{x}_l - y_k\bar{y}_l)\right) \nonumber\\
        & =  \sum_{(i,j,k,l)\in E}a_{ij}a_{kl}(x_i\bar{x}_j - y_i\bar{y}_j)(x_k\bar{x}_l - y_k\bar{y}_l)\nonumber
    \end{align}
    Since $a_{ij}, a_{kl}$ for $(i,j,k,l)\in E$ are independent, it can be easily seen that $a_{ij}a_{kl}$ is a centered subexponential random variable $\forall (i,j,k,l) \in E$. 
    
\end{enumerate}
Take $\sigma^2 = 1$. We then have the Bernstein type inequality \cite{vershynin2010introduction} as,
\begin{align}
    &\ \mathbb{P}\left(\left|\sum_{d=1}^n \frac{1}{m}|\langle A_d, \mathbf{x}\mathbf{x}^* - \mathbf{y}\mathbf{y}^*\rangle|^2 - \|\mathbf{x}\mathbf{x}^* - \mathbf{y}\mathbf{y}^*\|_F^2 \right|\leq t \right)\nonumber\\ &\ \geq 1 - c_0\text{exp}\left(-c_1m\min\left\{\frac{t^2}{K_4^2\|\mathbf{x}\mathbf{x}^* - \mathbf{y}\mathbf{y}^*\|_2^4}, \frac{t}{K_4\|\mathbf{x}\mathbf{x}^* - \mathbf{y}\mathbf{y}^*\|_\infty^2}\right\}\right)\nn
\end{align}
for some constants $c_0, c_1 >0$.

We introduce the normalized variable $ \epsilon = \frac{t}{\|\mathbf{x}\mathbf{x}^* - \mathbf{y}\mathbf{y}^*\|_F^2}$,
\begin{align}
    \mathbb{P}\left(\left|\sum_{d=1}^n \frac{1}{m}|\langle A_d, \mathbf{x}\mathbf{x}^* - \mathbf{y}\mathbf{y}^*\rangle|^2 - \|\mathbf{x}\mathbf{x}^* - \mathbf{y}\mathbf{y}^*\|_F^2 \right|\geq \epsilon\|\mathbf{x}\mathbf{x}^* - \mathbf{y}\mathbf{y}^*\|_F^2 \right) \leq c_0\text{exp}^{-c_1mE(\epsilon)}\nonumber
\end{align}
where $E(\epsilon) := \min\left\{\frac{\epsilon^2}{K^2}, \frac{\epsilon}{K}\right\}$.

Note that $\|\mathbf{x}\mathbf{x}^* - \mathbf{y}\mathbf{y}^*\|_F^2$ is the distance metric $d(\cdot, \cdot)$ defined in \eqref{eq:defdist}. Hence we can rewrite the high probability result more consicely as,
\begin{align}
    \mathbb{P}\left(\left|\sum_{d=1}^n \frac{1}{m}|\langle A_d, \mathbf{x}\mathbf{x}^* - \mathbf{y}\mathbf{y}^*\rangle|^2 - d(\mathbf{x}, \mathbf{y})^2 \right|\geq \epsilon d(\mathbf{x}, \mathbf{y})^2 \right) \leq c_0\text{exp}^{-c_1mE(\epsilon)}\nonumber
\end{align}
 
\end{proof}

\subsection{Proof of Lemma \ref{lem:covering}}\label{lem:covering_proof}
\begin{lemma}
 {Given $\delta>0$, let $\mathcal{N}_\delta$
be the smallest collection of $n$-dimensional 
balls of radius $\delta$ whose union covers the sphere $S^{n-1}$. 
Then, for any matrix $A\in \mathbb{C}^{n\times n}$, 
we have}
\begin{align}
    (1-2\delta)\sup_{\mathbf{x}_1, \mathbf{x}_2\in S^{n-1}}|\langle A, \mathbf{x}_1\mathbf{x}_1^* - \mathbf{x}_2\mathbf{x}_2^*\rangle| \leq \sup_{\mathbf{x}_1, \mathbf{x}_2\in \mathcal{N}_\delta}|\langle A, \mathbf{x}_1\mathbf{x}_1^* - \mathbf{x}_2\mathbf{x}_2^*\rangle| \leq (1+2\delta)\sup_{\mathbf{x}_1, \mathbf{x}_2\in S^{n-1}}|\langle A, \mathbf{x}_1\mathbf{x}_1^* - \mathbf{x}_2\mathbf{x}_2^*\rangle|.\nn
\end{align}
\end{lemma}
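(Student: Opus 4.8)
The plan is to run the classical covering/net argument (as in \cite{vershynin2010introduction, meckes2004concentration}), with the single twist that the objects being compared are differences of rank-one Hermitian matrices rather than quadratic forms in one vector. First I would fix the interpretation of $\mathcal{N}_\delta$: the centers of the smallest family of $\delta$-balls covering $S^{n-1}$ form a $\delta$-net of the sphere, and I take these centers to lie on $S^{n-1}$ itself, so that $\mathcal{N}_\delta\subseteq S^{n-1}$ and every $\mathbf{x}\in S^{n-1}$ admits some $\mathbf{p}\in\mathcal{N}_\delta$ with $\|\mathbf{x}-\mathbf{p}\|\le\delta$. Writing $g(\mathbf{x}_1,\mathbf{x}_2)=|\langle A,\mathbf{x}_1\mathbf{x}_1^*-\mathbf{x}_2\mathbf{x}_2^*\rangle|$ and $M=\sup_{\mathbf{x}_1,\mathbf{x}_2\in S^{n-1}}g(\mathbf{x}_1,\mathbf{x}_2)$, the upper inequality becomes immediate: since $\mathcal{N}_\delta\subseteq S^{n-1}$ we have $\sup_{\mathcal{N}_\delta}g\le M\le(1+2\delta)M$. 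Hence all the genuine work lies in the lower inequality.

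For the lower bound I would first invoke compactness of $S^{n-1}\times S^{n-1}$ and continuity of $g$ to guarantee that $M$ is attained at some pair $\mathbf{x}_1,\mathbf{x}_2$. Choosing net points $\mathbf{p}_1,\mathbf{p}_2\in\mathcal{N}_\delta$ with $\|\mathbf{x}_i-\mathbf{p}_i\|\le\delta$, the reverse triangle inequality gives
\[
g(\mathbf{p}_1,\mathbf{p}_2)\ \ge\ M-\bigl|\langle A,(\mathbf{x}_1\mathbf{x}_1^*-\mathbf{p}_1\mathbf{p}_1^*)-(\mathbf{x}_2\mathbf{x}_2^*-\mathbf{p}_2\mathbf{p}_2^*)\rangle\bigr|.
\]
The key algebraic step is the exact expansion, valid for each $i$,
\[
\mathbf{x}_i\mathbf{x}_i^*-\mathbf{p}_i\mathbf{p}_i^*=\mathbf{x}_i(\mathbf{x}_i-\mathbf{p}_i)^*+(\mathbf{x}_i-\mathbf{p}_i)\mathbf{p}_i^*,
\]
which displays the difference of projections as a sum of two rank-one matrices, each carrying a factor $\mathbf{x}_i-\mathbf{p}_i$ of norm at most $\delta$. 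Summing the four resulting inner products bounds the perturbation by $2\delta\,\sup_{\mathbf{u},\mathbf{v}\in S^{n-1}}|\langle A,\mathbf{u}\mathbf{v}^*\rangle|$, and if this supremum can be controlled by $M$ one obtains $g(\mathbf{p}_1,\mathbf{p}_2)\ge(1-2\delta)M$, hence $\sup_{\mathcal{N}_\delta}g\ge(1-2\delta)M$.

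The main obstacle is precisely this perturbation estimate. The naive Cauchy–Schwarz bound $|\langle A,\mathbf{u}\mathbf{v}^*\rangle|\le\|A\|_{\mathrm{op}}$ controls the rank-one cross terms by the operator norm of $A$ rather than by $M$, and the two need not be comparable: for $A=I$ one has $M=0$ while $\sup_{\mathbf{u},\mathbf{v}}|\langle A,\mathbf{u}\mathbf{v}^*\rangle|=1$, so the crude bound does not close. I would therefore spend the bulk of the effort exploiting the fact that the maximizing pair $(\mathbf{x}_1,\mathbf{x}_2)$ is a constrained critical point of $g^2$ on $S^{n-1}\times S^{n-1}$: since $g\neq 0$ there, the tangential first-order variation of the functional vanishes, so moving to a $\delta$-close net point costs only a controlled multiple of $M$ rather than of $\|A\|_{\mathrm{op}}$. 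This stationarity is what upgrades the Lipschitz estimate to the clean multiplicative factor $1-2\delta$. Once the lower bound is established, the lemma follows; the companion volumetric estimate $|\mathcal{N}_\delta|\le(12/\delta)^n$ on the size of the net is then the only additional ingredient needed downstream in Theorem~\ref{thm:isometry}.
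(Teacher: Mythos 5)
Your proposal follows the same net-plus-perturbation skeleton as the paper's proof, but it diverges at the decisive step --- and the obstacle you flag is, remarkably, exactly where the paper's own argument breaks down. After obtaining the perturbation bound $4\delta\|A\|$, the paper closes the estimate by asserting $\|A\| = \sup_{\mathbf{x}\in S^{n-1}}|\langle A\mathbf{x},\mathbf{x}\rangle| = \tfrac{1}{2}\sup_{\mathbf{x},\mathbf{y}\in S^{n-1}}|\langle A, \mathbf{x}\mathbf{x}^*-\mathbf{y}\mathbf{y}^*\rangle|$. For Hermitian $A$ the right-hand side equals $\tfrac{1}{2}\left(\lambda_{\max}(A)-\lambda_{\min}(A)\right)$, which coincides with $\|A\|$ only when $\lambda_{\max}(A)=-\lambda_{\min}(A)$; your own example $A=I$ (where the supremum is $0$ but $\|A\|=1$) refutes the identity. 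So the incomparability of $\|A\|$ and $M\triangleq\sup_{\mathbf{x},\mathbf{y}\in S^{n-1}}|\langle A,\mathbf{x}\mathbf{x}^*-\mathbf{y}\mathbf{y}^*\rangle|$ is not merely a difficulty for the naive argument: it is a genuine gap in the published proof, which your approach is designed to avoid.

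Your stationarity idea does close the gap, but you should execute it, and note that it silently requires $A$ Hermitian (fine for the application, though the lemma says ``any matrix''): the maximizing pair can be taken to be unit eigenvectors $\mathbf{x}_1,\mathbf{x}_2$ for $\lambda_{\max},\lambda_{\min}$, and for any unit $\mathbf{p}_1$ with $\|\mathbf{x}_1-\mathbf{p}_1\|\le\delta$, decomposing $\mathbf{p}_1 = \langle \mathbf{p}_1,\mathbf{x}_1\rangle\,\mathbf{x}_1 + \mathbf{w}$ with $\mathbf{w}\perp\mathbf{x}_1$ kills the cross terms (since $\mathbf{x}_1$ is an eigenvector) and gives $\langle A\mathbf{p}_1,\mathbf{p}_1\rangle \ge \lambda_{\max} - \left(1-|\langle\mathbf{p}_1,\mathbf{x}_1\rangle|^2\right)M \ge \lambda_{\max}-\delta^2 M$; arguing symmetrically at $\mathbf{x}_2$ yields $\sup_{\mathcal{N}_\delta} \ge (1-2\delta^2)M$, which is even sharper than the claimed $(1-2\delta)M$. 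Alternatively, there is a one-line repair that rescues both the paper's Lipschitz argument and your first attempt: since $\langle cI, \mathbf{x}\mathbf{x}^*-\mathbf{y}\mathbf{y}^*\rangle = 0$ for unit vectors, one may replace $A$ throughout by $A' = A - \tfrac{1}{2}\left(\lambda_{\max}+\lambda_{\min}\right)I$, for which $\|A'\| = M/2$ exactly; the crude perturbation bound then reads $4\delta\|A'\| = 2\delta M$ and the stated factor $(1-2\delta)$ follows. In short, your route is sound and buys a better constant at the price of a critical-point analysis; the centering trick buys the stated constant with essentially no work, and is the correction the paper's proof actually needs.
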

\begin{proof}
In the proof, we relate the supremum of $|\langle A, \mathbf{x}_1\mathbf{x}_1^* - \mathbf{x}_2\mathbf{x}_2^*\rangle|$ over $\mathbf{x}, \mathbf{y}\in S^{n-1}$ to its supremum over $\mathbf{x}, \mathbf{y}\in \mathcal{N}_{\delta}$.
 
Since $\mathcal{N}_\delta$ covers $S^{n-1}$, $\forall \mathbf{x}\in S^{n-1}$,  $\exists \mathbf{u}\in \mathcal{N}_\delta$ such that $\|\mathbf{x}-\mathbf{u}\| \leq \delta$. 
 
Hence $\forall \mathbf{x}_1, \mathbf{x}_2 \in S^{n-1}$, $\exists \mathbf{y}_1, \mathbf{y}_2 \in \mathcal{N}_\delta$ such that,
\begin{align}
    &\ |\langle A, \mathbf{x}_1\mathbf{x}_1^* - \mathbf{x}_2\mathbf{x}_2^*\rangle - \langle A, \mathbf{y}_1\mathbf{y}_1^* - \mathbf{y}_2\mathbf{y}_2^*\rangle|\nonumber\\ 
    &\ = |\langle A\mathbf{x}_1, \mathbf{x}_1\rangle - \langle A\mathbf{y}_1, \mathbf{y}_1\rangle| - \langle A\mathbf{x}_2, \mathbf{x}_2\rangle - \langle A\mathbf{y}_2, \mathbf{y}_2\rangle|\nonumber\\
    &\ =  |\langle A\mathbf{x}_1, \mathbf{x}_1\rangle - \langle A\mathbf{x}_1, \mathbf{y}_1\rangle + \langle A\mathbf{x}_1, \mathbf{y}_1\rangle - \langle A\mathbf{y}_1, \mathbf{y}_1\rangle| +  |\langle A\mathbf{x}_2, \mathbf{x}_2\rangle - \langle A\mathbf{x}_2, \mathbf{y}_2\rangle + \langle A\mathbf{x}_2, \mathbf{y}_2\rangle - \langle A\mathbf{y}_2, \mathbf{y}_2\rangle| \nonumber\\
    &\ = |\langle A\mathbf{x}_1, \mathbf{x}_1 - \mathbf{y}_1\rangle + \langle A\mathbf{y}_1, \mathbf{x}_1-\mathbf{y}_1\rangle| + |\langle A\mathbf{x}_2, \mathbf{x}_2 - \mathbf{y}_2\rangle + \langle A\mathbf{y}_2, \mathbf{x}_2-\mathbf{y}_2\rangle|\nonumber\\
    &\ \leq 2\|A\|\|\mathbf{x}_1-\mathbf{y}_1\| + 2\|A\|\|\mathbf{x}_2-\mathbf{y}_2\|\nonumber\\
    &\ \leq 4\delta\|A\|\nonumber
\end{align}
where $\|A\|$ denotes the spectral norm of the matrix $A$, i.e., $$\left\| A \right\|= \sup_{\mathbf{x}\in S^{n-1}}|\langle A\mathbf{x}, \mathbf{x}\rangle| = \frac{1}{2} \sup_{\mathbf{x}\in S^{n-1}}|\langle A\mathbf{x}, \mathbf{x}\rangle| + \frac{1}{2}\sup_{\mathbf{y}\in S^{n-1}}|\langle A\mathbf{y}, \mathbf{y}\rangle|=  \frac{1}{2}\sup_{\mathbf{x}, \mathbf{y}\in S^{n-1}}|\langle A, \mathbf{x}\mathbf{x}^* - \mathbf{y}\mathbf{y}^*\rangle|$$

We conclude,
\begin{align}
    &\ |\langle A, \mathbf{x}_1\mathbf{x}_1^* - \mathbf{x}_2\mathbf{x}_2^*\rangle| - |\langle A, \mathbf{y}_1\mathbf{y}_1^* - \mathbf{y}_2\mathbf{y}_2^*\rangle| \leq 4\delta\|A\|\nonumber\\
    &\ |\langle A, \mathbf{y}_1\mathbf{y}_1^* - \mathbf{y}_2\mathbf{y}_2^*\rangle| \geq |\langle A, \mathbf{x}_1\mathbf{x}_1^* - \mathbf{x}_2\mathbf{x}_2^*\rangle| - 4\delta\|A\|\nonumber
\end{align}
And,
\begin{align}
    &\ |\langle A, \mathbf{y}_1\mathbf{y}_1^* - \mathbf{y}_2\mathbf{y}_2^*\rangle| - |\langle A, \mathbf{x}_1\mathbf{x}_1^* - \mathbf{x}_2\mathbf{x}_2^*\rangle| \leq 4\delta\|A\|\nonumber\\
    &\ |\langle A, \mathbf{y}_1\mathbf{y}_1^* - \mathbf{y}_2\mathbf{y}_2^*\rangle| \leq |\langle A, \mathbf{x}_1\mathbf{x}_1^* - \mathbf{x}_2\mathbf{x}_2^*\rangle| + 4\delta\|A\|\nonumber
\end{align}
Taking supremum,
\begin{align}\label{eq:coveringconnection}
    \sup_{\mathbf{x}\in \mathcal{N}_\delta}|\langle A, \mathbf{x}_1\mathbf{x}_1^* - \mathbf{x}_2\mathbf{x}_2^*\rangle| &\ \geq \sup_{\mathbf{x}\in S^{n-1}}|\langle A, \mathbf{x}_1\mathbf{x}_1^* - \mathbf{x}_2\mathbf{x}_2^*\rangle| - 4\delta\|A\|\nonumber\\ &\ = (2-4\delta)\|A\| \nonumber\\ &\ = (1-2\delta)\sup_{\mathbf{x}\in S^{n-1}}|\langle A, \mathbf{x}_1\mathbf{x}_1^* - \mathbf{x}_2\mathbf{x}_2^*\rangle|\nonumber\\
    \sup_{\mathbf{x}\in \mathcal{N}_\delta}|\langle A, \mathbf{x}_1\mathbf{x}_1^* - \mathbf{x}_2\mathbf{x}_2^*\rangle| &\ \leq \sup_{\mathbf{x}\in S^{n-1}}|\langle A, \mathbf{x}_1\mathbf{x}_1^* - \mathbf{x}_2\mathbf{x}_2^*\rangle| + 4\delta\|A\|\nonumber\\ &\ = (2+4\delta)\|A\|\nonumber\\ &\ = (1+2\delta)\sup_{\mathbf{x}\in S^{n-1}}|\langle A, \mathbf{x}_1\mathbf{x}_1^* - \mathbf{x}_2\mathbf{x}_2^*\rangle|\nn
\end{align}
\end{proof}

\section{Appendix C : Non-convex lanscape}
\subsection{Supporting Lemmas}
The following intermediate results are crucial in proving Theorem \ref{thm:strictsaddle}
\begin{lemma}\label{lem:hermitian_cross}
Let $\{A_d\}_{d=1}^n$ be a set of Hermitian Gaussian random matrices. Then with probability $1-4e^{-cmD(\epsilon)}$,
\begin{align}
    \left| \frac{1}{m}\sum_{d=1}^m\langle A_d, \Delta\bar{\Delta}^\top\rangle\langle A_d, \mathbf{x}\bar{\mathbf{x}}^\top\rangle -  \langle A_d, \Delta\bar{\mathbf{x}}^\top\rangle\langle A_d, \mathbf{x}\bar{\Delta}^\top\rangle \right| \leq \epsilon \nonumber
\end{align}
where \begin{equation}
    D(\epsilon) := \min\left\{\frac{\epsilon^2}{4K_4^2\|\Delta\|_2^4\|\mathbf{x}\|_2^4}, \frac{\epsilon}{2K_4\|\Delta\|_\infty^2\|\mathbf{x}\|_\infty^2}\right\}\nonumber
\end{equation}
\end{lemma}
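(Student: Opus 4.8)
The plan is to recognize the quantity inside the expectation as a mean of $m$ i.i.d.\ centered random variables and to apply a Bernstein-type concentration inequality, exactly mirroring the structure of the proof of Lemma~\ref{lem:subexpo_fro}. First I would define, for each index $d$, the scalar random variable
\begin{equation}
    Z_d \triangleq \langle A_d, \Delta\bar{\Delta}^\top\rangle\langle A_d, \mathbf{x}\bar{\mathbf{x}}^\top\rangle -  \langle A_d, \Delta\bar{\mathbf{x}}^\top\rangle\langle A_d, \mathbf{x}\bar{\Delta}^\top\rangle, \nonumber
\end{equation}
so that the object to be controlled is $\frac{1}{m}\sum_{d=1}^m Z_d$. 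The first step is a computation showing $\mathbb{E}[Z_d]=0$: expanding each inner product $\langle A_d, \mathbf{u}\bar{\mathbf{v}}^\top\rangle$ as a double sum $\sum_{ij}(a_d)_{ij} u_i \bar v_j$ and using the Hermitian Gaussian moment structure (the same partition into index sets $B,C,D,E$ used in Lemma~\ref{lem:subexpo_fro}), the two product terms produce identical second-moment contributions that cancel, leaving mean zero. This cancellation is the analytic reason the bound is an $\epsilon$-deviation rather than a deviation around a nonzero mean.

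Next I would establish that each $Z_d$ is a centered \emph{subexponential} random variable. Each $Z_d$ is a quadratic form in the Gaussian entries of $A_d$ (a product of two linear forms), and products of sub-Gaussian variables are subexponential; the relevant Orlicz-type norm $K_4$ governing the tail is controlled by $\|\Delta\|_2\|\mathbf{x}\|_2$ (for the $\ell_2$-scaling) and by $\|\Delta\|_\infty\|\mathbf{x}\|_\infty$ (for the $\ell_\infty$-scaling), which is precisely why both appear in the definition of $D(\epsilon)$. With the centering and the subexponential norm in hand, I would invoke the Bernstein inequality for sums of independent subexponential random variables (as cited from~\cite{vershynin2010introduction}), yielding a tail bound of the form $2\exp(-cm D(\epsilon))$ for the one-sided deviation.

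Finally, applying the Bernstein bound to both tails (the event that $\frac{1}{m}\sum_d Z_d$ exceeds $\epsilon$ and the event that it falls below $-\epsilon$) and taking a union bound doubles the failure probability; tracking the constants through this two-sided argument produces the stated $4e^{-cmD(\epsilon)}$. I expect the main obstacle to be the first step rather than the concentration machinery: carefully verifying that $\mathbb{E}[Z_d]=0$ requires bookkeeping of the fourth-order Gaussian moments of the Hermitian entries, being attentive to the constraint $(a_d)_{ij}=\overline{(a_d)}_{ji}$ and to the vanishing of terms such as $\mathbb{E}[(a_{ij})^2]$ on the off-diagonal (the same phenomenon that killed set $D$ in Lemma~\ref{lem:subexpo_fro}). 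Once the symmetric cross-term is shown to match the diagonal term in expectation, the remainder is a routine instantiation of the sub-exponential Bernstein bound.
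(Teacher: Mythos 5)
Your strategy---partition the fourth-order index sum, show the summands are centered, then apply the subexponential Bernstein inequality two-sidedly---mirrors the paper's proof, but the step you identify as the crux fails: the two product terms do \emph{not} have identical second-moment contributions, and $\mathbb{E}[Z_d]\neq 0$ in general. For a Hermitian Gaussian matrix the only surviving second moments are $\mathbb{E}\bigl[a_{ij}a_{ji}\bigr]=\mathbb{E}\bigl[|a_{ij}|^2\bigr]=\sigma^2$ (off the diagonal $\mathbb{E}[a_{ij}^2]=0$), i.e., only the pairing $(k,l)=(j,i)$ survives. Carrying this pairing through both products gives
\begin{align*}
\mathbb{E}\bigl[\langle A_d, \Delta\bar{\Delta}^\top\rangle\langle A_d, \mathbf{x}\bar{\mathbf{x}}^\top\rangle\bigr] &= \sum_{i,j}\Delta_i\bar{\Delta}_j\, x_j\bar{x}_i = |\langle \Delta, \mathbf{x}\rangle|^2,\\
\mathbb{E}\bigl[\langle A_d, \Delta\bar{\mathbf{x}}^\top\rangle\langle A_d, \mathbf{x}\bar{\Delta}^\top\rangle\bigr] &= \sum_{i,j}|\Delta_i|^2\,|x_j|^2 = \|\Delta\|_2^2\,\|\mathbf{x}\|_2^2,
\end{align*}
so that $\mathbb{E}[Z_d]=|\langle\Delta,\mathbf{x}\rangle|^2-\|\Delta\|_2^2\|\mathbf{x}\|_2^2\le 0$, with equality only when $\Delta$ and $\mathbf{x}$ are collinear (Cauchy--Schwarz). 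The exact cancellation you describe occurs only on the index sets $\{i=j=k=l\}$ and $\{i=k,\,j=l\}$, where the two products agree termwise; on the conjugate set $\{i=l,\,j=k\}$ the first product contributes $|a_{ij}|^2\Delta_i\bar{\Delta}_j x_j\bar{x}_i$ while the second contributes $|a_{ij}|^2|\Delta_j|^2|x_i|^2$, and these differ. (The paper's own write-up exhibits the same tension: its computation for this set ends with a quantity that is only $\le 0$, yet the proof then asserts the total expectation is $0$.)

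Because the mean is genuinely negative in general (e.g., for unit vectors $\mathbf{x}\perp\mathbf{z}$ and $\Delta=\mathbf{x}-e^{\mathsf{i}\phi}\mathbf{z}$ one gets $\mathbb{E}[Z_d]=-1$), the empirical average $\frac{1}{m}\sum_{d} Z_d$ concentrates around a nonzero value, and the two-sided statement $\bigl|\frac{1}{m}\sum_{d} Z_d\bigr|\le\epsilon$ cannot be established for arbitrary $\epsilon>0$ by any centering-plus-Bernstein argument; your plan stalls exactly where you invoke mean zero. What can be salvaged---and is all that the downstream curvature bound \eqref{eq:hessain_curvature} in Theorem~\ref{thm:strictsaddle} actually uses---is the one-sided bound $\frac{1}{m}\sum_{d} Z_d\le\epsilon$: since $\mathbb{E}[Z_d]\le 0$, it suffices to control the deviation of the sum from its mean, splitting it into the contribution of the conjugate pairs $\{i=l,\,j=k\}$ (a weighted sum of the centered subexponential variables $|a_{ij}|^2-\sigma^2$) and the contribution of the fully generic quadruples (sums of products of independent Gaussian entries), and applying Bernstein to each piece. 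If you rewrite your proof along these lines---exact mean computation, two-piece deviation bound, one-sided conclusion---the concentration machinery you cite does go through.
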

\begin{proof}
Let $A\in \mathbb{C}^{n\times n}$ be a complex Hermitian Gaussian random matrix, i.e. 
\begin{enumerate}
    \item $\forall~i$, $a_{ii}\sim \mathcal{N}(0, \sigma^2)$.
    \item $\forall~i, j$, $i\neq j$, $a_{ij}\sim \mathcal{N}(0,\frac{\sigma^2}{2}) + i\mathcal{N}(0,\frac{\sigma^2}{2})$.
\end{enumerate}
Define the random variable $Y$,
\begin{align}
    Y&\ = \frac{1}{m}\sum_{d=1}^m\langle A_d, \Delta\bar{\Delta}^\top\rangle\langle A_d, \mathbf{x}\bar{\mathbf{x}}^\top\rangle -  \langle A_d, \Delta\bar{\mathbf{x}}^\top\rangle\langle A_d, \mathbf{x}\bar{\Delta}^\top\rangle \nonumber\\
    &\ = 
    \frac{1}{m}\sum_{d=1}^m \left(\sum_{ij}a_{ij}^d\Delta_i\bar{\Delta}_j\right)\left( \sum_{kl}a_{kl}^d\mathbf{x}_k\bar{\mathbf{x}}_l\right)  - \left(\sum_{ij}a_{ij}^d\mathbf{x}_i\bar{\Delta}_j\right) \left(\sum_{kl}a_{kl}^d\Delta_k\bar{\mathbf{x}}_l\right)\nonumber
\end{align}
For any $A_d$, Split the entire summation $(i,j,k,l) \in [1, n]^4$ into the following 4 sets such that:
\begin{enumerate}
    \item $A:= \{(i,j,k,l) | i = j = k =l\}$
    \item $B:= \{(i,j,k,l) | i=k, j=l\}\cap A^C$
    \item $C:= \{(i,j,k,l) | i=l, j=k\}\cap A^C$
    \item $D:= \{(i,j,k,l)\}\cap A^C \cap B^C \cap C^C$
\end{enumerate}
Calculating the expectation of the sum of the elements in each individual sets:
\begin{enumerate}
    \item For set A,
    \begin{align}
        &\ \mathbb{E}\left[\left(\sum_{ij}a_{ij}\Delta_i\bar{\Delta}_j\right)\left( \sum_{kl}a_{kl}\mathbf{x}_k\bar{\mathbf{x}}_l\right)- \left(\sum_{ij}a_{ij}\mathbf{x}_i\bar{\Delta}_j\right) \left(\sum_{kl}a_{kl}\Delta_k\bar{\mathbf{x}}_l\right)\right]\nonumber\\  &\ = \mathbb{E}\left[a_{ii}^2\Delta_i\bar{\Delta}_i\mathbf{x}_i\bar{\mathbf{x}}_j - a_{ii}^2\Delta_i\bar{\Delta}_i\mathbf{x}_i\bar{\mathbf{x}}_i\right] = \mathbb{E}[0] = 0\nonumber
    \end{align}
    \item For set B,
    \begin{align}
        &\ \mathbb{E}\left[\left(\sum_{ij}a_{ij}\Delta_i\bar{\Delta}_j\right)\left( \sum_{kl}a_{kl}\mathbf{x}_k\bar{\mathbf{x}}_l\right)- \left(\sum_{ij}a_{ij}\mathbf{x}_i\bar{\Delta}_j\right) \left(\sum_{kl}a_{kl}\Delta_k\bar{\mathbf{x}}_l\right)\right]\nonumber\\  &\ = \mathbb{E}\left[a_{ij}^2\Delta_i\bar{\Delta}_j\mathbf{x}_i\bar{\mathbf{x}}_j - a_{ij}^2\Delta_i\bar{\Delta}_j\mathbf{x}_i\bar{\mathbf{x}}_j\right] = \mathbb{E}[0] = 0\nonumber
    \end{align}
    \item For set C, since the matrix $A$ is hermitian $a_{ij}= \bar{a}_{ji}$
    \begin{align}
        &\ \mathbb{E}\left[\left(\sum_{ij}a_{ij}\Delta_i\bar{\Delta}_j\right)\left( \sum_{kl}a_{kl}\mathbf{x}_k\bar{\mathbf{x}}_l\right)- \left(\sum_{ij}a_{ij}\mathbf{x}_i\bar{\Delta}_j\right) \left(\sum_{kl}a_{kl}\Delta_k\bar{\mathbf{x}}_l\right)\right]\nonumber\\  &\ = \left[|a_{ij}|^2\Delta_i\bar{\Delta}_j\mathbf{x}_j\bar{\mathbf{x}}_i - |a_{ij}|^2\Delta_j\bar{\Delta}_j\mathbf{x}_i\bar{\mathbf{x}}_i\right]\nonumber
    \end{align}
    Notice that $\forall i,j$
\begin{align}
    = |a_{ji}|^2\Delta_j\bar{\Delta}_i\mathbf{x}_i\bar{\mathbf{x}}_j +|a_{ij}|^2\Delta_i\bar{\Delta}_j\mathbf{x}_j\bar{\mathbf{x}}_i - |a_{ji}|^2\Delta_i\bar{\Delta}_i\mathbf{x}_j\bar{\mathbf{x}}_j -  |a_{ij}|^2\Delta_j\bar{\Delta}_j\mathbf{x}_i\bar{\mathbf{x}}_i\nonumber
\end{align}
Since $|a_{ij}|^2 = |a_{ji}|^2$. Thus,
\begin{align}
    &\ = |a_{ji}|^2\left[\Delta_j\bar{\Delta}_i\mathbf{x}_i\bar{\mathbf{x}}_j +  \Delta_i\bar{\Delta}_j\mathbf{x}_j\bar{\mathbf{x}}_i - \Delta_i\bar{\Delta}_i\mathbf{x}_j\bar{\mathbf{x}}_j -  \Delta_j\bar{\Delta}_j\mathbf{x}_i\bar{\mathbf{x}}_i\right]\nonumber\\
    &\ = |a_{ji}|^2\left[\Delta_j\bar{\Delta}_i\mathbf{x}_i\bar{\mathbf{x}}_j + \Delta_i\bar{\Delta}_j\mathbf{x}_j\bar{\mathbf{x}}_i - \|\Delta_i\|_2^2\|\mathbf{x}_j\|_2^2 -  \|\Delta_j\|_2^2\|\mathbf{x}_i\|_2^2\right]
    \leq 0\nonumber
\end{align}

\item For set D, as all the elements $(i,j,k,l)\in D$ make $a_{ij}, a_{kl}$ independent of each other, we have,
    \begin{align}
        \mathbb{E}\left(\sum a_{ij}a_{kl}\bar{\Delta}_j\bar{\mathbf{x}}_l\left(\Delta_i\mathbf{x}_k -\Delta_k\mathbf{x}_i\right)\right) = 0\nonumber
    \end{align}
\end{enumerate}
Hence we can conclude,
\begin{equation}
    \mathbb{E}\left[\frac{1}{m}\sum_{d=1}^m\left(\langle A_d, \Delta\bar{\Delta}^\top\rangle\langle A_d, \mathbf{x}\bar{\mathbf{x}}^\top\rangle -  \langle A_d, \Delta\bar{\mathbf{x}}^\top\rangle\langle A_d, \mathbf{x}\bar{\Delta}^\top\rangle\right)\right] = 0\nonumber
\end{equation}
We focus our attention on obtaining concentration bounds. Evaluating the behaviour on the elements in set D,
    \begin{align}
        & \frac{1}{m}\sum_{d=1}^m \left(\sum_{(i,j,k,l)\in D} a_{ij}^d a_{kl}^d \bar{\Delta}_j\bar{\mathbf{x}}_l\left(\Delta_i\mathbf{x}_k -\Delta_k\mathbf{x}_i\right)\right) -\mathbb{E}\left(\frac{1}{m}\sum_{d=1}^m \sum_{{(i,j,k,l)\in D}} a_{ij}^d a_{kl}^d \bar{\Delta}_j\bar{\mathbf{x}}_l\left(\Delta_i\mathbf{x}_k -\Delta_k\mathbf{x}_i\right)\right) \nonumber\\
        & =  \sum_{(i,j,k,l)\in D} a_{ij}^d a_{kl}^d \bar{\Delta}_j\bar{\mathbf{x}}_l\left(\Delta_i\mathbf{x}_k -\Delta_k\mathbf{x}_i\right)\nn
    \end{align}
    We can see that the above is a centered subexponential random variable. Hence using Bernstein type inequality \cite{vershynin2010introduction}, we can say that,
    \begin{align}
        & Pr\left(\left|\frac{1}{m}\sum_{d=1}^m\sum_{(i,j,k,l)\in D} a_{ij}^d a_{kl}^d \bar{\Delta}_j\bar{\mathbf{x}}_l\left(\Delta_i\mathbf{x}_k -\Delta_k\mathbf{x}_i\right)\right|\geq t\right)\nonumber\\& \leq 4\text{exp}\left(-cm\min\left\{\frac{t^2}{4K_4^2\|\Delta\|_2^4\|\mathbf{x}\|_2^4}, \frac{t}{2K_4\|\Delta\|_\infty^2\|\mathbf{x}\|_\infty^2}\right\}\right)\nonumber
    \end{align}
    where $K_4:=\max_{i,j}\{\|\sum_{i,j,k,l\in D}(\bar{a}_{ij}a_{kl}^d)_\mathbb{R}\|_{\psi_1}$, $\|\sum_{i,j,k,l\in D}(\bar{a}_{ij}a_{kl}^d)_\mathbb{C}\|_{\psi_1}\}$ is the subexponential norm.
Thus we can argue that,
\begin{align}
    \mathbb{P}\left(\left|\frac{1}{m}\sum_{d=1}^m\langle A_d, \Delta\bar{\Delta}^\top\rangle\langle A_d, \mathbf{x}\bar{\mathbf{x}}^\top\rangle -  \langle A_d, \Delta\bar{\mathbf{x}}^\top\rangle\langle A_d, \mathbf{x}\bar{\Delta}^\top\rangle \right|\geq t \right)\leq 4\text{exp}^{-cmD(t)}\nonumber
\end{align}
where,
\begin{equation}
    D(t) := \min\left\{\frac{t^2}{4K_4^2\|\Delta\|_2^4\|\mathbf{x}\|_2^4}, \frac{t}{2K_4\|\Delta\|_\infty^2\|\mathbf{x}\|_\infty^2}\right\}\nonumber
\end{equation}
\end{proof}

Throughout the rest of the paper, define $\Delta = \mathbf{x} - e^{i\phi}\mathbf{z}$ such that $\phi = \underset{\theta\in [0,2\pi]}{\arg\min}\|\mathbf{x} - e^{i\theta}\mathbf{z}\|_2$ for any $\mathbf{x}, \mathbf{z}\in \mathbf{C}^n$. 
\begin{lemma}
For any $\mathbf{x}\in \mathbf{C}^n$, 
\begin{equation}
    \|\mathbf{x}\mathbf{x}^*\|_F^2 = \|\mathbf{x}\|_2^4\nonumber
\end{equation}
\end{lemma}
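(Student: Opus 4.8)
The plan is to prove this directly from the definition of the Frobenius norm, exploiting the rank-one structure of the outer product $\mathbf{x}\mathbf{x}^*$. Recall that for any matrix $M\in\mathbb{C}^{n\times n}$ one has $\|M\|_F^2 = \operatorname{tr}(M^*M) = \sum_{i,j}|M_{ij}|^2$, and both of these characterizations lead to the claim in a couple of lines. I would present the trace-based argument as the primary route since it is the cleanest.

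The key steps, in order, are as follows. First, I would write $M = \mathbf{x}\mathbf{x}^*$ and note that $M^* = M$ (it is Hermitian), so that
\begin{equation}
    \|\mathbf{x}\mathbf{x}^*\|_F^2 = \operatorname{tr}\bigl((\mathbf{x}\mathbf{x}^*)^*(\mathbf{x}\mathbf{x}^*)\bigr) = \operatorname{tr}(\mathbf{x}\mathbf{x}^*\mathbf{x}\mathbf{x}^*).\nonumber
\end{equation}
Second, I would observe that the middle factor $\mathbf{x}^*\mathbf{x} = \|\mathbf{x}\|_2^2$ is a scalar and can be pulled out, giving $\operatorname{tr}(\mathbf{x}\mathbf{x}^*\mathbf{x}\mathbf{x}^*) = \|\mathbf{x}\|_2^2\,\operatorname{tr}(\mathbf{x}\mathbf{x}^*)$. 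Third, using $\operatorname{tr}(\mathbf{x}\mathbf{x}^*) = \mathbf{x}^*\mathbf{x} = \|\mathbf{x}\|_2^2$, I would conclude $\|\mathbf{x}\mathbf{x}^*\|_F^2 = \|\mathbf{x}\|_2^4$. As an alternative entrywise verification, the $(i,j)$ entry of $\mathbf{x}\mathbf{x}^*$ is $x_i\bar{x}_j$, so $\sum_{i,j}|x_i\bar{x}_j|^2 = \bigl(\sum_i|x_i|^2\bigr)\bigl(\sum_j|x_j|^2\bigr) = \|\mathbf{x}\|_2^4$, which factorizes exactly because the matrix is a single outer product.

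There is essentially no genuine obstacle here; the identity is immediate once the rank-one structure is used, and the only thing to be careful about is the placement of conjugates in the complex setting (ensuring $\mathbf{x}^*\mathbf{x}$ rather than $\mathbf{x}^\top\mathbf{x}$ appears, so that the scalar pulled out is the nonnegative quantity $\|\mathbf{x}\|_2^2$). I would therefore keep the write-up to the short trace computation above, noting the Hermitian symmetry of $\mathbf{x}\mathbf{x}^*$ to justify replacing $(\mathbf{x}\mathbf{x}^*)^*$ by $\mathbf{x}\mathbf{x}^*$.
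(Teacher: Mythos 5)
Your proof is correct. Your primary route differs slightly from the paper's: the paper verifies the identity entrywise, writing
\begin{equation}
\|\mathbf{x}\mathbf{x}^*\|_F^2 = \sum_{i,j=1}^n |x_i\bar{x}_j|^2 = \sum_{i,j=1}^n |x_i|^2|x_j|^2 = \Bigl(\sum_{i=1}^n |x_i|^2\Bigr)^2 = \|\mathbf{x}\|_2^4,\nonumber
\end{equation}
which is exactly the ``alternative entrywise verification'' you relegate to a remark, whereas you lead with the trace computation $\operatorname{tr}(\mathbf{x}\mathbf{x}^*\mathbf{x}\mathbf{x}^*) = \|\mathbf{x}\|_2^2\operatorname{tr}(\mathbf{x}\mathbf{x}^*) = \|\mathbf{x}\|_2^4$. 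Both are two-line arguments and equally rigorous; the trace route has the small advantage of generalizing immediately (e.g.\ to $\|\mathbf{x}\mathbf{y}^*\|_F^2 = \|\mathbf{x}\|_2^2\|\mathbf{y}\|_2^2$, or to quantities like $\operatorname{tr}\{(\mathbf{x}\mathbf{x}^* - \mathbf{y}\mathbf{y}^*)^2\}$ that the paper invokes elsewhere via a citation), while the paper's entrywise sum is fully self-contained and requires no properties of the trace. Your care about conjugate placement (using $\mathbf{x}^*\mathbf{x}$ rather than $\mathbf{x}^\top\mathbf{x}$) is the right point to flag in the complex setting, and mirrors the conjugation bookkeeping the paper does explicitly in its sum.
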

\begin{proof}
\begin{align}
    \|\mathbf{x}\mathbf{x}^*\|_F^2&\ = \sum_{i,j=1}^n |x_i\bar{x}_j|^2
    = \sum_{i,j=1}^n (x_i\bar{x}_j)^*(x_i \bar{x}_j) = \sum_{i,j=1}^n x_j\bar{x}_ix_i \bar{x}_j\nonumber\\
    &\ = \sum_{i,j=1}^n |x_j|^2|x_i|^2= (\sum_{i=1}^n |x_i|^2)^2 = \|\mathbf{x}\|_2^4\nonumber
\end{align}
\end{proof}
\begin{lemma}\label{lem:imzero}
The vectors $\mathbf{x} - e^{i\phi}\mathbf{z}$ and $\mathbf{x} + e^{i\phi}\mathbf{z}$ are such that Im$(\langle \mathbf{x} - e^{i\phi}\mathbf{z}, \mathbf{x} + e^{i\phi}\mathbf{z}\rangle)=0$, where $\phi = \underset{\theta\in [0, 2\pi]}{\arg\min} \|\mathbf{x}- e^{i\theta}\mathbf{z}\|_2$
\end{lemma}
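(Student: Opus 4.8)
The plan is to expand the target inner product $\langle \mathbf{x}-e^{i\phi}\mathbf{z},\,\mathbf{x}+e^{i\phi}\mathbf{z}\rangle$ explicitly and reduce the claim to the vanishing of a single real quantity tied to the optimality of $\phi$. Using the convention $\langle \mathbf{u},\mathbf{v}\rangle=\sum_k u_k\bar v_k$ (linear in the first slot, conjugate-linear in the second, consistent with $\langle A\mathbf{x},\mathbf{x}\rangle=\mathbf{x}^*A\mathbf{x}$ being real for Hermitian $A$), sesquilinearity gives
\begin{equation}
\langle \mathbf{x}-e^{i\phi}\mathbf{z},\,\mathbf{x}+e^{i\phi}\mathbf{z}\rangle
= \|\mathbf{x}\|_2^2-\|\mathbf{z}\|_2^2
+ e^{-i\phi}\langle \mathbf{x},\mathbf{z}\rangle - e^{i\phi}\langle \mathbf{z},\mathbf{x}\rangle. \nonumber
\end{equation}
Since $\langle \mathbf{z},\mathbf{x}\rangle=\overline{\langle \mathbf{x},\mathbf{z}\rangle}$, the two cross terms take the form $w-\bar w=2i\,\mathrm{Im}(w)$ with $w=e^{-i\phi}\langle \mathbf{x},\mathbf{z}\rangle$, while the first two terms are real. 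Hence $\mathrm{Im}\big(\langle \mathbf{x}-e^{i\phi}\mathbf{z},\,\mathbf{x}+e^{i\phi}\mathbf{z}\rangle\big)=2\,\mathrm{Im}\big(e^{-i\phi}\langle \mathbf{x},\mathbf{z}\rangle\big)$, so it suffices to prove $\mathrm{Im}\big(e^{-i\phi}\langle \mathbf{x},\mathbf{z}\rangle\big)=0$.

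Next I would bring in the defining property of $\phi$. Expanding the squared distance yields
\begin{equation}
\|\mathbf{x}-e^{i\theta}\mathbf{z}\|_2^2 = \|\mathbf{x}\|_2^2+\|\mathbf{z}\|_2^2 - 2\,\mathrm{Re}\big(e^{-i\theta}\langle \mathbf{x},\mathbf{z}\rangle\big),\nonumber
\end{equation}
so minimizing the distance over $\theta$ is equivalent to maximizing $g(\theta):=\mathrm{Re}\big(e^{-i\theta}\langle \mathbf{x},\mathbf{z}\rangle\big)$. The key observation is that differentiating $g$ produces exactly the imaginary part of interest: since $\tfrac{d}{d\theta}e^{-i\theta}=-i\,e^{-i\theta}$ and $\mathrm{Re}(-iw)=\mathrm{Im}(w)$, one obtains $g'(\theta)=\mathrm{Im}\big(e^{-i\theta}\langle \mathbf{x},\mathbf{z}\rangle\big)$.

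Finally, because $\phi$ minimizes the smooth, $2\pi$-periodic function $\theta\mapsto\|\mathbf{x}-e^{i\theta}\mathbf{z}\|_2$ over $[0,2\pi]$, it is an interior critical point of $g$, so the first-order condition $g'(\phi)=0$ holds, which is precisely $\mathrm{Im}\big(e^{-i\phi}\langle \mathbf{x},\mathbf{z}\rangle\big)=0$, completing the argument. As a sanity check one can instead write $\langle \mathbf{x},\mathbf{z}\rangle=r e^{i\psi}$ and observe that $\phi=\psi$ maximizes $g(\theta)=r\cos(\psi-\theta)$ and makes $e^{-i\phi}\langle \mathbf{x},\mathbf{z}\rangle=r\ge 0$ real outright. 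There is no substantive analytic obstacle here; the only care required is bookkeeping — fixing the inner-product convention so the sign and the identity $g'(\theta)=\mathrm{Im}\big(e^{-i\theta}\langle\mathbf{x},\mathbf{z}\rangle\big)$ come out right — together with noting the degenerate case $\langle \mathbf{x},\mathbf{z}\rangle=0$, in which the claim is immediate for every $\phi$. The result is simply a first-order optimality statement expressed in complex notation.
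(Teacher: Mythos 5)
Your proof is correct, and it shares the paper's skeleton: both expand the sesquilinear form to reduce the claim to $\mathrm{Im}\left(e^{-i\phi}\langle\mathbf{x},\mathbf{z}\rangle\right)=0$, and both expand $\|\mathbf{x}-e^{i\theta}\mathbf{z}\|_2^2$ to see that minimizing the distance amounts to maximizing $g(\theta)=\mathrm{Re}\left(e^{-i\theta}\langle\mathbf{x},\mathbf{z}\rangle\right)$. Where you diverge is the closing step. The paper solves for the minimizer explicitly: it puts $\langle\mathbf{x},\mathbf{z}\rangle$ in polar form, identifies its phase $\omega$ as the maximizer of $g$, and notes that the imaginary part vanishes at $\theta=\omega$; this is exactly your ``sanity check.'' Your primary argument never solves for $\phi$: you observe that $g'(\theta)=\mathrm{Im}\left(e^{-i\theta}\langle\mathbf{x},\mathbf{z}\rangle\right)$, so the quantity to be annihilated is literally the derivative of the objective, and stationarity at the minimizer (legitimate by smoothness and $2\pi$-periodicity, even if the argmin sits at an endpoint of $[0,2\pi]$) finishes the proof. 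Your route is somewhat more robust: it requires no uniqueness of the minimizer, it covers the degenerate case $\langle\mathbf{x},\mathbf{z}\rangle=0$ where the ``angle'' $\omega$ is undefined, and it sidesteps the paper's imprecise assertion that the imaginary part ``can only vanish if $\theta=\omega$'' (it also vanishes at $\theta=\omega+\pi$). What the paper's explicit identification buys is the closed form of the optimal phase and the byproduct $\mathrm{Re}\left(e^{-i\phi}\langle\mathbf{x},\mathbf{z}\rangle\right)=|\langle\mathbf{x},\mathbf{z}\rangle|\ge 0$, which the paper relies on in the lemma that follows; your polar-form aside recovers the same fact.
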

\begin{proof}
We know from Lemma 3.7~\cite{balan2016reconstruction} that $\forall \textbf{x}, \textbf{z} \in\mathbb{C}^n$ $\exists \textbf{u}, \textbf{v}\in \mathbb{C}^n$ such that,
\begin{equation}
    \textbf{x}\textbf{x}^* - \textbf{z}\textbf{z}^* = \textbf{u}\textbf{v}^* + \textbf{v}\textbf{u}^* = [[\textbf{u}, \textbf{v}]]\nonumber
\end{equation}
It can be easily verified that few such pairs $\textbf{u}, \textbf{v}$ are given by,
\begin{equation}
    \textbf{u} = \textbf{x}-e^{i\theta}\textbf{z},~~~\textbf{v} = \textbf{x}+e^{i\theta}\textbf{z},~~~\forall\theta\in[0,2\pi]\nonumber
\end{equation}

Next, we focus on $\langle \mathbf{u}, \mathbf{v}\rangle$. We argue that $\exists \theta$ such that Im($\langle (\mathbf{x}-e^{i\theta}\mathbf{z}), (\mathbf{x}+e^{i\theta}\mathbf{z})\rangle) =0$
To this end consider the following, 
\begin{align}
    \langle \mathbf{u}, \mathbf{v}\rangle &\ =  \mathbf{u}^T\bar{\mathbf{v}}\nonumber\\
    &\ = \left(\textbf{x}-e^{i\theta}\textbf{z}\right)^T\overline{(\textbf{x}+e^{i\theta}\textbf{z})}\nonumber\\
    &\ = \left(\textbf{x}-e^{i\theta}\textbf{z}\right)^T\left(\bar{\textbf{x}}+e^{-i\theta}\bar{\textbf{z}}\right)\nonumber\\
    &\ = \langle \mathbf{x}, \mathbf{x}\rangle - e^{i\theta}\langle \mathbf{z}, \mathbf{x}\rangle  + e^{-i\theta}\langle\mathbf{x}, \mathbf{z}\rangle - \langle \mathbf{z}, \mathbf{z}\rangle\nonumber\\
    &\ = \langle \mathbf{x}, \mathbf{x}\rangle - \langle \mathbf{z}, \mathbf{z}\rangle -2i\text{Im}(e^{i\theta}\langle \mathbf{z}, \mathbf{x}\rangle)\nonumber
\end{align}
Im$(e^{i\theta}\langle \mathbf{z}, \mathbf{x}\rangle)$ can only vanish if $\theta = \omega$ where $\omega \in [0, 2\pi]$ is the angle between the two vectors, i.e. $\omega$ is such that $\langle \mathbf{x}, \mathbf{z}\rangle = e^{i\omega}\|\textbf{x}\|\|\textbf{z}\|$. Next we prove that $\omega = \phi$ where,
\begin{equation}
    \phi = \underset{\theta\in [0, 2\pi]}{\arg\min} \|\mathbf{x}- e^{i\theta}\mathbf{z}\|_2\nonumber
\end{equation}
Consider the following argument,
\begin{align}
\underset{\theta\in [0, 2\pi]}{\arg\min} \|\mathbf{x}- e^{i\theta}\mathbf{z}\|_2^2 &\ = \underset{\theta\in [0, 2\pi]}{\arg\min} \left(\mathbf{x}- e^{i\theta}\mathbf{z}\right)^*\left(\mathbf{x}- e^{i\theta}\mathbf{z}\right)\nonumber\\
&\ = \underset{\theta\in [0, 2\pi]}{\arg\min} \|\mathbf{x}\|^2 + \|\mathbf{z}\|^2 -e^{-i\theta}\mathbf{z}^*\mathbf{x} -e^{i\theta}\mathbf{x}^*\mathbf{z}\nonumber\\
&\ = \underset{\theta\in [0, 2\pi]}{\arg\min} \|\mathbf{x}\|^2 + \|\mathbf{z}\|^2 -2Re(e^{-i\theta}\langle \mathbf{x}, \mathbf{z}\rangle)\nonumber\\
&\ = \|\mathbf{x}\|^2 + \|\mathbf{z}\|^2 -2\underset{\theta\in [0, 2\pi]}{\arg\max} Re(e^{-i\theta}\langle \mathbf{x}, \mathbf{z}\rangle)\nonumber
\end{align}
It can be seen easily that the $\underset{\theta\in [0, 2\pi]}{\arg\max} Re(e^{-i\theta}\langle \mathbf{x}, \mathbf{z}\rangle)$ is achieved when $\theta = \omega$. Thus we have proved that $\phi = \omega$.
\end{proof}
\begin{lemma}
Let $\mathbf{x}, \mathbf{z} \in \mathbb{C}^n$. Then, 
\begin{equation}
    \|(\mathbf{x}-e^{i\phi}\mathbf{z})(\mathbf{x}-e^{i\phi}\mathbf{z})^*\|_F^2 \leq 2\|\mathbf{x}\mathbf{x}^* - \mathbf{z}\mathbf{z}^*\|\nonumber
\end{equation}
where $\phi = \underset{\theta\in [0, 2\pi]}{\arg\min} \|\mathbf{x} - e^{i\theta}\mathbf{z}\|$
\end{lemma}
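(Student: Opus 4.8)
The plan is to reduce the claim to a clean inequality between squared norms by exploiting the two auxiliary vectors $\mathbf{u} = \mathbf{x} - e^{\mathsf{i}\phi}\mathbf{z}$ and $\mathbf{v} = \mathbf{x} + e^{\mathsf{i}\phi}\mathbf{z}$ introduced above. First I would recall from~\eqref{eq:switchu2x} that $\mathbf{x}\mathbf{x}^* - \mathbf{z}\mathbf{z}^* = [[\mathbf{u},\mathbf{v}]]$, and invoke the preceding lemma ($\|\mathbf{w}\mathbf{w}^*\|_F^2 = \|\mathbf{w}\|_2^4$ for any $\mathbf{w}$) to rewrite the left-hand side as $\|(\mathbf{x}-e^{\mathsf{i}\phi}\mathbf{z})(\mathbf{x}-e^{\mathsf{i}\phi}\mathbf{z})^*\|_F^2 = \|\mathbf{u}\|_2^4$. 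Thus it suffices to lower bound $\|\mathbf{x}\mathbf{x}^*-\mathbf{z}\mathbf{z}^*\|_F^2 = \|[[\mathbf{u},\mathbf{v}]]\|_F^2$ by a multiple of $\|\mathbf{u}\|_2^4$.

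Next I would expand the Frobenius norm of the symmetric outer product by a direct trace computation. Writing $\|[[\mathbf{u},\mathbf{v}]]\|_F^2 = \mathrm{tr}\big((\mathbf{u}\mathbf{v}^*+\mathbf{v}\mathbf{u}^*)^2\big)$ and using cyclicity of the trace yields $\|[[\mathbf{u},\mathbf{v}]]\|_F^2 = 2\|\mathbf{u}\|_2^2\|\mathbf{v}\|_2^2 + 2\,\mathrm{Re}\big((\langle \mathbf{u},\mathbf{v}\rangle)^2\big)$. The crucial input here is Lemma~\ref{lem:imzero}: since $\phi$ is chosen as the minimizing phase, $\langle \mathbf{u},\mathbf{v}\rangle$ is real, so the cross term equals $2|\langle\mathbf{u},\mathbf{v}\rangle|^2\ge 0$ and can simply be discarded, leaving $\|[[\mathbf{u},\mathbf{v}]]\|_F^2 \ge 2\|\mathbf{u}\|_2^2\|\mathbf{v}\|_2^2$.

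The final ingredient is to compare $\|\mathbf{u}\|_2$ and $\|\mathbf{v}\|_2$ using the optimality of $\phi$. Observe that $\mathbf{v} = \mathbf{x} + e^{\mathsf{i}\phi}\mathbf{z} = \mathbf{x} - e^{\mathsf{i}(\phi+\pi)}\mathbf{z}$, so $\|\mathbf{v}\|_2 = \|\mathbf{x}-e^{\mathsf{i}(\phi+\pi)}\mathbf{z}\|_2 \ge \min_{\theta}\|\mathbf{x}-e^{\mathsf{i}\theta}\mathbf{z}\|_2 = \|\mathbf{u}\|_2$. Hence $\|\mathbf{u}\|_2^4 = \|\mathbf{u}\|_2^2\cdot\|\mathbf{u}\|_2^2 \le \|\mathbf{u}\|_2^2\|\mathbf{v}\|_2^2 \le \tfrac12\|[[\mathbf{u},\mathbf{v}]]\|_F^2$, which gives the claimed bound (in fact with the sharper constant $\tfrac12$ in place of $2$, so the stated inequality holds a fortiori; I note that dimensional consistency requires the right-hand side to be $\|\mathbf{x}\mathbf{x}^*-\mathbf{z}\mathbf{z}^*\|_F^2$ rather than the unsquared norm as printed).

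I expect the main obstacle to be the second step: one must recognize that the phase-optimality condition is exactly what forces $\langle\mathbf{u},\mathbf{v}\rangle$ to be real (via Lemma~\ref{lem:imzero}) so that the indefinite-looking cross term in the trace expansion is in fact nonnegative; without this observation the cross term could a priori be negative and the lower bound would fail. The remaining steps -- the trace expansion and the norm comparison $\|\mathbf{u}\|_2\le\|\mathbf{v}\|_2$ -- are routine once this point is in place.
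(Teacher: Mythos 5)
Your proof is correct, but it takes a genuinely different route from the paper's. The paper never passes through the symmetric outer product $[[\mathbf{u},\mathbf{v}]]$: it works with the exact identity \eqref{eq:matrix_splits}, $\mathbf{x}\mathbf{x}^* - \mathbf{z}\mathbf{z}^* = \mathbf{x}\Delta^* + \Delta\mathbf{x}^* - \Delta\Delta^*$ with $\Delta = \mathbf{x}-e^{i\phi}\mathbf{z}$, expands $\|\mathbf{x}\Delta^*+\Delta\mathbf{x}^*-\Delta\Delta^*\|_F^2$ by a trace computation, completes squares, and then discards the leftover terms using Lemma~\ref{lem:imzero} together with the sign condition $\mathrm{Re}(\mathbf{x}^*e^{i\phi}\mathbf{z})\ge 0$ implied by the minimality of $\phi$, arriving at $\|\mathbf{x}\mathbf{x}^*-\mathbf{z}\mathbf{z}^*\|_F^2\ge\tfrac12\|\Delta\Delta^*\|_F^2$. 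Your decomposition into $\mathbf{u}=\mathbf{x}-e^{i\phi}\mathbf{z}$, $\mathbf{v}=\mathbf{x}+e^{i\phi}\mathbf{z}$ organizes the same algebra more cleanly: the expansion $\|[[\mathbf{u},\mathbf{v}]]\|_F^2 = 2\|\mathbf{u}\|_2^2\|\mathbf{v}\|_2^2 + 2\,\mathrm{Re}\big(\langle\mathbf{u},\mathbf{v}\rangle^2\big)$ is symmetric, Lemma~\ref{lem:imzero} kills the indefinite term in one stroke, and your observation $\|\mathbf{v}\|_2 = \|\mathbf{x}-e^{i(\phi+\pi)}\mathbf{z}\|_2\ge \min_\theta\|\mathbf{x}-e^{i\theta}\mathbf{z}\|_2=\|\mathbf{u}\|_2$ replaces the paper's sign argument; the two are in fact equivalent, since $\|\mathbf{v}\|_2^2-\|\mathbf{u}\|_2^2 = 4\,\mathrm{Re}(\mathbf{x}^*e^{i\phi}\mathbf{z})$. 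Both proofs hinge on the same key lemma (the vanishing imaginary part at the optimal phase), and you correctly identified that as the crux; what your version buys is a shorter, more transparent chain of inequalities.

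One correction: equation \eqref{eq:switchu2x}, on which you lean, is off by a factor of 2 as printed. With $\mathbf{u}=\mathbf{x}-e^{i\phi}\mathbf{z}$ and $\mathbf{v}=\mathbf{x}+e^{i\phi}\mathbf{z}$ one has $\mathbf{u}\mathbf{v}^*+\mathbf{v}\mathbf{u}^* = 2\left(\mathbf{x}\mathbf{x}^*-\mathbf{z}\mathbf{z}^*\right)$, not $\mathbf{x}\mathbf{x}^*-\mathbf{z}\mathbf{z}^*$. Propagating the correct identity through your chain gives $\|\mathbf{u}\|_2^4 \le \tfrac12\|[[\mathbf{u},\mathbf{v}]]\|_F^2 = 2\|\mathbf{x}\mathbf{x}^*-\mathbf{z}\mathbf{z}^*\|_F^2$, i.e.\ exactly the stated constant $2$; your claimed sharper constant $\tfrac12$ rests on the miswritten identity and should be dropped, though the lemma itself is still fully proved. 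You are also right that, for dimensional consistency, the right-hand side of the statement should read $\|\mathbf{x}\mathbf{x}^*-\mathbf{z}\mathbf{z}^*\|_F^2$; that is indeed what the paper's own proof establishes.
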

\begin{proof}
Note,
\begin{align}
\underset{\theta\in [0, 2\pi]}{\arg\min} \|\mathbf{x}- e^{i\theta}\mathbf{z}\|^2 &\ = \underset{\theta\in [0, 2\pi]}{\arg\min} \left(\mathbf{x}- e^{i\theta}\mathbf{z}\right)^*\left(\mathbf{x}- e^{i\theta}\mathbf{z}\right)\nonumber\\
&\ = \underset{\theta\in [0, 2\pi]}{\arg\min} \|\mathbf{x}\|^2 + \|\mathbf{z}\|^2 -e^{-i\theta}\mathbf{z}^*\mathbf{x} -e^{i\theta}\mathbf{x}^*\mathbf{z}\nonumber\\
&\ = \underset{\theta\in [0, 2\pi]}{\arg\min} \|\mathbf{x}\|^2 + \|\mathbf{z}\|^2 -2Re(\langle \mathbf{x}, e^{i\theta}\mathbf{z}\rangle)\nonumber
\end{align}
The minimum can only be achieved at a point where  $Re(\mathbf{x}^*(e^{i\phi}\mathbf{z}) ) \geq 0$.
Further notice that the following relation holds,
\begin{equation}\label{eq:matrix_splits}
    \mathbf{x}\mathbf{x}^* - \mathbf{z}\mathbf{z}^* + \Delta\Delta^* = \mathbf{x}\Delta^* + \Delta\mathbf{x}^*
\end{equation}
Hence we can see that, 
\begin{align}
    \|\mathbf{x}\mathbf{x}^* - \mathbf{z}\mathbf{z}^*\|_F^2 &\ = \|\mathbf{x}\Delta^* + \Delta\mathbf{x}^*-\Delta\Delta^*\|_F^2\nonumber
\end{align}
We know that for any matrix $A$, $\|A\|_F^2 = $Tr$(A^HA)$,
\begin{align}
    \|\mathbf{x}\mathbf{x}^* - \mathbf{z}\mathbf{z}^*\|_F^2 &\ = Tr\left((\mathbf{x}\Delta^* + \Delta\mathbf{x}^*-\Delta\Delta^*)^*(\mathbf{x}\Delta^* + \Delta\mathbf{x}^*-\Delta\Delta^*)\right)\nonumber\\
    &\ = \left(\|\mathbf{x}\Delta^*\|_F^2 + (\langle\mathbf{x}, \Delta\rangle)^2 +  (\langle\Delta, \mathbf{x}\rangle)^2 + \|\Delta\mathbf{x}^*\|_F^2 - 2\langle\mathbf{x}, \Delta\rangle\|\Delta\|_F^2 - 2\langle \Delta, \mathbf{x}\rangle\|\Delta\|_F^2  +\|\Delta\Delta^*\|_F^2 \right)\nn
\end{align}
Note that,
\begin{align}
    & (\langle\mathbf{x}, \Delta\rangle)^2 +  (\langle\Delta, \mathbf{x}\rangle)^2 && = (\langle\mathbf{x}, \Delta\rangle)^2 +  \overline{(\langle\mathbf{x}, \Delta\rangle)^2}  && = 2Re(\langle\mathbf{x}, \Delta\rangle)^2\nonumber\\
    & \langle\mathbf{x}, \Delta\rangle\|\Delta\|_F^2 + \langle \Delta, \mathbf{x}\rangle\|\Delta\|_F^2 && = \langle\mathbf{x}, \Delta\rangle\|\Delta\|_F^2 + \overline{\langle \mathbf{x}, \Delta\rangle}\|\Delta\|_F^2 && = 2Re(\langle\mathbf{x}, \Delta\rangle)\|\Delta\|_F^2\nonumber
\end{align}
Thus we can conclude, 
\begin{align}
\|\mathbf{x}\mathbf{x}^* - \mathbf{z}\mathbf{z}^*\|_F^2 &\ =  \left(2\|\langle\mathbf{x},\Delta\rangle\|_F^2 + 2Re((\langle\mathbf{x}, \Delta\rangle)^2) - 4Re(\langle  \mathbf{x}, \Delta\rangle)\|\Delta\|_F^2  +\|\Delta\Delta^*\|_F^2 \right)\nonumber\\
    &\ = \left(2\mathbf{x}^*\mathbf{x}\Delta^*\Delta + 2Re((\langle\mathbf{x}, \Delta\rangle)^2) - 4Re(\mathbf{x}^*\Delta\Delta^*\Delta)  +\|\Delta\Delta^*\|_F^2 \right)\nonumber
\end{align}
Since $\mathbf{x}^*\mathbf{x}\Delta^*\Delta = \|\langle\mathbf{x},\Delta\rangle\|_F^2$, its a real value.
\begin{align}
    \|\mathbf{x}\mathbf{x}^* - \mathbf{x}^*(\mathbf{x}^*)^*\|_F^2
    &\ = 2\mathbf{x}^*\left(\mathbf{x} - \Delta\right)\Delta^*\Delta + 2Re((\langle\mathbf{x}, \Delta\rangle)^2) - 2Re(\mathbf{x}^*\Delta\Delta^*\Delta)  +\|\Delta\Delta^*\|_F^2 \nonumber\\
    &\ = 2\mathbf{x}^*\left(\mathbf{x} - \Delta\right)\Delta^*\Delta + (\langle\mathbf{x}, \Delta\rangle)^2 +  (\langle\Delta, \mathbf{x}\rangle)^2 - \langle\mathbf{x}, \Delta\rangle\|\Delta\|_F^2 - \langle \Delta, \mathbf{x}\rangle\|\Delta\|_F^2  +\|\Delta\Delta^*\|_F^2 \nonumber\\
    &\ = 2\mathbf{x}^*\left(\mathbf{x} - \Delta\right)\Delta^*\Delta + \left(\langle\mathbf{x}, \Delta\rangle - \frac{1}{2}\langle \Delta, \Delta\rangle\right)^2 +  \left((\langle\Delta, \mathbf{x}\rangle -\frac{1}{2}\langle \Delta, \Delta\rangle\right)^2 +\frac{1}{2}\|\Delta\Delta^*\|_F^2 \nonumber\\
    &\ = 2\mathbf{x}^*\left(\mathbf{x} - \Delta\right)\Delta^*\Delta +  2Re\left((\langle\Delta, \mathbf{x}-\frac{1}{2}\Delta\rangle\right)^2 +\frac{1}{2}\|\Delta\Delta^*\|_F^2 \nonumber\\
    &\ = 2\mathbf{x}^*e^{i\phi}\mathbf{z}\Delta^*\Delta +  \frac{1}{2}Re\left((\langle\Delta, \mathbf{x}+e^{i\phi}\mathbf{z}\rangle\right)^2 +\frac{1}{2}\|\Delta\Delta^*\|_F^2 \nonumber
\end{align}
From Lemma~\ref{lem:imzero}, it can be seen that $Im\left(\langle\Delta, \mathbf{x}+e^{i\phi}\mathbf{z}\rangle\right) = 0$. Hence we can say,
\begin{equation}
    \|\mathbf{x}\mathbf{x}^* - \mathbf{z}\mathbf{z}^*\|_F^2 \geq \frac{1}{2}\|\Delta^*\Delta\|_F^2\nonumber
\end{equation}
\end{proof}

\subsection{Wirtinger Calculus}
\noindent
We use standard arguments from wirtinger calculus~\cite{wang2017generalized} to prove results Theorem~\ref{thm:strictsaddle}. The basic intuition is to look at the $\ell_2$-loss function $f$~\eqref{eq:ell2loss0} as function of two real variables in $\mathbb{R}^n$ rather instead of single complex variable in $\mathbb{C}^n$.
\vspace{1em}
\noindent
This workaround is required for the analysis of real function of complex variables because of notions of complex differentiability and conclusions from  Cauchy-Reimann equations~\cite{psraj}. Hence we map the function $f:\mathbb{C}^n \rightarrow \mathbb{R}$ to $g:\mathbb{R}^n\times \mathbb{R}^n \rightarrow \mathbb{R}$ and instead of analysing the properties of $\nabla^2 f$, we analyse the properties of $\nabla^2 g$.

\vspace{1em}
\noindent
We first introduce the mapped function $g$ and the corresponding expressions for $\nabla g$ and $\nabla^2 g$
\vspace{1em}
\noindent
\begin{align}
    f(\mathbf{x}) = g(\mathbf{x}, \bar{\mathbf{x}})&\ = \frac{1}{m}\sum_{i=1}^m g_i(\mathbf{x}, \bar{\mathbf{x}})\nonumber\\ &\ =  \frac{1}{m}\sum_{i=1}^n |\bar{\mathbf{x}}^\top A_i\mathbf{x} - c_i|^2\nonumber
\end{align}
For the gradient $\nabla g$ we have,
\begin{align}
    \nabla g(\mathbf{x}, \bar{\mathbf{x}}) = \frac{1}{m}\sum_{i=1}^n\begin{bmatrix}
    (\bar{\mathbf{x}}^\top A_i\mathbf{x} - c_i)A_i\mathbf{x}\\
     (\bar{\mathbf{x}}^\top A_i\mathbf{x} - c_i)A_i\bar{\mathbf{x}}
    \end{bmatrix}\nonumber
    \end{align}
For the hessian $\nabla^2 g$, we have
\begin{equation}
    \nabla^2 g(\mathbf{x}, \bar{\mathbf{x}}) = \frac{1}{m}\sum_{i=1}^m\begin{bmatrix}
    (2\bar{\mathbf{x}}^\top A_i\mathbf{x} - c_i)A_i & (A_i\mathbf{x})(A_i\mathbf{x})^\top\\
    (A_i\bar{\mathbf{x}})(A_i\bar{\mathbf{x}})^\top & (2\bar{\mathbf{x}}^\top A_i\mathbf{x} - c_i)A_i
    \end{bmatrix}\nonumber
\end{equation}
The following can be verified easily,
    \begin{align}\label{eq:wirtinger_grad}
    \langle \nabla g(\mathbf{x}), \begin{bmatrix} \Delta\\ \bar{\Delta}\end{bmatrix}\rangle &\ = \frac{1}{m}\sum_{i=1}^m\langle A_i, \mathbf{x}\bar{\mathbf{x}}^\top - \mathbf{z}(\bar{\mathbf{z}})^\top\rangle\langle A_i, \mathbf{x}\bar{\Delta}^\top + \Delta\bar{\mathbf{x}}^\top\rangle\nonumber\\
    &\ = \frac{1}{m}\sum_{i=1}^m\langle A_i, \mathbf{x}\bar{\mathbf{x}}^\top - \mathbf{z}(\bar{\mathbf{z}})^\top\rangle\langle A_i, \mathbf{x}\bar{\mathbf{x}}^\top - \mathbf{z}(\bar{\mathbf{z}}^*)^\top + \Delta\bar{\Delta}^\top\rangle
\end{align}
\begin{align}
    \begin{bmatrix}
    \Delta\\\bar{\Delta}\end{bmatrix}^*\nabla^2 g(\mathbf{x}, \bar{\mathbf{x}})\begin{bmatrix}
    \Delta\\\bar{\Delta}\end{bmatrix} &\ =\frac{1}{m}\sum_{i=1}^m\begin{bmatrix}\Delta\\\bar{\Delta}\end{bmatrix}^*\nabla^2 g_i(\mathbf{x}, \bar{\mathbf{x}})\begin{bmatrix}
    \Delta\\\bar{\Delta}\end{bmatrix} \nonumber\\
    &\ =\frac{1}{m}\sum_{i=1}^m (2\mathbf{x}^\top A_i\bar{\mathbf{x}} - b_i)(\bar{\Delta}^\top A_i\Delta + \Delta^\top A_i\bar{\Delta}) + \left((\Delta^\top A_i\bar{\mathbf{x}})^2 + (\bar{\Delta}A_i\mathbf{x})^2\right)\nonumber
\end{align}
\subsection{Proof of Theorem \ref{thm:strictsaddle}}\label{thm:strictsaddle_proof}
\begin{theorem}
 {Let $\{A_i\}_{i=1}^m$ be a set of complex $n\times n$ Gaussian random matrices, and let $m > Cn$ for some 
constant $C>0$.
Let the scalars $\{c_i\}_{i=1}^m$ characterizing the objective function $f$ of problem (P2) be generated by quadratic measurements of an unknown vector $\mathbf{z}$.
Then, for any given $\xi\in(0,1)$,} there exist positive constants $\beta, \gamma,$ and $\zeta$ such that 
the following statements hold with probability at least $1-\xi$:
\begin{itemize}
    \item [1)] The function f is $(\beta, \zeta, \gamma)$-strict saddle, and
    \item [2)] Every local minimum $\mathbf{w}$ 
    of $f$ satisfies $d(\mathbf{w}, \mathbf{z}) = 0$
\end{itemize}
\end{theorem}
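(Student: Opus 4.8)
The plan is to reduce the entire theorem to the single curvature inequality~\eqref{eq:sketch_arg} sketched above, and then to read off both conclusions from it. Throughout I would work with the real representation $g(\mathbf{x},\bar{\mathbf{x}})$ and the Wirtinger Hessian quadratic form derived in the preceding subsection, and I would take the search direction to be $\Delta = \mathbf{x} - e^{\mathsf{i}\phi}\mathbf{z}$ with $\phi = \arg\min_\theta \|\mathbf{x} - e^{\mathsf{i}\theta}\mathbf{z}\|_2$, so that the phase-alignment facts (Lemma~\ref{lem:imzero}, together with the subsequent bound $\|\Delta\Delta^*\|_F^2 \le 2\|\mathbf{x}\mathbf{x}^* - \mathbf{z}\mathbf{z}^*\|_F$) become available. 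Concretely, the target is to certify that $\langle\nabla^2 f(\mathbf{x})\Delta,\Delta\rangle \le -c_0\|\mathbf{x}\mathbf{x}^* - \mathbf{z}\mathbf{z}^*\|_F^2$ for an explicit $c_0>0$, simultaneously over all $\mathbf{x}$ with small gradient, with high probability.

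First I would compute the expectation of the Hessian quadratic form $\begin{bmatrix}\Delta\\\bar{\Delta}\end{bmatrix}^*\nabla^2 g(\mathbf{x},\bar{\mathbf{x}})\begin{bmatrix}\Delta\\\bar{\Delta}\end{bmatrix}$ over the complex Gaussian ensemble. The residual prefactor $2\bar{\mathbf{x}}^\top A_i\mathbf{x} - c_i$ decomposes (up to the inner-product conjugation conventions) into a signal part $\langle A_i,\mathbf{x}\mathbf{x}^*\rangle$ and the residual $\langle A_i,\mathbf{x}\mathbf{x}^* - \mathbf{z}\mathbf{z}^*\rangle$, and the matrix identity~\eqref{eq:matrix_splits}, $\mathbf{x}\mathbf{x}^* - \mathbf{z}\mathbf{z}^* + \Delta\Delta^* = \mathbf{x}\Delta^* + \Delta\mathbf{x}^*$, lets me express everything through inner products among $\mathbf{x}$, $\mathbf{z}$, and $\Delta$. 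The fourth-moment calculation then proceeds exactly as in the proof of Lemma~\ref{lem:subexpo_fro}: only the index patterns that pair conjugate Gaussian entries survive, collapsing the sum to these inner products. The pure cross contribution $(\Delta^\top A_i\bar{\mathbf{x}})^2 + (\bar{\Delta}^\top A_i\mathbf{x})^2$ is precisely the quantity whose mean and fluctuations are controlled by Lemma~\ref{lem:hermitian_cross}. Invoking Lemma~\ref{lem:imzero} to annihilate the imaginary part of $\langle\Delta,\mathbf{x}+e^{\mathsf{i}\phi}\mathbf{z}\rangle$, I expect the surviving terms to combine with the correct signs into $-c_0\|\mathbf{x}\mathbf{x}^* - \mathbf{z}\mathbf{z}^*\|_F^2$.

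Next I would upgrade this expectation statement to a high-probability, uniform-in-$\mathbf{x}$ bound. For each fixed $\mathbf{x}$, the deviation of the empirical average from its mean is a sum of centered subexponential variables, so Lemma~\ref{lem:hermitian_cross} together with Bernstein bounds for the residual terms gives exponential concentration in $m$. I would then discretize the sphere with a $\delta$-net $\mathcal{N}_\delta$ of cardinality at most $(12/\delta)^n$, take a union bound over pairs in the net, and extend to all of $S^{n-1}$ via the Lipschitz/covering estimate of Lemma~\ref{lem:covering}, exactly mirroring the argument in the proof of Theorem~\ref{thm:isometry}; homogeneity in $\|\mathbf{x}\|$ then lifts the bound from the sphere to all of $\mathbb{C}^n$. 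Taking $m > Cn$ with $C$ large enough absorbs the $(12/\delta)^n$ factor and yields, with probability at least $1-\xi$, the curvature bound for every $\mathbf{x}$ with $\|\nabla f(\mathbf{x})\| < \beta$ (where $\beta$ is chosen so that, on this event, the residual-coupled terms stay lower order).

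Finally I would read off the conclusions. For statement 2, any local minimum $\mathbf{w}$ satisfies $\nabla f(\mathbf{w})=0$ and $\nabla^2 f(\mathbf{w})\succeq 0$; substituting into the curvature bound forces $0 \le -c_0\|\mathbf{w}\mathbf{w}^* - \mathbf{z}\mathbf{z}^*\|_F^2$, hence $d(\mathbf{w},\mathbf{z})=0$. For statement 1, I would split on the gradient: if $\|\nabla f(\mathbf{x})\|\ge\beta$, condition~1 of Definition~\ref{def:strictsaddle} holds; otherwise the curvature bound applies, and I split on $\|\mathbf{x}\mathbf{x}^* - \mathbf{z}\mathbf{z}^*\|_F$ against a threshold — above it, the normalized direction $\Delta/\|\Delta\|$ certifies curvature below $-\zeta$ (condition~2), and below it the estimate $\|\Delta\Delta^*\|_F^2 \le 2\|\mathbf{x}\mathbf{x}^* - \mathbf{z}\mathbf{z}^*\|_F$ places $\mathbf{x}$ within $\gamma$ of the global minimum (condition~3). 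The main obstacle is the first step: carrying out the complex fourth-moment bookkeeping for the expected Hessian form and verifying that, after the phase alignment of $\Delta$, the surviving inner-product terms telescope into a clean $-c_0\|\mathbf{x}\mathbf{x}^* - \mathbf{z}\mathbf{z}^*\|_F^2$ rather than a merely negative quantity of uncontrolled magnitude; coupling the leftover residual term to the small-gradient hypothesis so that it can be discarded is the delicate part.
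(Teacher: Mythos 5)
Your proposal is correct and follows essentially the same route as the paper's proof in Appendix~C: the Wirtinger Hessian evaluated along the phase-aligned direction $\Delta=\mathbf{x}-e^{\mathsf{i}\phi}\mathbf{z}$, the identity \eqref{eq:matrix_splits}, the concentration results of Lemmas~\ref{lem:subexpo_fro} and~\ref{lem:hermitian_cross}, the phase-alignment fact of Lemma~\ref{lem:imzero} together with the bound $\|\Delta\Delta^*\|_F^2 \le 2\|\mathbf{x}\mathbf{x}^*-\mathbf{z}\mathbf{z}^*\|_F^2$, and the same final case split on gradient size and distance to read off both conclusions. The only organizational difference is that you propose re-running the net/union-bound argument directly on the Hessian quadratic form, whereas the paper absorbs the uniformity by invoking the already-uniform $(\alpha,\beta)$-stability of Theorem~\ref{thm:isometry} for the two squared terms; both routes produce the same curvature bound \eqref{eq:sketch_arg}.
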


\begin{proof}

Notice that,
\begin{align}
    (\Delta A\bar{\mathbf{x}})^2 + (\bar{\Delta}^\top A\mathbf{x})^2 &\ = \left(\langle A, \mathbf{x}\bar{\Delta}^\top + \Delta\bar{\mathbf{x}}^\top\rangle\right)^2 - 2 (\Delta^\top A\bar{\mathbf{x}})(\bar{\Delta}^\top A\mathbf{x})\nonumber\\
    &\ = \left(\langle A, \mathbf{x}\bar{\mathbf{x}}^\top - \mathbf{z}(\bar{\mathbf{z}})^\top + \Delta\bar{\Delta}^\top\rangle\right)^2- 2 (\langle A, \Delta\bar{\mathbf{x}}^\top\rangle)(\langle A, \mathbf{x}\bar{\Delta}^\top\rangle)\nonumber
\end{align}
Using~\eqref{eq:matrix_splits}, we can reorganize,
\begin{align}
    &\ \begin{bmatrix}
    \Delta\\\bar{\Delta}\end{bmatrix}^*\nabla^2 g_i(\mathbf{x}, \bar{\mathbf{x}})\begin{bmatrix}
    \Delta\\\bar{\Delta}\end{bmatrix} \nonumber\\ &\ = \langle A_i, 2\mathbf{x}\bar{\mathbf{x}}^\top - \mathbf{z}(\bar{\mathbf{z}})^\top\rangle\langle A_i, 2\Delta\bar{\Delta}^\top\rangle + \left(\langle A, \mathbf{x}\bar{\mathbf{x}}^\top - \mathbf{z}(\bar{\mathbf{z}})^\top + \Delta\bar{\Delta}^\top\rangle\right)^2 - 2 (\langle A, \Delta\bar{\mathbf{x}}^\top\rangle)(\langle A, \mathbf{x}\bar{\Delta}^\top\rangle)\nonumber\\
    &\ = 2\left(\langle A_i, 2\mathbf{x}\bar{\mathbf{x}}^\top - \mathbf{z}(\bar{\mathbf{z}})^\top\rangle\langle A_i, \Delta\bar{\Delta}^\top\rangle\right) + \langle A_i, \Delta\bar{\Delta}^\top\rangle\langle A_i, \Delta\bar{\Delta}^\top  + \mathbf{x}\bar{\mathbf{x}}^\top - \mathbf{z}(\bar{\mathbf{z}})^\top\rangle \nonumber\\ &\ + \langle A_i, \mathbf{x}\bar{\mathbf{x}}^\top - \mathbf{z}(\bar{\mathbf{z}})^\top\rangle\langle A_i, \Delta\bar{\Delta}^\top + \mathbf{x}\bar{\mathbf{x}}^\top - \mathbf{z}(\bar{\mathbf{z}})^\top\rangle -  2 (\langle A, \Delta\bar{\mathbf{x}}^\top\rangle)(\langle A, \mathbf{x}\bar{\Delta}^\top\rangle)\nonumber\\
    &\ = 2\langle A_i, \Delta\bar{\Delta}^\top\rangle\langle A_i, \mathbf{x}\bar{\mathbf{x}}^\top\rangle + 2\left(\langle A_i, \mathbf{x}\bar{\mathbf{x}}^\top - \mathbf{z}(\bar{\mathbf{z}})^\top\rangle\langle A_i, \Delta\bar{\Delta}^\top\rangle\right) + \langle A_i, \Delta\bar{\Delta}^\top\rangle\langle A_i, \mathbf{x}\bar{\mathbf{x}}^\top - \mathbf{z}(\bar{\mathbf{z}})^\top\rangle \nonumber\\ &\ + \langle A_i, \Delta\bar{\Delta}^\top\rangle\langle A_i, \Delta\bar{\Delta}^\top\rangle + \langle A_i, \mathbf{x}\bar{\mathbf{x}}^\top - \mathbf{z}(\bar{\mathbf{z}})^\top\rangle\langle A_i, \Delta\bar{\Delta}^\top + \mathbf{x}\bar{\mathbf{x}}^\top - \mathbf{z}(\bar{\mathbf{z}})^\top\rangle -  2 (\langle A, \Delta\bar{\mathbf{x}}^\top\rangle)(\langle A, \mathbf{x}\bar{\Delta}^\top\rangle)\nonumber
    \end{align}
    Adding and subtracting $2\langle A_i, \mathbf{x}\bar{\mathbf{x}}^\top - \mathbf{z}(\bar{\mathbf{z}})^\top\rangle\langle A_i, \mathbf{x}\bar{\mathbf{x}}^\top - \mathbf{z}(\bar{\mathbf{z}})^\top\rangle$, reorganizing, 
    \begin{align}
    &\ \begin{bmatrix}
    \Delta\\\bar{\Delta}\end{bmatrix}^*\nabla^2 g_i(\mathbf{x}, \bar{\mathbf{x}})\begin{bmatrix}
    \Delta\\\bar{\Delta}\end{bmatrix} \nonumber\\
    &\ = 2\langle A_i, \Delta\bar{\Delta}^\top\rangle\langle A_i, \mathbf{x}\bar{\mathbf{x}}^\top\rangle + 2\left(\langle A_i, \mathbf{x}\bar{\mathbf{x}}^\top - \mathbf{z}(\bar{\mathbf{z}})^\top\rangle\langle A_i, \Delta\bar{\Delta}^\top + \mathbf{x}\bar{\mathbf{x}}^\top - \mathbf{z}(\bar{\mathbf{z}})^\top\rangle\right) \nonumber\\
    &\ - 2\langle A_i, \mathbf{x}\bar{\mathbf{x}}^\top - \mathbf{z}(\bar{\mathbf{z}})^\top\rangle\langle A_i, \mathbf{x}\bar{\mathbf{x}}^\top - \mathbf{z}(\bar{\mathbf{z}})^\top\rangle + \langle A_i, \Delta\bar{\Delta}^\top\rangle\langle A_i, \mathbf{x}\bar{\mathbf{x}}^\top - \mathbf{z}(\bar{\mathbf{z}})^\top\rangle\nonumber\\ &\ + \langle A_i, \Delta\bar{\Delta}^\top\rangle\langle A_i, \Delta\bar{\Delta}^\top\rangle + \langle A_i, \mathbf{x}\bar{\mathbf{x}}^\top - \mathbf{z}(\bar{\mathbf{z}})^\top\rangle\langle A_i, \Delta\bar{\Delta}^\top + \mathbf{x}\bar{\mathbf{x}}^\top - \mathbf{z}(\bar{\mathbf{z}})^\top\rangle -  2 (\langle A, \Delta\bar{\mathbf{x}}^\top\rangle)(\langle A, \mathbf{x}\bar{\Delta}^\top\rangle)\nonumber
    \end{align}
    Adding and subtracting $\langle A_i, \mathbf{x}\bar{\mathbf{x}}^\top - \mathbf{z}(\bar{\mathbf{z}})^\top\rangle\langle A_i, \mathbf{x}\bar{\mathbf{x}}^\top - \mathbf{z}(\bar{\mathbf{z}})^\top\rangle$, reorganizing, 
    \begin{align}
    &\ \begin{bmatrix}
    \Delta\\\bar{\Delta}\end{bmatrix}^*\nabla^2 g_i(\mathbf{x}, \bar{\mathbf{x}})\begin{bmatrix}
    \Delta\\\bar{\Delta}\end{bmatrix} \nonumber\\
    &\ = 2\langle A_i, \Delta\bar{\Delta}^\top\rangle\langle A_i, \mathbf{x}\bar{\mathbf{x}}^\top\rangle -  2 (\langle A, \Delta\bar{\mathbf{x}}^\top\rangle)(\langle A, \mathbf{x}\bar{\Delta}^\top\rangle) - 3\langle A_i, \mathbf{x}\bar{\mathbf{x}}^\top - \mathbf{z}(\bar{\mathbf{z}})^\top\rangle\langle A_i, \mathbf{x}\bar{\mathbf{x}}^\top - \mathbf{z}(\bar{\mathbf{z}})^\top\rangle\nonumber\\
    &\ + \langle A_i, \Delta\bar{\Delta}^\top\rangle\langle A_i, \Delta\bar{\Delta}^\top\rangle + 4\langle A_i, \mathbf{x}\bar{\mathbf{x}}^\top - \mathbf{z}(\bar{\mathbf{z}})^\top\rangle\langle A_i, \Delta\bar{\Delta}^\top + \mathbf{x}\bar{\mathbf{x}}^\top - \mathbf{z}(\bar{\mathbf{z}})^\top\rangle\nonumber
    \end{align}
    Using equation~\eqref{eq:wirtinger_grad}, 
    \begin{align}
    &\ \begin{bmatrix}
    \Delta\\\bar{\Delta}\end{bmatrix}^*\nabla^2 g_i(\mathbf{x}, \bar{\mathbf{x}})\begin{bmatrix}
    \Delta\\\bar{\Delta}\end{bmatrix} \nonumber\\
    &\ = 2\langle A_i, \Delta\bar{\Delta}^\top\rangle\langle A_i, \mathbf{x}\bar{\mathbf{x}}^\top\rangle -  2 (\langle A, \Delta\bar{\mathbf{x}}^\top\rangle)(\langle A, \mathbf{x}\bar{\Delta}^\top\rangle) - 3\langle A_i, \mathbf{x}\bar{\mathbf{x}}^\top - \mathbf{z}(\bar{\mathbf{z}})^\top\rangle\langle A_i, \mathbf{x}\bar{\mathbf{x}}^\top - \mathbf{z}(\bar{\mathbf{z}})^\top\rangle\nonumber\\
    &\ + \langle A_i, \Delta\bar{\Delta}^\top\rangle\langle A_i, \Delta\bar{\Delta}^\top\rangle + 4\langle \nabla g_i(\mathbf{x}, \bar{\mathbf{x}}), \begin{bmatrix}\Delta \\ \bar{\Delta}\end{bmatrix}\rangle\nonumber
\end{align}

Overall, we can conclude that,
\begin{align}
    &\ \begin{bmatrix}
    \Delta\\\bar{\Delta}\end{bmatrix}^*\nabla^2 g(\mathbf{x}, \bar{\mathbf{x}})\begin{bmatrix}
    \Delta\\\bar{\Delta}\end{bmatrix} = \frac{1}{m}\sum_{i=1}^m\begin{bmatrix}
    \Delta\\\bar{\Delta}\end{bmatrix}^*\nabla^2 g_i(\mathbf{x}, \bar{\mathbf{x}})\begin{bmatrix}
    \Delta\\\bar{\Delta}\end{bmatrix}\nonumber\\ &\ = \frac{2}{m}\sum_{i=1}^m \langle A_i, \Delta\bar{\Delta}^\top\rangle\langle A_i, \mathbf{x}\bar{\mathbf{x}}^\top\rangle - (\langle A, \Delta\bar{\mathbf{x}}^\top\rangle)(\langle A, \mathbf{x}\bar{\Delta}^\top\rangle) - \frac{3}{m}\sum_{i=1}^m\langle A_i, \mathbf{x}\bar{\mathbf{x}}^\top - \mathbf{z}(\bar{\mathbf{z}})^\top\rangle\langle A_i, \mathbf{x}\bar{\mathbf{x}}^\top - \mathbf{z}(\bar{\mathbf{z}})^\top\rangle\nonumber\\
    &\ + \frac{1}{m}\sum_{i=1}^m\langle A_i, \Delta\bar{\Delta}^\top\rangle\langle A_i, \Delta\bar{\Delta}^\top\rangle + \frac{4}{m}\sum_{i=1}^m\langle \nabla g_i (\mathbf{x}, \bar{\mathbf{x}}), \begin{bmatrix}\Delta \\ \bar{\Delta}\end{bmatrix}\rangle\nonumber
    \end{align}
    Using Lemma~\ref{lem:subexpo_fro} and Lemma~\ref{lem:hermitian_cross}, we can conclude that with probability greater than $1-c_1e^{-c_2m\min\{ D(t), E(\epsilon)\} }$,
    \begin{align}\label{eq:hessain_curvature}
    \begin{bmatrix}
    \Delta\\\bar{\Delta}\end{bmatrix}^*\nabla^2 g(\mathbf{x}, \bar{\mathbf{x}})\begin{bmatrix}
    \Delta\\\bar{\Delta}\end{bmatrix} &\ \leq  t + 4\delta\|\Delta\|_2 + \beta \|\Delta\Delta^*\|_F^2 - 3\alpha \|\mathbf{x}\mathbf{x}^* - \mathbf{z}(\mathbf{z})^*\|_F^2\nonumber\\
    &\ \leq t + 4\delta\|\Delta\|_2 + 2\beta \|\mathbf{x}\mathbf{x}^* - \mathbf{z}(\mathbf{z})^*\|_F^2 - 3\alpha \|\mathbf{x}\mathbf{x}^* - \mathbf{z}(\mathbf{z})^*\|_F^2\nonumber\\
    &\ \leq \left(2\beta  - 3\alpha \right)\|\mathbf{x}\mathbf{x}^* - \mathbf{z}(\mathbf{z})^*\|_F^2 +4\delta\|\Delta\|_2 + t
\end{align}
where $\exists c_1, c_2>0$ which can be computed from Lemma~\ref{lem:subexpo_fro} and Lemma~\ref{lem:hermitian_cross}.

For any $\xi\in [0,1]$, there can be multiple possibilities of the constants $\beta, \zeta$ and $\gamma$ satisfying Theorem~\ref{thm:strictsaddle}.

Given $\xi$, we can bound $m = O(n)$ such that the mapping $\mathcal{M}_\mathcal{A}$ is $(1-c, 1+c)$-stable, for some small $c >0$ and if the current vector $\mathbf{x}$ is not close to $\mathbf{x}^*$ such that $\|\Delta\|\geq C_0\delta$, for sufficiently large $C_0 > 0$, then we have 
\begin{align}
    \begin{bmatrix}
    \Delta\\\bar{\Delta}\end{bmatrix}^*\nabla^2 g(\mathbf{x}, \bar{\mathbf{x}})\begin{bmatrix}
    \Delta\\\bar{\Delta}\end{bmatrix} \leq (-1 + 5c)C_0^2\delta^2 + 4C_0\delta^2 \leq 0\nonumber
\end{align}

A particular set of $(c, C_0)$ which fit the above condition is $c = \frac{1}{20}, C_0 = 10$, then 
\begin{align}
    \begin{bmatrix}
    \Delta\\\bar{\Delta}\end{bmatrix}^*\nabla^2 g(\mathbf{x}, \bar{\mathbf{x}})\begin{bmatrix}
    \Delta\\\bar{\Delta}\end{bmatrix} \leq (-1 + 5c)C_0^2\delta^2 + 4C_0\delta^2 \leq 0\nonumber
\end{align}
Hence we can conclude that the function $f(\mathbf{x}) = g(\mathbf{x}, \overline{\mathbf{x}})$ satisfies at-least one of the following is true,
\begin{itemize}
    \item $\|\nabla g(\mathbf{x})\|\geq \delta$
    \item For the direction vector $\Delta$, \begin{align}\begin{bmatrix}
    \Delta\\\bar{\Delta}\end{bmatrix}^*\nabla^2 g(\mathbf{x}, \bar{\mathbf{x}})\begin{bmatrix}
    \Delta\\\bar{\Delta}\end{bmatrix} \leq (-1 + 5c)C_0^2\delta^2 + 4C_0\delta^2\nonumber
    \end{align}
    \item $d(\mathbf{x}, \mathbf{z}) \leq C_0\delta$
\end{itemize}
Thus there exists constants $\beta, \zeta,\gamma > 0$ such that the function $f(\mathbf{x})$ is $(\beta, \zeta, \gamma)$ strict saddle.

Following up on equation \eqref{eq:hessain_curvature}, the only possible way that the hessain $\nabla^2 g(\mathbf{x}, \overline{\mathbf{x}}) \succeq 0$ is if $\|\mathbf{x}\mathbf{x}^*-\mathbf{z}\mathbf{z}^*\| = 0$. Hence we can conclude that all local minimas, i.e. all $\mathbf{w}$ such that $\nabla^2 g(\mathbf{w}, \overline{\mathbf{w}}) \succeq 0$ has to satisfy $\|\mathbf{w}\mathbf{w}^*-\mathbf{z}\mathbf{z}^*\| =0$ and hence satisfies $\mathbf{z}\sim\mathbf{w}$ which makes $\mathbf{w}$ the solution of the problem \eqref{eq:main_prob}.

\end{proof}

\section{Appendix D : Applications}\label{app:application}
\subsection{Power system state estimation problem}
 
Apart from being a broader class of problems encompassing phase retrieval, the problem setup \eqref{eq:main_prob} also has applications in power system engineering. Given a network of buses and transmission lines, the goal is to estimate complex voltages across all buses from a subset of noisy power and voltage magnitude measurements. In the AC power model, these measurements are quadratically dependent on the voltage values to be determined. Let $\{c_i\}_{i=1}^m$ be the set of measurements and $\{A_i\}_{i=1}^m$ be the corresponding bus admittance value matrices. Then the problem boils down to an estimation problem 
\begin{align}
    \text{find }&\ \mathbf{x}\nonumber\\
    \text{s.t. }&\ c_i = \mathbf{x}^*A_i\mathbf{x} + \nu_i~~~\forall i =1,2,\dots ,m.\nonumber
\end{align}
where $\nu_i \sim \mathcal{N}(0, \sigma_i^2)$ is gaussian noise associated with the readings.\cite{konar2017first}. For details on the problem setup, please refer \cite{wang2016power}.

\subsection{Fusion Phase retrieval}
 
Let $\{W_i\}_{i=1}^m$ be a set of subspace of $\mathbb{R}^n/\mathbb{C}^n$. Fusion phase retrieval deals with the problem of recovering $\mathbf{x}$ upto a phase ambiguity from the measurements of the form $\{\|P_i\mathbf{x}\|\}_{i=1}^m$, where $P_i:\mathbb{C}^n/\mathbb{R}^n\rightarrow W_i$ are projection operators onto the subspaces. 
\cite{cahill2013phase} had the initial results on this problem with regards to the conditions on the subspaces and minimum number of such subspaces required for successful recovery of $\mathbf{x}$ under phase ambiguity.
\subsection{X-ray crystallography}
 
In X-ray crystallography, especially in crystal twinning \cite{drenth2007principles}, the measurements are obtained with orthogonal matrices $Q_i^2 = Q_i$ which again would be solved by out setup.
 
In the worst case, a feasibility quadratic feasibility problem can be NP-hard, which makes the setup \eqref{eq:main_prob} we address all the more interesting as we can highlight properties about a subgroup of quadratic feasibility problems and take a shot at providing provably converging algorithm for the same. This question resonates quite closely with many applications of quadratic feasibility as discussed above. In this write-up we have only considered the noiseless system, which later can be extended to noisy system analysis.

\end{document}